\documentclass[12pt,a4paper]{article}
\usepackage{amsmath,amssymb,amsthm} 
\newtheorem{lemma}{Lemma}

\newtheorem{prop}{Proposition}

\theoremstyle{remark}
\newtheorem{remark}{Remark}
\newcommand{\beq}{\begin{equation}}
\newcommand{\eeq}{\end{equation}}
\newcommand{\beqnn}{\begin{equation*}}
\newcommand{\eeqnn}{\end{equation*}}
\newcommand{\beqnarray}{\begin{eqnarray}}
\newcommand{\eeqnarray}{\end{eqnarray}}
\newcommand{\rd}{\partial}

\newcommand{\diag}{\operatorname{diag}}
\newcommand{\tp}[1]{\,{\vphantom{#1}}^\mathrm{t}\!\,#1}
\newcommand{\Res}{\operatorname*{Res}}
\newcommand{\CC}{\mathbb{C}}
\newcommand{\PP}{\mathbb{P}}

\newcommand{\ZZ}{\mathbb{Z}}
\newcommand{\bszero}{\boldsymbol{0}}
\newcommand{\bsc}{\boldsymbol{c}}
\newcommand{\bsr}{\boldsymbol{r}}
\newcommand{\bss}{\boldsymbol{s}}
\newcommand{\bst}{\boldsymbol{t}}
\newcommand{\bsu}{\boldsymbol{u}}
\newcommand{\bsx}{\boldsymbol{x}}
\newcommand{\bsy}{\boldsymbol{y}}
\newcommand{\bsrho}{\boldsymbol{\rho}}
\newcommand{\bssigma}{\boldsymbol{\sigma}}
\newcommand{\calA}{\mathcal{A}}
\newcommand{\calB}{\mathcal{B}}
\newcommand{\calC}{\mathcal{C}}
\newcommand{\calD}{\mathcal{D}}
\newcommand{\calE}{\mathcal{E}}
\newcommand{\calI}{\mathcal{I}}
\newcommand{\calM}{\mathcal{M}}
\newcommand{\calO}{\mathcal{O}}
\newcommand{\calP}{\mathcal{P}}
\newcommand{\calQ}{\mathcal{Q}}
\newcommand{\calR}{\mathcal{R}}
\newcommand{\calS}{\mathcal{S}}
\newcommand{\calW}{\mathcal{W}}
\newcommand{\GL}{\mathrm{GL}}
\newcommand{\Gr}{\mathrm{Gr}}
\newcommand{\Fr}{\mathrm{Fr}}

\begin{document}

\title{Direct linearization, Cauchy matrix and Sato Grassmannian}
\author{Kanehisa Takasaki
\thanks{E-mail: takasaki.kanehisa.65a@st.kyoto-u.ac.jp}\\
{\normalsize Osaka Central Advanced Mathematical Institute}\\ 
{\normalsize Osaka Metropolitan University}\\
{\normalsize 3-3-138 Sugimoto, Sumiyoshi-ku Osaka 558-8585, Japan}}
\date{}
\maketitle 

\begin{abstract}
Fu and Nijhoff's direct linearization scheme for 
the KP hierarchy and its reductions employs 
a system of evolution equations of an infinite 
matrix $U$ with quadratic nonlinearity.  
Part of the matrix elements of $U$ can be 
identified with affine coordinates $w_{ij}$ 
of the top cell of the Sato Grassmannian.  
The evolution equations of these matrix elements 
are identical to the evolution equations of $w_{ij}$ 
representing the KP hierarchy in geometric terms.  
This geometric interpretation can be extended to 
other matrix elements of $U$ by introducing 
negative flows to the evolution equations of $U$.  
The extended system turns out to be substantially 
equivalent to the two-component KP hierarchy. 
The Cauchy matrix approach to the KP hierarchy 
can be explained in this geometric perspective.  
Multi-component generalizations of Fu and Nijhoff's 
nonlinear system are related to the AKNS and ASDYM 
hierarchies.  Multi-component Sato Grassmannians 
show up therein as the relevant geometric structure. 
\end{abstract}

\begin{flushleft}
2010 Mathematics Subject Classification: 
14M15, 
37K10,  
70S15 
\\
Key words: KP hierarchy, AKNS hierarchy, ASDYM hierarchy, 
Cauchy matrix, direct linearization, Sato Grassmannian
\end{flushleft}

\newpage

\section{Introduction}

Direct linearization was proposed  
by Fokas and Ablowitz \cite{FA81,FA83} 
as a method for solving nonlinear integrable systems 
by singular integral equations  
in the spectral variable.  The idea of 
direct linearization was further developed 
by the Dutch group \cite{NLQCJ82,NQC83,NQLC83,QNCL84} 
and applied to a broad class of equations 
including discrete integrable systems. 
This method enables one to solve the system 
in question without employing 
conventional techniques such as the Lax equations, 
the inverse scattering transformations and 
the Riemann-Hilbert problem.  
This characteristic is particularly suited 
to discrete integrable systems.  
The Dutch group also introduced an infinite 
matrix structure into their method. 
This led to the next stage of progress. 

Fu and Nijhoff \cite{FN17,FN18,Fu18,FuThesis} 
fully reformulated the direct linearization scheme 
in terms of infinite matrices and applied it 
to the discrete KP equation, the KP hierarchy 
and their relatives. A central role is played 
by an infinite ($\ZZ\times\ZZ$) matrix $U$ 
with the rows and columns indexed by all integers.  
This matrix satisfies a system of evolution equations 
with quadratic nonlinearity.  This system can be 
mapped to a linear system of evolution equations 
for another infinite matrix $C$ by a transformation 
of matrices. Conceptually, this transformation 
amounts to the direct/inverse-scattering 
transformations.  Although this transformation 
itself is somewhat formal and needs to be justified 
in each case of applications, it clarifies 
integrable nature of the evolution equations.  
Fu and Nijhoff considered reductions and 
special solutions of the KP-type systems 
within this framework.  The special solutions, 
referred to as `soliton solutions', coincide 
with the special solutions obtained 
in the Cauchy matrix approach 
\cite{NAH09,ZZ13,FZ13,XZZ14}.  
Thus the Cauchy matrix approach may be 
thought of as part of direct linearization. 

The quadratic nonlinearity of the evolution 
equations of $U$ is reminiscent of 
a feature of the KP hierarchy. 
It is well known that the KP hierarchy 
can be identified with a dynamical system 
on an infinite dimensional Grassmann manifold 
\cite{SS82,SW85,Sato89}.  Sato's construction, 
commonly called \textit{the Sato Grassmannian}, 
is based on an infinite dimensional vector space $V$ 
equipped with Lefschetz's linear topology.  
A point of the Grassmannian is represented 
by a \textit{semi-infinite} 
(to use a symbolic shorthand, $\infty/2$) tuple $\xi$ 
of linearly independent vectors in $V$.  
$\xi$ is also treated as an $\infty\times(\infty/2)$ 
matrix in which the members of the $\infty/2$-tuple 
are arrayed as column vectors. We can extract 
from the suitably normalized $\xi$-matrix 
an infinite number of affine coordinates 
on the top cell of the Sato Grassmannian. 
These coordinates are labelled by two non-negative 
integers as $w_{ij}$, $i,j \ge 0$.   
The KP hierarchy can be expressed as a system of 
evolution equations of these affine coordinates 
\cite{Takasaki89,Takasaki89RMP}.  
Remarkably, this is a nonlinear system with 
quadratic nonlinearity.  We can see easily 
that $w_{ij}$'s correspond to a quarter 
of all matrix elements $u_{ij}$ of $U$, 
and that the evolution equations of $w_{ij}$'s 
are exactly the same as the evolution equations 
of the corresponding matrix elements of $U$. 
In other words, the KP hierarchy is literally 
a subsystem of Fu and Nijhoff's nonlinear system. 

Starting with this observation, we shall show 
some other features of Fu and Nijhoff's system 
of evolution equations of $U$ (let us call it 
the $U$-system provisionally) and its generalizations:

\begin{itemize}
\item[1.]
Fu and Nijhoff introduce a vector-valued wave function 
$\bsu(z)$.  Its definition contains 
an infinite matrix $g$ that plays the role 
of dressing operator.  We can thereby reformulate 
the Grassmannian interpretation of the KP hierarchy 
embedded in the $U$-system.  This provides 
a geometric interpretation to a half 
of all matrix elements of $U$. 
\item[2.]
We can extend the $U$-system with the so called 
\textit{negative flows}.  The extended $U$-system 
contains another copy of the KP hierarchy 
as a subsystem.  This subsystem is associated with 
another vector-valued wave function $\bar{\bsu}(z)$ 
and an associated dressing matrix $\bar{g}$. 
They lead to a geometric interpretation 
for another half of all matrix elements of $U$. 
\item[3.]
We construct a larger matrix $\eta^{(2)}$ 
from the two matrices $g,\bar{g}$.  This matrix 
represent a point of the two-component 
Sato Grassmannian $\Gr^{(2)}$.  
The extended $U$-system turns out to be 
substantially equivalent to the two-component 
KP hierarchy as a dynamical system on $\Gr^{(2)}$.  
This shows a geometric interpretation 
for all matrix elements of $U$ 
as affine coordinates of an open cell of $\Gr^{(2)}$.  
\end{itemize}

We further extend these observations 
to the AKNS and ASDYM (anti-self-dual Yang-Mills) 
hierarchies.  These systems can be captured 
by the matrix $U$ in which the scalar matrix elements 
$u_{ij}$ are replaced by $r \times r$ matrices.  
This is a multi-component generalization of the $U$-system.  
An underlying integrable structure of this system 
is the $r$-component KP hierarchy.  
As well known in the theory of the multi-component 
KP hierarchy \cite{SS82,DJKM81,KvdL03}, 
the two-component KP hierarchy can be reduced 
to the AKNS hierarchy by imposing a constraint.  
This amounts to cutting out a submanifold 
of the (multi-component) Sato Grassmannian.  
The same submanifold is also used in 
the Grassmannian description of the ASDYM equation 
\cite{Takasaki83,Takasaki84CMP} 
and the associated hierarchy of higher flows 
\cite{Nakamura88,Takasaki90CMP,ACT93}. 
To derive these systems, we impose a constraint 
to the multi-component $U$-system. 
Such a constrained $U$-system has been studied, 
implicitly or explicitly, by Zhao and Li et. al 
\cite{Zhao18,LQYZ22,LQZ23,LZ25} 
in their researches on the Cauchy matrix approach 
and the direct linearization scheme of the AKNS 
and ASDYM hierarchies.  

A novel feature in the case of the ASDYM hierarchy 
is that the evolution equations of $U$ 
take a drastically different form.  
This feature is inherited from the structure 
of the Lax formalism of the ASDYM hierarchy 
\cite{Nakamura88,Takasaki90CMP,ACT93}, 
which is also reflected to the Grassmannian description 
\cite{Takasaki83,Takasaki84CMP}. 
The structure of special solutions constructed 
by the method of Darboux transformations 
\cite{LHHZ25} is also affected by this feature.  
These facts seem to be overlooked in the preceding 
researches \cite{Zhao18,LQYZ22,LQZ23,LZ25} 
on the Cauchy matrix approach and 
the direct linearization scheme. 
The exotic feature of the ASDYM hierarchy 
stems from its relationship with 
twistor theory \cite{MWbook96}. 
We shall examine in detail the difference 
between the AKNS and ASDYM hierarchies 
in the perspective of direct linearization. 

This paper is organized as follows.  
In section 2, we consider Fu and Nijhoff's 
original $U$-system and its relationship 
with the KP hierarchy and the Sato Grassmannian. 
The vector-valued wave function $\bsu(z)$ 
and the dressing matrix $g$ are shown 
to play a fundamental role in the description 
of the hidden Grassmannian structure.  
Special solutions constructed by Fu and Nijhoff 
are also explained in some detail. 
This section is a prototype of 
the subsequent consideration.  
In Section 3, we introduce negative flows 
to Fu and Nijhoff's $U$ system.  
The negative flows are shown to be related 
to another copy of the KP hierarchy and 
the Sato Grassmannian.  The extended 
$U$-system itself turns out to be equivalent 
to the two-component KP hierarchy. 
The special solutions of Fu and Nijhoff 
are also examined in more detail in the presence 
of both positive and negative flows.  
Section 4 is devoted to the AKNS and ASDYM 
hierarchies.  We introduce the notion of 
the constrained multi-component $U$-matrix, 
and show its relationship with various matrix-valued 
wave functions.  The subsequent consideration 
is divided into the cases of the AKNS and ASDYM 
hierarchies.  The commonly known auxiliary 
linear equations are derived from the $U$-system. 
The characteristic structures 
of special solutions are presented in detail.  
Several appendices are added to provide 
an overview of related information scattered 
in the literature.

\section{Direct linearization of KP hierarchy}

\subsection{Evolution equations of $U$-matrix}

Let $\bst = (t_1,t_2,\ldots)$ be the set of 
the independent variables of the KP hierarchy. 
These variables are commonly called 
\textit{time variables}, but the first one 
$t_1$ is identified with the spatial coordinate $x$ 
with which we construct pseudo-differential 
operators for the Lax formalism (see Appendix A). 

Fu and Nijhoff's approach to direct linearization 
of the KP hierarchy \cite{FN17,FN18,FuThesis} 
is based on the evolution equations 
\beq
  \frac{\rd U}{\rd t_k} 
  = \Lambda^k U - U\Lambda^{-k} - U\calO_k U, 
  \quad k = 1,2,\ldots, 
  \label{2:U-tk-eq}
\eeq
of the $\bst$-dependent $\ZZ\times\ZZ$ matrix 
$U = (u_{ij})_{i,j\in\ZZ}$. 
The scalar-valued matrix elements $u_{ij}$ 
are the main dynamical variables 
of Fu and Nijhoff's infinite matrix formalism.  

$\Lambda^k$ and $\calO_k$ in (\ref{2:U-tk-eq}) 
are constant $\ZZ\times\ZZ$-matrices of the form 
\[
  \Lambda^k = \sum_{i\in\ZZ}E_{i,i+k},~~
  \calO_k = \sum_{i=0}^{k-1} E_{i,k-i-1}, 
\]
where $E_{ij}$ denotes the matrix unit in which 
all matrix elements vanish except for 
the $(i,j)$-th element being equal to $1$. 
Whereas $\Lambda^k$'s are the so called 
\textit{shift matrices}, $\calO_k$'s 
are obtained from the matrix 
\[
  \Omega = \sum_{i=0}^\infty E_{i,-i-1} 
\]
as 
\beq
  \calO_k = \Omega\Lambda^k - \Lambda^{-k}\Omega. 
  \label{2:OmLam-rel}
\eeq
$\Omega$ is related to the Cauchy kernel 
\[
  \Omega(z,w) = \frac{1}{z - w} 
\]
in the sense that 
\beq
  \sum_{i,j\in\ZZ}\Omega_{ij}w^iz^j 
  = \sum_{i=0}^\infty\frac{w^i}{z^{i+1}} 
  = \frac{1}{z - w} 
  \quad \text{for $|z| > |w|$}. 
  \label{2:Om-CK}
\eeq

The evolution equations (\ref{2:U-tk-eq}) of $U$ 
are connected with the linear evolution equations 
\beq
  \frac{\rd C}{\rd t_k} = \Lambda^k C - C\Lambda^{-k}, 
  ~~ k = 1,2,\cdots 
  \label{2:C-tk-eq}
\eeq
of another $\ZZ\times\ZZ$ matrix 
$C = (c_{ij})_{i,j\in\ZZ}$ via the algebraic relation 
\beq
  U = (E - U\Omega)C, 
  \label{2:UC-rel}
\eeq
where $E$ is the $\ZZ\times\ZZ$ unit matrix. 
One can solve this relation for $U$ as 
\[
  U = C(E + \Omega C)^{-1}
\]
as far as $E - \Omega C$ is invertible. 
This is a transformation that sends solutions 
of (\ref{2:C-tk-eq}) to those of (\ref{2:U-tk-eq}).  
The inverse transformation 
\[
  C = (E - U\Omega)^{-1}U 
\]
sends solutions of (\ref{2:U-tk-eq}) 
to solution of (\ref{2:C-tk-eq}).  
The quadratic term $U\calO_kU$ of 
(\ref{2:C-tk-eq}) is thus eliminated 
by these transformations, and we are left 
with the \textit{linearized equations}  
(\ref{2:C-tk-eq}).  

These transformations may be thought of 
as an analogue of the direct/inverse scattering 
transformations.  The matrix $C$ amounts 
to the \textit{scattering data} and obeys 
the simple rule of time evolutions 
\[
  C = \rho(\Lambda)C(\bszero)\sigma(\Lambda), 
  ~~ C(\bszero) = C|_{\bst=\bszero}, 
\]
where 
\[
  \rho(z) = \exp\left(\sum_{k=1}^\infty t_kz^k\right),~~
  \sigma(z) = \exp\left(- \sum_{k=1}^\infty t_kz^{-k}\right). 
\]

\begin{remark}
The evolution equations (\ref{2:U-tk-eq}) of $U$ 
are slightly different from the equations 
employed by Fu and Nijhoff \cite{FN17,FN18,FuThesis}. 
In their formulation, the second term 
of the right hand side has an extra sign factor $(-1)^k$.  
This is a superficial difference; the sign factor 
can be eliminated by the simple transformation 
$U \to U\diag((-1)^i)$. This, however, affects 
other equations and the definition of 
$\Omega$ and $\calO_k$.  The necessary modification 
has been done in our definition of 
$\Omega$ and $\calO_k$. With this modification, 
we can avoid some cumbersome sign factors.
\end{remark}

\subsection{First contact with Sato Grassmannian}

The relationship with the Sato Grassmannian \cite{SS82} 
can be readily seen from the evolution equations 
\beq
  \frac{\rd u_{ij}}{\rd t_k} 
  = u_{i+k,j} - u_{i,j+k} - \sum_{l=0}^{k-1}u_{il}u_{k-l-1,j} 
  \label{2:uij-tk-eq}
\eeq
obtained from (\ref{2:U-tk-eq}) 
for the matrix elements of $U$.  
The quadratic terms on the right hand side 
originate in the term $U\calO_kU$ 
of (\ref{2:C-tk-eq}).  Among these equations, 
those in the quarter $i,j \ge 0$ form 
a closed subsystem.  This subsystem turn out 
to be identical to the differential equations 
\beq
  \frac{\rd w_{ij}}{\rd t_k} 
  = w_{i+k,j} - w_{i,j+k} - \sum_{l=0}^{k-1}w_{il}w_{k-l-1,j}, 
  ~~ i,j \ge 0, 
  \label{2:wij-tk-eq}
\eeq
for the affine coordinates $w_{ij}$, $i,j \ge 0$, 
of the top cell of the Sato Grassmannian 
(see Appendix A).  
(\ref{2:wij-tk-eq}) is known to define 
the KP hierarchy as a dynamical system therein 
\cite{Takasaki89,Takasaki89RMP}. 
We are thus led to the identification 
\beq
  u_{ij} = w_{ij}~~ \text{for $i,j \ge 0$}. 
  \label{2:u-w-rel}
\eeq

The affine coordinates $w_{ij}$ are 
building blocks of the following 
$\ZZ\times\ZZ_{<0}$ and $\ZZ_{\ge 0}\times\ZZ$ matrices 
\footnote{$\ZZ_{\ge 0}$ and $\ZZ_{<0}$ 
denote the sets of all non-negative 
and negative integers, respectively.}:  
\[
\begin{aligned}
  \xi &= \left(\begin{array}{c}
         \delta_{ij}\\\hline 
         w_{i,-j-1}
         \end{array}\right)_{i\in\ZZ,j\in\ZZ_{<0}}
  = \begin{pmatrix}
    \ddots&\vdots &\vdots\\
    \cdots& 1 & 0\\
    \cdots& 0 & 1\\
    \cdots& w_{01} & w_{00}\\
    \cdots& w_{11} & w_{10}\\
          &\vdots &\vdots
    \end{pmatrix},\\
  \eta &= \left(\begin{array}{c|c}
         - w_{i,-j-1} & \delta_{ij}
         \end{array}\right)_{i\in\ZZ_{\ge 0},j\in\ZZ}
  = \begin{pmatrix}
    \cdots &-w_{01} &-w_{00} & 1 & 0 & \cdots\\
    \cdots &-w_{11} &-w_{10} & 0 & 1 & \cdots\\
           &\vdots &\vdots &\vdots &\vdots&\ddots
    \end{pmatrix}. 
\end{aligned}
\]
$\xi$ is divided into the upper ($i,j < 0$) 
and lower ($i \ge 0,j < 0$) blocks, 
and $\eta$ into the left ($i \ge 0, j< 0$) 
and right ($i,j \ge 0$) blocks. 
By construction, $\xi$ and $\eta$ satisfy 
the orthogonality condition 
\beq
  \eta\xi = 0
  \label{2:eta-xi-rel}
\eeq
and represent a common point of the Sato Grassmannian 
\[
  \Gr \simeq \Fr(\ZZ,\ZZ_{<0})/\GL(\ZZ_{<0}) 
      \simeq \GL(\ZZ_{\ge 0})\backslash\Fr(\ZZ_{\ge 0},\ZZ) 
\]
realized as a coset space.  $\Fr(I,J)$ denotes 
the set of $I \times J$-matrices of \textit{full rank}, 
and $\GL(I)$ and $\GL(J)$, respectively, 
the general linear group of $I\times I$ 
and $J\times J$ matrices (all in a loose sense
\footnote{These notions are defined rigorously 
in the work of Sato and Sato \cite{SS82}.}).  
In the first expression 
$\Gr \simeq \Fr(\ZZ,\ZZ_{<0})/\GL(\ZZ_{<0})$, 
$\xi$ and $\xi h$, $h \in \GL(\ZZ_{<0})$, 
represent the same point of $\Gr$ which 
is understood to be a linear space 
spanned by the column vectors of $\xi$.  
In the second expression 
$\Gr \simeq \GL(\ZZ_{\ge 0})\backslash\Fr(\ZZ_{\ge 0},\ZZ)$, 
the roles of rows and columns are exchanged. 

The equations (\ref{2:wij-tk-eq}) for $u_{ij}$'s 
can be cast into the two equivalent matrix forms
\beq
  \frac{\rd\xi}{\rd t_k} = \Lambda^k\xi - \xi\calB_k, 
  \label{2:xi-tk-eq}
\eeq
\beq
  \frac{\rd\eta}{\rd t_k} = \calC_k\eta - \eta\Lambda^k, 
  \label{2:eta-tk-eq}
\eeq
where 
\[
\begin{aligned}
  \calB_k &= \left(\begin{array}{cl}
         \delta_{i+k,j} & (i<-k)\\\hline
         w_{i+k,-j-1} & (-k\le i<0)
         \end{array}\right)_{i,j\in\ZZ_{<0}},\\
  \calC_k &= \left(\begin{array}{c|c}
         - w_{i,k-j-1}~~(0\le j<k) & \delta_{i,j-k}~~(j\ge k)
         \end{array}\right)_{i,j\in\ZZ_{\ge 0}}. 
\end{aligned}
\]
In fact, these equations are in a special 
\textit{gauge}. In a general gauge, 
they allow gauge transformations 
\[
\begin{aligned}
  &\xi \to \xi h,~~
  B_k \to h^{-1}B_kh - h^{-1}\frac{\rd h}{\rd t_k},\\
  &\eta \to \tilde{h}\eta,~~
  C_k \to \tilde{h}C_k\tilde{h}^{-1}  
   + \frac{\rd\tilde{h}}{\rd t_k}\tilde{h}^{-1} 
\end{aligned}
\]
with arbitrary $\bst$-dependent elements 
$h = h(\bst)$ of $\GL(\ZZ_{<0})$ and 
$\tilde{h} = \tilde{h}(\bst)$ of $\GL(\ZZ_{\ge 0})$.  
Thus (\ref{2:xi-tk-eq}) and (\ref{2:eta-tk-eq}) 
define a geometrically consistent dynamical system 
on the Sato Grassmannian.  
This interpretation of (\ref{2:xi-tk-eq}) and 
(\ref{2:eta-tk-eq}) lies in the heart 
of the geometric approach to the KP hierarchy 
\cite{SS82,Sato89,Takasaki89RMP}.  

We have thus observed that a quarter of $U$ 
can be identified with the affine coordinates $w_{ij}$ 
of the top cell of the Sato Grassmannian.  
This leads to a natural question: 
\textit{What geometric meaning the other part of 
$U$ and of the differential equations 
(\ref{2:uij-tk-eq}) have?}
To address this question, we now turn to 
another important ingredient of Fu and Nijhoff's 
direct linearization scheme, namely,  
the vector-valued wave function.

\subsection{Vector-valued wave function and 
Sato Grassmannian}

The vector-valued wave function 
$\bsu(z) = (u_i(z))_{i\in\ZZ}$ is defined by $U$ as 
\beq
  \bsu(z) = (E - U\Omega)\bsc(z)\rho(z), 
  ~~\bsc(z) = \left(z^i\right)_{i\in\ZZ}, 
  \label{2:u(z)-def}
\eeq
and satisfies the auxiliary linear equations 
\beq
  \frac{\rd\bsu(z)}{\rd t_k} 
  = (\Lambda^k - U\Omega_k)\bsu(z), ~~ k = 1,2,\ldots. 
  \label{2:u(z)-tk-eq}
\eeq
In terms of the components $u_i(z)$, these equations 
can be expressed as 
\beq
  \frac{\rd u_i(z)}{\rd t_k}
  = u_{i+k}(z) - \sum_{l=0}^{k-1}u_{il}u_{k-l-1}(z), 
  ~~ i \in \ZZ. 
  \label{2:ui(z)-tk-eq}
\eeq
These wave functions originate in the conventional 
formulation of direct linearization 
employing singular integral equations 
\cite{FA81,FA83,NLQCJ82,NQC83,NQLC83,QNCL84}.  
Fu and Nijhoff use the vector-valued 
wave function $\bsu(z)$ very efficiently 
to construct special solutions as well 
\cite{FN17,FN18,FuThesis}.  
We now point out another aspect of $\bsu(z)$, 
which leads to an answer to the foregoing question.  

Let us consider the matrix 
\beq
  g = E- U\Omega 
  \label{2:g-def}
\eeq
that generates $\bsu(z)$ from 
the `plane wave' $\bsc(z)\rho(z)$. 
Divided into the upper-left ($i,j<0$), 
lower-left ($i\ge 0,j<0$), 
upper-right ($i<0,j\ge 0$) and 
lower-right ($i,j\ge 0$) blocks, 
this matrix can be expressed as 
\beq
  g = \left(\begin{array}{c|c}
    \delta_{ij} - u_{i,-j-1} & 0 \\\hline
    - u_{i,-j-1} & \delta_{ij} 
    \end{array}\right). 
  \label{2:g-blocks}
\eeq
The central part of this matrix looks as follows: 
\[
  g = \begin{pmatrix}
      \ddots &\vdots &\vdots &\vdots & \vdots \\
      \cdots &1-u_{-2,1} & -u_{-2,0} & 0 & 0 & \cdots\\
      \cdots &-u_{-1,1} &1-u_{-1,0} & 0 & 0 & \cdots\\
      \cdots &-u_{0,1} & -u_{0,0} & 1 & 0 & \cdots\\
      \cdots &-u_{1,1} & -u_{1,0} & 0 & 1 & \cdots\\
             &\vdots&\vdots &\vdots &\vdots &\ddots
      \end{pmatrix}. 
\]
Note that a half of all matrix elements of $U$, 
i.e. $u_{ij}$ for $\in \ZZ$ and $j \ge 0$, 
are included therein.  Also note that 
the lower half of $g$ is nothing but 
the foregoing matrix $\eta$, namely,   
\[
  \eta = \left(\begin{array}{c|c}
    - u_{i,-j-1} & \delta_{ij} 
    \end{array}\right)_{i\in\ZZ_{\ge 0},j\in\ZZ}. 
\]
Moreover, $g$ turns out to satisfy the following 
differential equations.  

\begin{prop}
\beq
  \frac{\rd g}{\rd t_k} 
  = (\Lambda^k - U\calO_k)g - g\Lambda^k, 
  ~~ k = 1,2,\cdots. 
  \label{2:g-tk-eq}
\eeq
\end{prop}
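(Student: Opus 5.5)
Since $g = E - U\Omega$ and $\Omega$ is constant, the plan is to differentiate directly, substitute the evolution equation (\ref{2:U-tk-eq}), and regroup using the relation (\ref{2:OmLam-rel}). Concretely,
\[
  \frac{\rd g}{\rd t_k} = -\frac{\rd U}{\rd t_k}\Omega
  = -\Lambda^k U\Omega + U\Lambda^{-k}\Omega + U\calO_k U\Omega .
\]
The target right-hand side expands as
\[
  (\Lambda^k - U\calO_k)(E - U\Omega) - (E - U\Omega)\Lambda^k
  = -\Lambda^k U\Omega - U\calO_k + U\calO_k U\Omega + U\Omega\Lambda^k ,
\]
where the terms $\Lambda^k$ and $-\Lambda^k$ cancel.

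Comparing the two expressions, the terms $-\Lambda^kU\Omega$ and $U\calO_kU\Omega$ match identically. It remains to check
\[
  U\Lambda^{-k}\Omega = U\Omega\Lambda^k - U\calO_k ,
\]
which is $U$ times (\ref{2:OmLam-rel}) rearranged as $\Lambda^{-k}\Omega = \Omega\Lambda^k - \calO_k$. This completes the identity.

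The only point needing care, and the place I expect any obstacle, is whether the infinite matrix products are well defined and associative. This matters for $U\Omega$, $U\Lambda^{-k}\Omega$ and the triple products like $U\calO_kU\Omega$. The shift matrices $\Lambda^{\pm k}$ have a single nonzero entry per row and column. $\calO_k$ is finite. $\Omega$ has at most one nonzero entry per row and per column (row $i\ge 0$ maps to column $-i-1$). So every product above involves finite sums in each entry, and associativity holds entrywise: for instance $(U\Omega)\Lambda^k = U(\Omega\Lambda^k)$ since each entry is a single term. Accordingly, I will remark on this finiteness before the computation and then carry out the algebra as above.

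Alternatively, one can check the identity componentwise using the block form (\ref{2:g-blocks}) and (\ref{2:uij-tk-eq}), but the matrix computation is cleaner.
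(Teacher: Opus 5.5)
Your proposal is correct and follows essentially the paper's own proof: differentiate $g=E-U\Omega$, substitute (\ref{2:U-tk-eq}), and use $\Lambda^{-k}\Omega=\Omega\Lambda^k-\calO_k$ from (\ref{2:OmLam-rel}). The paper substitutes $U\Omega=E-g$ into the derivative and regroups until it reaches the right-hand side, whereas you expand both sides and compare terms; that difference is purely cosmetic, and your remark on the finiteness of the infinite matrix products is a small point the paper leaves implicit.
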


\proof
Differentiating $g$ directly 
and using (\ref{2:OmLam-rel}), we have 
\[
\begin{aligned}
  \frac{\rd g}{\rd t_k} 
  &= - \frac{\rd U}{\rd t_k}\Omega \\
  &= - (\Lambda^kU - U\Lambda^{-k} - U\calO_kU)\Omega \\
  &= - \Lambda^k(E - g) + U\Lambda^{-k}\Omega 
     + U\calO_k(E - g)\\
  &= (\Lambda^k - U\calO_k)g - \Lambda^k 
     + U(\Lambda^{-k}\Omega + \calO_k) \\
  &= (\Lambda^k - U\calO_k)g - \Lambda^k + U\Omega\Lambda^k 
     ~~(\text{by (\ref{2:OmLam-rel})})\\
  &= (\Lambda^k - U\calO_k)g - g\Lambda^k.  
\end{aligned}
\]
\qed

The equations (\ref{2:g-tk-eq}) for $g$ are 
very suggestive from several points of view.  
Firstly, the auxiliary linear equations 
(\ref{2:u(z)-tk-eq}) can be readily derived 
from these equations.  In this sense, 
the matrix $g$ plays the role 
of a dressing operator that transforms 
the undressed wave function $\bsc(z)\rho(z)$ 
to the dressed wave function $\bsu(z)$. 
Secondly, these equations resemble 
the equations (\ref{2:eta-tk-eq}) 
satisfied by $\eta$.  This hints at 
a possible geometric interpretation 
of (\ref{2:g-tk-eq}) in the language of the Sato 
Grassmannian.  To this end, let us point out 
the following structural characteristic of 
the coefficient matrix $\Lambda^k - U\calO_k$ 
in (\ref{2:g-tk-eq}). 

\begin{prop}
Let $\Lambda^k - U\calO_k$ be decomposed 
into four blocks like the decomposition 
(\ref{2:g-blocks}) of $g$.  Then the lower-left block 
vanishes and the lower-right block is equal 
to the coefficient matrix $\calC_k$ 
of (\ref{2:eta-tk-eq}), namely, 
\beq
  \Lambda^k - U\calO_k 
  = \left(\begin{array}{c|c}
    * & * \\\hline 
    0 & \calC_k 
    \end{array}\right). 
  \label{2:Lk-UOk-block}
\eeq
\end{prop}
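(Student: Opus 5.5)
This is a direct entrywise computation, restricted to the rows $i \ge 0$. Those rows form the lower-left ($j<0$) and lower-right ($j\ge 0$) blocks. The $(i,j)$-element of $\Lambda^k$ is $\delta_{i+k,j}$. Since $\calO_k = \sum_{l=0}^{k-1}E_{l,k-l-1}$, the $(i,j)$-element of $U\calO_k$ is
\[
  (U\calO_k)_{ij} = \sum_{l=0}^{k-1} u_{il}\,\delta_{k-l-1,j}.
\]
This is nonzero only when $j = k-l-1$ for some $0\le l\le k-1$, that is, only when $0 \le j \le k-1$. In that case it equals $u_{i,k-j-1}$. Hence
\[
  (\Lambda^k - U\calO_k)_{ij} = \delta_{i+k,j} - u_{i,k-j-1}\,[0\le j<k].
\]

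For the lower-left block we take $i\ge 0$ and $j<0$. The entry $\delta_{i+k,j}$ vanishes because $i+k\ge k>0>j$. The second term vanishes because $j<0$ lies outside the range $0\le j<k$. So this block is zero, which is the first claim.

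For the lower-right block we take $i,j\ge 0$ and split into the cases $0\le j<k$ and $j\ge k$. When $0\le j<k$, the shift term $\delta_{i+k,j}$ vanishes, since $i+k\ge k>j$, and the entry is $-u_{i,k-j-1}$. When $j\ge k$, the quadratic term is absent and the entry is $\delta_{i+k,j}=\delta_{i,j-k}$. We then compare with the definition of $\calC_k$, whose columns $0\le j<k$ carry $-w_{i,k-j-1}$ and whose columns $j\ge k$ carry $\delta_{i,j-k}$. Using the identification (\ref{2:u-w-rel}), $u_{i,k-j-1}=w_{i,k-j-1}$, which applies because both indices $i$ and $k-j-1$ are nonnegative. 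This gives the equality (\ref{2:Lk-UOk-block}).

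The only point needing care is the index bookkeeping, namely checking that $k-j-1\ge 0$ so that (\ref{2:u-w-rel}) applies. The upper blocks (the asterisks) are left unspecified and need no computation.
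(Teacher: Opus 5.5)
Your proposal is correct and takes the same approach as the paper, which simply asserts that the block structure follows directly from the definitions of $\Lambda^k$ and $\calO_k$. Your entrywise computation spells out the index bookkeeping the paper leaves implicit: the vanishing of the lower-left block, the two cases $0\le j<k$ and $j\ge k$, and the check that $k-j-1\ge 0$ so that $u_{i,k-j-1}=w_{i,k-j-1}$.
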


\proof 
One can easily confirm this fact in view 
of the definition of $\Lambda^k$ and $\calO_k$. 
\qed 

This block-triangular structure of 
$\Lambda^k - U\calO_k$ and the block 
structure (\ref{2:g-blocks}) of $g$ 
imply that the differential equations (\ref{2:g-tk-eq}) 
of $g$ is an extension of the differential 
equations (\ref{2:eta-tk-eq}) of $\eta$.  
In other words, (\ref{2:eta-tk-eq}) is a subsystem 
of (\ref{2:g-tk-eq}).  This is a place 
where another coset space description 
of the Sato Grassmannian 
\[
  \Gr \simeq \calP\backslash\GL(\ZZ) 
\]
comes into the game.  The stabilizer $\calP$ 
is the maximal parabolic subgroup of $\GL(\infty)$ 
(in a loose sense) that consists of invertible matrices 
of the form 
\[
    p = \left(\begin{array}{c|c}
        * & * \\\hline
        0 & * 
        \end{array}\right), 
\]
namely, all matrix elements of the lower-left block 
vanish.  The block structure (\ref{2:Lk-UOk-block}) 
of $\Lambda^k - U\calO_k$ implies that this matrix 
takes values in the Lie algebra of $\calP$.  
Thus the matrix $g$ represents a point of $\Gr$. 
This is the same point of $\Gr$ represented by $\eta$ 
in the previous coset space realization 
$\Gr \simeq \GL(\ZZ_{\ge 0})\backslash\Fr(\ZZ_{\ge 0},\ZZ)$. 

The equations (\ref{2:g-tk-eq}) for $g$ 
thus define the same dynamical system on $\Gr$ 
as defined by the equations (\ref{2:xi-tk-eq}) 
and (\ref{2:eta-tk-eq}) for $\xi,\eta$.  
To be more precise, this is a picture in a special gauge.  
The picture in a general gauge is achieved 
by replacing the special coefficient matrix 
$\Lambda^k - U\calO_k$ by a general matrix $\Gamma_k$ 
of the block-triangular form 
\[
  \Gamma_k 
   = \left(\begin{array}{c|c}
     * & * \\\hline
     0 & * 
     \end{array}\right). 
\]
The outcome are the equations 
\beq
  \frac{\rd g}{\rd t_k} = \Gamma_k g - g\Lambda^k, 
  ~~ k = 1,2,\ldots, 
  \label{2:g-tk-eq2}
\eeq
that allow gauge transformations 
\[
  g \to pg,~~
  \Gamma_k \to p\Gamma_kp^{-1} + \frac{\rd p}{\rd t_k}p^{-1}
\]
by a $\bst$-dependent element 
$p = p(\bst)$ of $\calP$.  

We have thus found that a half of $U$ 
is connected with the geometry of 
the Sato Grassmannian.  Naturally, 
the following new question arises: 
\textit{What geometric meaning does  
the remaining part of $U$ have?} 
An answer can be obtained by extending 
the present setting with the negative flows. 
We address this issue in the next section.

\subsection{Special solutions and Cauchy matrix}

Fu and Nijhoff \cite{FN17,FN18,FuThesis} 
obtained a special solution of (\ref{2:U-tk-eq}) 
by setting $C$ in a special form
\footnote{Actually, they used an equivalent 
expression in terms of integral equations.}. 
This solution, called 
the \textit{soliton solution} therein, 
depends on several constants 
$\kappa_i,\lambda_j,a_{ij}$ 
($1 \le i \le N$, $1 \le j \le M$).  
$\kappa_i$'s and $\lambda_j$'s are nonzero 
and assumed to be mutually distinct, 
namely, $\kappa_i \not= \kappa_j$ 
and $\lambda_i \not= \lambda_j$  
if $i \not= j$, and $\kappa_i \not= \lambda_j$ 
for all $i,j$.  
The matrix elements of $U$ take 
the very suggestive form 
\beq
  u_{ij} 
  = \tp{\bsr}K^i(E_N + A\calM)^{-1}AL^j\bss 
  = \tp{\bsr}K^iA(E_M + \calM A)^{-1}L^j\bss, 
  \label{2:CM-uij}
\eeq
where $E_N$ and $E_M$ denotes the unit matrix 
of sizes $N\times N$ and $M\times M$. 
$A$ and $\calM$ are $N\times M$ 
and $M\times N$ matrices of the form 
\[
  A = (a_{ij})_{1\le i\le N,1\le j\le M},~~
  \calM = \left(\frac{\sigma(\lambda_i)\rho(\kappa_j)}
          {\kappa_j - \lambda_i}\right)_{1\le i\le M,1\le j\le N},  
\]
$K$ and $L$ are the $N\times N$ and $M\times M$ 
diagonal matrices 
\[
  K = \diag(\kappa_1,\ldots,\kappa_N),~~
  L = \diag(\lambda_1,\ldots,\lambda_M), 
\]
and $\bsr$ and $\bss$ are the $N$ and $M$ dimensional 
vectors 
\[
  \bsr = (\rho(\kappa_i))_{i=1}^N,~~
  \bss = (\sigma(\lambda_i))_{i=1}^M.  
\]

This special solution can be derived 
from the Cauchy matrix approach 
\cite{NAH09,ZZ13,FZ13,XZZ14,FS22} as well.  
The foregoing matrix $\calM$ may be thought of 
as the \textit{Cauchy matrix} satisfying 
the \textit{Sylvester equation} 
\[
  \calM K - L \calM = \bss\tp{\bsr}. 
\]
$\bsr$ and $\bss$ satisfy the differential equations 
\[
  \frac{\rd\bsr}{\rd t_k} = K^k\bsr,~~
  \frac{\rd\bss}{\rd t_k} = - L^k\bss  
\]
of plane waves.  One can verify 
by direct calculations that $u_{ij}$'s 
defined by (\ref{2:CM-uij}) satisfy 
the differential equations (\ref{2:uij-tk-eq}). 
We shall review those calculations 
in the next section in an extended setting. 
Thus $u_{ij}$'s can be identified 
with the $S$-potentials $S^{(ij)}$ 
in the Cauchy matrix approach.  

Fu and Nijhoff \cite{FN17,FN18,FuThesis} obtained 
this special solution by solving the functional equation 
\beq
  u_i(z) 
  = z^i\rho(z) 
    - \sum_{l=1}^N\sum_{m=1}^Mu_i(\kappa_l)a_{lm}
      \frac{\sigma(\lambda_m)\rho(z)}{z - \lambda_m} 
  \label{2:CM-ui(z)-eq}
\eeq
for $u_i(z)$.  This equation is a degenerate case 
of the integral equation corresponding 
to the matrix equation (\ref{2:UC-rel}) for $U$.  
By substituting $z = \kappa_j$, 
(\ref{2:CM-ui(z)-eq}) yields the linear equations 
\[
  u_i(\kappa_j) 
  = \kappa_j^i\rho(\kappa_j) 
    - \sum_{l=1}^N\sum_{m=1}^Mu_i(\kappa_l)a_{lm}
      \frac{\sigma(\lambda_m)\rho(\kappa_j)}{\kappa_j - \lambda_m},
  ~~ j = 1,\cdots,N, 
\]
for $u_i(\kappa_j)$'s.  These equations 
can be cast into the matrix form 
\[
  (u_i(\kappa_j))_{j=1}^N (E_N + A\calM) = \tp{\bsr}K^i  
\]
and solved as 
\beq
  (u_i(\kappa_j))_{j=1}^N = \tp{\bsr}K^i(E_N + A\calM)^{-1}. 
  \label{2:CM-ui(pj)}
\eeq
The second part of the right hand side 
of (\ref{2:CM-ui(z)-eq}) thereby becomes 
\[
\begin{aligned}
  \sum_{l=1}^N\sum_{m=1}^Mu_i(\kappa_l)a_{lm}
     \frac{\sigma(\lambda_m)\rho(z)}{z - \lambda_m}
  &= (u_i(\kappa_j)_{j=1}^NA\rho(z)(zE_M - L)^{-1}\bss \\
  &= \tp{\bsr}K^i(E_N + A\calM)^{-1}A\rho(z)(zE_M - L)^{-1}\bss.  
\end{aligned}
\]
Plugging this into (\ref{2:CM-ui(z)-eq}), 
one obtains the explicit form 
\beq
  u_i(z) 
  = \left(z^i - \tp{\bsr}K^i(E_N + A\calM)^{-1}A 
          (zE_M - L)^{-1}\bss\right)\rho(z) 
  \label{2:CM-ui(z)}
\eeq
of the solution of (\ref{2:CM-ui(z)-eq}).  
This implies, in particular, that 
the \textit{amplitude part} of $u_i(z)$ 
in front of $\rho(z)$ is a rational function of $z$, 
though this fact itself is an immediate consequence 
of the functional equation (\ref{2:CM-ui(z)-eq}). 
One can extract $u_{ij}$'s for $i \in \ZZ, j \ge 0$ 
by expanding this rational function 
in negative powers of $z$ according to 
the the general formula 
\[
  u_i(z) 
  = \left(z^i - \sum_{j=0}^\infty u_{ij}z^{-j-1}\right)\rho(z).  
\]
This formula is a direct consequence 
of the definition (\ref{2:u(z)-def}) of $u_i(z)$. 
One can thus derive the formula (\ref{2:CM-uij}) 
for $i \in \ZZ, j \ge 0$.  The other half 
of $u_{ij}$'s are related to another set 
of wave functions that we shall introduce 
in the next section.  

A remarkable consequence of these results 
is the following characterization 
of $u_i(z)$'s as Baker-Akhiezer functions 
on a common singular algebraic curve 
\cite{Manin78,TD79,Nakayashiki24}. 

\begin{prop}
The wave functions $u_i(z)$ is the product 
of a matrix of rational functions of $z$ 
and $\rho(z)$, namely, 
\[
  u_i(z) 
  = \left(z^i + \sum_{j=1}^M\frac{R_{ij}}{z - \lambda_j} 
    \right)\rho(z), 
\]
and fulfils the algebraic conditions 
\beq
  \Res_{z=\lambda_m}u_i(z) + \sum_{l=1}^Nu_i(\kappa_l)a_{lm} = 0, 
  ~~m = 1,\cdots,M. 
  \label{2:CM-ui(z)-pq}
\eeq
\end{prop}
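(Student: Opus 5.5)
The plan is to read everything off the functional equation (\ref{2:CM-ui(z)-eq}) and its explicit solution (\ref{2:CM-ui(z)}), rather than going back to the infinite matrices $U$, $C$ or $g$.

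\emph{Step 1: the rational form.} I will start from (\ref{2:CM-ui(z)}). Since $L$ is diagonal, $(zE_M - L)^{-1}\bss$ is the vector with entries $\sigma(\lambda_j)/(z-\lambda_j)$, $j=1,\ldots,M$. Setting
\[
  R_{ij} = -\bigl(\tp{\bsr}K^i(E_N + A\calM)^{-1}A\bigr)_j\,\sigma(\lambda_j),
\]
the amplitude in front of $\rho(z)$ in (\ref{2:CM-ui(z)}) becomes $z^i + \sum_{j=1}^M R_{ij}/(z-\lambda_j)$. This is the first claim. The coefficients $R_{ij}$ depend on $\bst$ but not on $z$, and the $\lambda_j$ are mutually distinct, so each $\lambda_j$ is a simple pole.

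\emph{Step 2: the residue condition.} I will compute $\Res_{z=\lambda_m}u_i(z)$ directly from (\ref{2:CM-ui(z)-eq}), treating its right-hand side term by term. The term $z^i\rho(z)$ is holomorphic at $\lambda_m$: $\rho$ is entire, and $\lambda_m \neq 0$ covers the case $i<0$. In the double sum, only the terms with index $m$ are singular at $z=\lambda_m$, because $\lambda_{m'}\neq\lambda_m$ for $m'\neq m$. The values $u_i(\kappa_l)$ are $z$-independent constants (they are given by (\ref{2:CM-ui(pj)})), so they factor out of the residue. The residue is therefore $-\sum_{l=1}^N u_i(\kappa_l)a_{lm}$ multiplied by the residue of the scalar kernel $\sigma(\lambda_m)\rho(z)/(z-\lambda_m)$ at $z=\lambda_m$. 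Moving the sum to the left-hand side gives (\ref{2:CM-ui(z)-pq}). As a consistency check, (\ref{2:CM-ui(pj)}) identifies $\tp{\bsr}K^i(E_N+A\calM)^{-1}$ with the row vector $(u_i(\kappa_l))_{l=1}^N$, so the $R_{im}$ of Step 1 are precisely $-\sum_l u_i(\kappa_l)a_{lm}$ times $\sigma(\lambda_m)$. The two computations therefore agree.

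\emph{Main obstacle.} The delicate point is the normalization of the residue against the plane-wave factors $\sigma(\lambda_m)$ and $\rho(\lambda_m)$. The kernel's own residue at $\lambda_m$ is $\sigma(\lambda_m)\rho(\lambda_m)$, so the bare residue of $u_i(z)$ equals $-\sigma(\lambda_m)\rho(\lambda_m)\sum_l u_i(\kappa_l)a_{lm}$. This is (\ref{2:CM-ui(z)-pq}) literally only if that factor equals $1$. Carrying Step 2 through honestly, I do not see how to remove this factor without a convention that the argument above does not supply. I would need to fix, from the paper's Baker--Akhiezer setup, the normalization under which the pairing of the pole at $\lambda_m$ with the points $\kappa_l$ is meant in (\ref{2:CM-ui(z)-pq}), before I could claim the condition exactly as written. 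The relation the computation actually establishes carries the factor $\sigma(\lambda_m)\rho(\lambda_m)$ on the residue term.

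The structural content is nonetheless independent of this normalization. Each $u_i(z)$ lies in the space of functions of the form $(z^i + \text{simple poles at the }\lambda_m)\rho(z)$, with the residues tied linearly to the values at the $\kappa_l$ through the fixed matrix $A$. This is the sense in which all the $u_i(z)$ are Baker--Akhiezer functions on one common singular curve, obtained by gluing the points $\lambda_m$ and $\kappa_l$ as prescribed by $A$.
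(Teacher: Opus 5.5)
Your Step 1, and the residue computation in Step 2, are essentially the paper's argument. The paper takes the residue of the explicit formula (\ref{2:CM-ui(z)}) and then uses (\ref{2:CM-ui(pj)}) to replace $\tp{\bsr}K^i(E_N+A\calM)^{-1}$ by $(u_i(\kappa_l))_{l=1}^N$; reading the residue directly off (\ref{2:CM-ui(z)-eq}) is equivalent. However, you stop at the one step that needs an argument, so the proposal does not prove (\ref{2:CM-ui(z)-pq}).

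The missing fact is $\rho(\lambda_m)\sigma(\lambda_m)=1$, which is exactly what the paper invokes. It is not an external convention. For the soliton solution the plane waves satisfy $\partial\bss/\partial t_k=-L^k\bss$, so $\sigma(\lambda_m)=\exp(-\sum_k t_k\lambda_m^k)=\rho(\lambda_m)^{-1}$. In the extended setting with negative flows the paper writes $\sigma(z)=\rho(z)^{-1}$ explicitly.

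This is forced by requiring (\ref{2:CM-uij}) to solve (\ref{2:uij-tk-eq}). Already for $N=M=1$ one has $u_{ij}=\kappa^i\lambda^jF$ with $F=a\phi(\kappa-\lambda)/(\kappa-\lambda+a\phi)$ and $\phi=\rho(\kappa)\sigma(\lambda)$. Then (\ref{2:uij-tk-eq}) reduces to $\partial_{t_k}\phi=(\kappa^k-\lambda^k)\phi$, i.e.\ $\partial_{t_k}\sigma(\lambda)=-\lambda^k\sigma(\lambda)$.

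The formula $\sigma(z)=\exp(-\sum_k t_kz^{-k})$ stated with (\ref{2:C-tk-eq}) is the symbol of $\sigma(\Lambda)$ acting on $C$ from the right. Right multiplication by $\Lambda^{-k}$ sends the row $(\lambda^j)_{j\in\ZZ}$ to $\lambda^k(\lambda^j)_{j\in\ZZ}$, so that formula, evaluated literally at $z=\lambda_m$, gives the wrong plane wave. Once you insert $\rho(\lambda_m)\sigma(\lambda_m)=1$, your Step 2 yields (\ref{2:CM-ui(z)-pq}) exactly as stated.

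Your fallback claim that ``the structural content is independent of this normalization'' is also wrong. If $\rho(\lambda_m)\sigma(\lambda_m)$ were not $1$, it would in general depend on $\bst$. The residues at $\lambda_m$ would then be tied to the values at the $\kappa_l$ by a time-dependent matrix, not by the fixed matrix $A$. The Baker--Akhiezer interpretation needs these linear conditions to be the same for all $\bst$, so that every $u_i(z)$ at every time lives on one fixed singular curve. The derivation of (\ref{2:Psi-tk-eq}) from that interpretation needs the same thing. The identity $\rho\sigma=1$ is precisely what secures this.
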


\proof
Calculating the residue of (\ref{2:CM-ui(z)}) 
at $z = \lambda_m$, we have 
\[
\begin{aligned}
  \Res_{z=\lambda_m}u_i(z) 
  &= - \tp{\bsr}K^i(E_N + A\calM)^{-1}A\rho(\lambda_m)E_{mm}\bss \\
  &= - (u_i(\kappa_j))_{j=1}^NA\rho(\lambda_m)E_{mm}\bss \\
  &= - \sum_{l=1}^Nu_i(\kappa_l)a_{lm}. 
\end{aligned}
\]
Note that we have used (\ref{2:CM-ui(pj)}) 
and the fact that $\rho(\lambda_m)\sigma(\lambda_m) = 1$.  
\qed

The $0$-th wave function $u_0(z)$ can be 
identified with the fundamental wave function 
\[
  \Psi(z) =\left(1 + \sum_{n=1}^\infty w_nz^{-n}\right)\rho(z),~~
  w_n = - u_{0,n-1}, 
\]
of the KP hierarchy (see Appendix A). 
The algebraic conditions (\ref{2:CM-ui(z)-pq}) 
can be used to show, by the method 
of Baker-Akhiezer functions \cite{Manin78,TD79}, 
that $\Psi(z) = u_0(z)$ satisfies the auxiliary 
linear equations 
\beq
  \frac{\rd\Psi(z)}{\rd t_k} = B_k\Psi(z),~~
  B_k = \rd_x^k + b_{k2}\rd_x^{k-2} + \cdots + b_{kk},~~
  k = 1,2,\ldots
  \label{2:Psi-tk-eq}
\eeq
for the KP hierarchy.  

\begin{remark}
\label{2:remark-ui(z)u0(z)}
The linear equations (\ref{2:Psi-tk-eq}) hold 
for the general solutions of (\ref{2:U-tk-eq}) 
as well.  Note that (\ref{2:ui(z)-tk-eq}) 
for $k = 1$ may be thought of as 
differential recursion relations of the form 
\[
  u_{i+1}(z) = \frac{\rd u_i(z)}{\rd t_1} + u_{i0}u_0(z).  
\]
One can use these relations repeatedly to express 
$u_i(z)$ for $i \ge 1$ in terms of $u_0(z)$ as 
\beq
  u_i(z) = P_iu_0(z), 
  \label{2:ui(z)u0(z)-rel}
\eeq
where $P_i$ is a differential operator of the form 
\[
  P_i = \rd_x^i + p_{i1}\rd_x^{i-1} + \cdots + p_{ii}, 
  ~~ \rd_x = \rd/\rd t_1.  
\]
One can thereby convert the linear differential equations 
(\ref{2:ui(z)-tk-eq}) for $u_i(z)$'s 
into equations of the form (\ref{2:Psi-tk-eq}) 
for $\Psi(z) = u_0(z)$. 
\end{remark}

\section{KP hierarchy extended by negative flows}

\subsection{KP hierarchy of negative flows} 

The differential equations (\ref{2:U-tk-eq}) of $U$ 
can be extended by the \textit{negative flows}
\beq
  \frac{\rd U}{\rd t_{-k}} 
  = \Lambda^{-k}U - U\Lambda^k - U\calO_{-k}U, 
  ~~ k = 1,2,\ldots, 
  \label{3:U-tmk-eq}
\eeq
with the new time variables 
$\bar{\bst} = (t_{-1},t_{-2},\ldots)$. 
$\calO_{-k}$'s are defined as 
\[
  \calO_{-k} = - \sum_{l=1}^kE_{-l,l-k-1}.
\]
This is a natural extrapolation of $\calO_k$'s 
as they are related with $\Omega$ as 
\beq
  \calO_{-k} = \Omega\Lambda^{-k} - \Lambda^k\Omega. 
  \label{3:OmmkLam-rel}
\eeq
Linearization by the transformation (\ref{2:UC-rel}) 
also works for the negative flows.  
The linearized equations become 
\beq
  \frac{\rd C}{\rd t_{-k}} = \Lambda^{-k}C - C\Lambda^k, 
  ~~ k = 1,2,\ldots. 
  \label{3:C-t(k)-eq}
\eeq

Let us examine the equations (\ref{3:U-tmk-eq}) 
in more detail. These equation become 
the differential equations 
\beq
  \frac{\rd u_{ij}}{\rd t_{-k}} 
  = u_{i-k,j} - u_{i,j-k} 
    + \sum_{l=1}^k u_{i,-l}u_{l-k-1,j} 
  \label{3:uij-tmk-eq}
\eeq
for the matrix elements $u_{ij}$. 
Recall that the matrix elements $u_{ij}$, $i,j \ge 0$, 
of the lower-left block of $U$ satisfy 
a closed subsystem of the positive flows 
(\ref{2:uij-tk-eq}).  This subsystem 
is identical to the equations (\ref{2:wij-tk-eq}) 
for the affine coordinates $w_{ij}$ 
of the Sato Grassmannian $\Gr$. 
As regards the equations (\ref{3:uij-tmk-eq}) 
of the negative flows, we can see that 
the equations are closed for 
the matrix elements $u_{ij}$, $i,j < 0$, 
of the upper-left block of $U$.  
Letting 
\beq
  \bar{w}_{ij} = - u_{-i-1,-j-1} 
  ~~\text{of $i,j\ge 0$}, 
  \label{3:u-wbar-rel}
\eeq
we obtain evolution equations of the same form 
as the evolution equations (\ref{2:wij-tk-eq}) 
for $\bar{w}_{ij}$'s except that the role of 
the time variables is now played by 
the negative times $t_{-k}$.  

The evolution equations (\ref{3:U-tmk-eq}) 
of the negative flows thus turns out 
to contain another copy of the KP hierarchy. 
One can introduce analogues of $\xi$ and $\eta$ 
to translate these equations 
into the language of a Grassmannian. 
As we shall see later, it is convenient 
to choose the $\eta$-matrix to be 
a $\ZZ_{<0}\times\ZZ$ matrix of the form 
\[
  \bar{\eta} 
  = \left(\begin{array}{c|c}
    \delta_{ij} & u_{i,-j-1}
    \end{array}\right)_{i\in\ZZ_{<0},j\in\ZZ}. 
\]
The evolution equations for $\bar{w}_{ij}$'s 
of (\ref{3:u-wbar-rel}) can be thereby 
converted to the differential equations 
\beq
  \frac{\rd\bar{\eta}}{\rd t_{-k}} 
  = \bar{\calC}_k\bar{\eta} - \bar{\eta}\Lambda^{-k}, 
  ~~ k = 1,2,\ldots 
  \label{3:etabar-tmk-eq}
\eeq
where 
\[
  \bar{C}_k 
  = \left(\begin{array}{c|c}
    \delta_{i,j+k}~(j < -k) & u_{i,-j-k-1}~(-k\le j < 0)
    \end{array}\right)_{i,j\in\ZZ_{<0}}. 
\]
These evolution equations define a dynamical system
on the complementary Grassmannian 
\[
  \overline{\Gr} 
  \simeq \GL(\ZZ_{<0})\backslash\Fr(\ZZ_{<0},\ZZ). 
\]
This is a partial answer to the question 
raised in the previous section that asked 
the geometric meaning of $u_{ij}$'s of $i,j < 0$.

\subsection{Modifying Omega matrix and its implications} 

The matrix $g$ is constructed as shown 
in (\ref{2:g-def}) using the matrix $\Omega$. 
A half of all matrix elements of $U$, namely, 
$u_{ij}$ for $i\in\ZZ, j \ge 0$, are contained in $g$. 
The geometric meaning of those matrix elements 
has been elucidated in the Grassmannian perspective 
of $g$.  We now introduce another matrix $\bar{g}$ 
that explains the geometric meaning of the other half, 
namely $u_{ij}$ for $i \in \ZZ, j < 0$, in the same way.  
To this end, we need to modify the matrix $\Omega$ 
appropriately.  

The primary role of $\Omega$ is to express 
the matrices $\calO_k$ as shown 
in (\ref{2:OmLam-rel}) and (\ref{3:OmmkLam-rel}), 
altogether 
\[
  \calO_{\pm k}= \Omega\Lambda^{\pm k} - \Lambda^{\mp k}\Omega 
  ~~ \text{for $k = 1,2,\ldots$}. 
\] 
The relation (\ref{2:Om-CK}) to the Cauchy kernel 
is also fundamental.  Actually, as hinted in the work 
of Fu and Nijhoff \cite{FN17,FN18,Fu18,FuThesis}, 
$\Omega$ is not uniquely determined 
by these conditions. For example, one can use 
the matrix 
\[
  \bar{\Omega} = - \sum_{i=0}^\infty E_{-i-1,i} 
\]
instead of $\Omega$.  In fact, $\bar{\Omega}$ 
satisfies the algebraic relations 
\beq
  \calO_{\pm k} 
  = \bar{\Omega}\Lambda^{\pm k} - \Lambda^{\mp k}\bar{\Omega} 
  ~~ \text{for $k = 1, 2,\ldots$}. 
  \label{3:OmLam-rel}
\eeq
Moreover, its two-variate generating function 
is the analytically continued Cauchy kernel:  
\beq
  \sum_{i,j\in\ZZ}\bar{\Omega}_{ij}w^iz^j 
  = - \sum_{j=0}^\infty\frac{z^j}{w^{j+1}} 
  = \frac{1}{z - w} 
  ~~ \text{for $|z| < |w|$}. 
  \label{3:Om-CK}
\eeq

Having the modified Omega matrix $\bar{\Omega}$,  
we can define the matrix $\bar{g}$ as 
\beq
  \bar{g} = E - U\bar{\Omega}. 
  \label{3:gbar-def}
\eeq
Just like the block structure (\ref{2:g-blocks}) 
of $g$, this matrix can be written as 
\beq
  \bar{g} 
  = \left(\begin{array}{c|c}
    \delta_{ij} & u_{i,-j-1} \\\hline
    0 & \delta_{ij} + u_{i,-j-1} 
    \end{array}\right). 
  \label{3:gbar-blocks}
\eeq
The central part looks as follows: 
\[
  \bar{g} = \begin{pmatrix}
      \ddots &\vdots &\vdots & \vdots &\vdots &  \\
      \cdots & 1 & 0 & u_{-2,-1} & u_{-2,-2} & \cdots\\
      \cdots & 0 & 1 & u_{-1,-1} & u_{-1,-2}& \cdots\\
      \cdots & 0 & 0 & 1+u_{0,-1} & u_{0,-2}& \cdots\\
      \cdots & 0 & 0 & u_{1,-1} & 1+u_{1,-2}& \cdots\\
         &\vdots&\vdots &\vdots &\vdots &\ddots
      \end{pmatrix}. 
\]
Note that the upper half of this matrix 
is nothing but the foregoing 
$\eta$-matrix $\bar{\eta}$. 

We use $\bar{g}$ to capture 
the same Grassmannian structure as described 
by $\bar{\eta}$.  

\begin{prop}
The matrix $\bar{g}$ satisfies 
the differential equations 
\beq
  \frac{\rd\bar{g}}{\rd t_{-k}} 
  = (\Lambda^{-k} - U\calO_{-k})\bar{g} - \bar{g}\Lambda^{-k}. 
  ~~ k =1,2,\ldots. 
  \label{3:gbar-tmk-eq}
\eeq
\end{prop}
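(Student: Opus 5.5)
The argument mirrors the proof of the equation (\ref{2:g-tk-eq}) for $g$, with $\Omega$ replaced by $\bar{\Omega}$ and the positive flows replaced by the negative flows (\ref{3:U-tmk-eq}). The only ingredient specific to $\bar{\Omega}$ is the relation (\ref{3:OmLam-rel}) with the lower sign:
\[
  \calO_{-k} = \bar{\Omega}\Lambda^{-k} - \Lambda^{k}\bar{\Omega}.
\]
This is the identity that replaces (\ref{3:OmmkLam-rel}) in the computation.

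First, differentiate the definition (\ref{3:gbar-def}). Since $\bar{\Omega}$ is constant,
\[
  \frac{\rd\bar{g}}{\rd t_{-k}} = -\frac{\rd U}{\rd t_{-k}}\bar{\Omega}
  = -\Lambda^{-k}U\bar{\Omega} + U\Lambda^{k}\bar{\Omega} + U\calO_{-k}U\bar{\Omega}.
\]

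Second, substitute $U\bar{\Omega} = E - \bar{g}$ in the first and third terms. This gives
\[
  \frac{\rd\bar{g}}{\rd t_{-k}} = -\Lambda^{-k}(E-\bar{g}) + U\Lambda^{k}\bar{\Omega} + U\calO_{-k}(E-\bar{g}),
\]
which regroups as
\[
  \frac{\rd\bar{g}}{\rd t_{-k}} = (\Lambda^{-k} - U\calO_{-k})\bar{g} - \Lambda^{-k} + U(\Lambda^{k}\bar{\Omega} + \calO_{-k}).
\]

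Third, apply the relation above to get $\Lambda^{k}\bar{\Omega} + \calO_{-k} = \bar{\Omega}\Lambda^{-k}$. The last two terms then become
\[
  -\Lambda^{-k} + U\bar{\Omega}\Lambda^{-k} = -(E - U\bar{\Omega})\Lambda^{-k} = -\bar{g}\Lambda^{-k},
\]
and this is exactly (\ref{3:gbar-tmk-eq}).

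The only step that needs genuine checking is the relation (\ref{3:OmLam-rel}) for $\bar{\Omega}$ with the lower sign. The paper asserts it, and I would take it as given. If verification were needed, I would compare matrix elements. Write $\bar{\Omega} = -\sum_{i\ge0}E_{-i-1,i}$, and recall that $\Lambda^{-k}=\sum_i E_{i,i-k}$ and $\Lambda^{k}=\sum_i E_{i,i+k}$. Then $\bar{\Omega}\Lambda^{-k} = -\sum_{i\ge0}E_{-i-1,i-k}$ and $\Lambda^{k}\bar{\Omega} = -\sum_{i\ge0}E_{-i-1-k,i}$. In the difference, the terms with $i\ge k$ in the first sum cancel against the whole second sum. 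What remains is $-\sum_{i=0}^{k-1}E_{-i-1,i-k}$, which with $l=i+1$ is exactly $\calO_{-k}$.

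All products of infinite matrices that occur, such as $U\bar{\Omega}$, $U\Lambda^{k}\bar{\Omega}$ and $U\calO_{-k}U\bar{\Omega}$, involve only finite sums per entry, because $\bar{\Omega}$ and $\calO_{-k}$ have at most one nonzero entry in each row and column. Associativity therefore holds and no convergence issues arise.
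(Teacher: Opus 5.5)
Your proposal is correct and follows the paper's proof step by step. You differentiate $\bar{g} = E - U\bar{\Omega}$, substitute the negative-flow equation and $U\bar{\Omega} = E - \bar{g}$, and close with $\Lambda^{k}\bar{\Omega} + \calO_{-k} = \bar{\Omega}\Lambda^{-k}$ from (\ref{3:OmLam-rel}); your entrywise verification of that identity and your remark on the finiteness of the matrix products are correct additions that the paper leaves implicit.
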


\proof 
These equations can be derived in the same way 
as the proof of (\ref{2:g-tk-eq}): 
\[
\begin{aligned}
  \frac{\rd\bar{g}}{\rd t_{-k}} 
  &= - \frac{\rd U}{\rd t_{-k}}\bar{\Omega} \\
  &= - (\Lambda^{-k}U - U\Lambda^k - U\calO_{-k}U)\bar{\Omega} \\
  &= - \Lambda^{-k}(E - \bar{g}) + U\Lambda^k\bar{\Omega} 
     + U\calO_{-k}(E - \bar{g})\\
  &= (\Lambda^{-k} - U\calO_{-k})\bar{g} - \Lambda^{-k} 
     + U(\Lambda^k\bar{\Omega} + \calO_{-k}) \\
  &= (\Lambda^{-k} - U\calO_{-k})\bar{g} - \Lambda^{-k} 
     + U\bar{\Omega}\Lambda^{-k}
     ~~(\text{by (\ref{3:OmLam-rel})})\\
  &= (\Lambda^{-k} - U\calO_{-k})\bar{g} - \bar{g}\Lambda^{-k}.  
\end{aligned}
\]
\qed

It is easy to see that the coefficient matrices 
$\Lambda^{-k} - U\calO_{-k}$ of (\ref{3:gbar-tmk-eq}) 
have the following block-triangular form: 

\begin{prop}
\beq
  \Lambda^{-k} - U\calO_{-k} 
  = \left(\begin{array}{c|c}
    \bar{\calC}_k & 0 \\\hline 
    * & * 
    \end{array}\right). 
  \label{3:Lmk-UOmk-block}
\eeq
\end{prop}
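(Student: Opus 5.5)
Since $\Lambda^{-k}$ and $U\calO_{-k}$ are explicit, the block structure (\ref{3:Lmk-UOmk-block}) can be checked entry by entry, in parallel with the proof of (\ref{2:Lk-UOk-block}). The blocks follow the decomposition (\ref{3:gbar-blocks}): upper rows $i<0$, lower rows $i\ge 0$, left columns $j<0$, right columns $j\ge 0$. There are two things to show: the upper-right block ($i<0$, $j\ge 0$) vanishes, and the upper-left block ($i,j<0$) equals $\bar{\calC}_k$. I will write out the $(i,j)$ entry and restrict to each block.

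First compute the matrix elements. Since $\Lambda^{-k}=\sum_i E_{i,i-k}$, its $(i,j)$ entry is $\delta_{i-k,j}=\delta_{i,j+k}$. Since $\calO_{-k}=-\sum_{l=1}^k E_{-l,l-k-1}$, the $(i,j)$ entry of $U\calO_{-k}$ is $-\sum_{l=1}^k u_{i,-l}\,\delta_{l-k-1,j}$. As $l$ runs over $1,\dots,k$, the column index $l-k-1$ runs over $-k,\dots,-1$. Hence $U\calO_{-k}$ has nonzero columns only for $-k\le j<0$. In that range $l=j+k+1$, and the entry is $-u_{i,-j-k-1}$. Therefore
\[
  (\Lambda^{-k}-U\calO_{-k})_{ij} = \delta_{i,j+k} + u_{i,-j-k-1}\,[\,-k\le j<0\,].
\]
This is consistent with the quadratic term in (\ref{3:uij-tmk-eq}).

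Next take the upper-right block, $i<0\le j$. The $U\calO_{-k}$ term vanishes because $j\ge 0$. The term $\delta_{i,j+k}$ vanishes because $j+k\ge k>0>i$. So this block is zero.

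Finally take the upper-left block, $i,j<0$. For $j<-k$ only $\delta_{i,j+k}$ survives. For $-k\le j<0$ we have $j+k\ge 0>i$, so the delta term drops and only $u_{i,-j-k-1}$ remains. This is exactly the definition of $\bar{\calC}_k$. The lower blocks are left unconstrained, which is what $*$ indicates.

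The only real difficulty is index bookkeeping, specifically the reindexing $l=j+k+1$ and the sign coming from $\calO_{-k}$. I will double-check this step against (\ref{3:uij-tmk-eq}) and (\ref{3:etabar-tmk-eq}), by restricting (\ref{3:gbar-tmk-eq}) to the upper rows of $\bar{g}$, which form $\bar{\eta}$.
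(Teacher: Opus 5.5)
Your proof is correct. The entry formula $(\Lambda^{-k}-U\calO_{-k})_{ij}=\delta_{i,j+k}+u_{i,-j-k-1}$, with the second term present only for $-k\le j<0$, restricts on the rows $i<0$ to a vanishing upper-right block and an upper-left block equal to $\bar{\calC}_k$. This is the same direct check from the definitions of $\Lambda^{-k}$ and $\calO_{-k}$ that the paper calls ``easy to see'', mirroring its one-line argument for the analogous block structure of $\Lambda^k-U\calO_k$.
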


Since the upper-left block of 
(\ref{3:Lmk-UOmk-block}) 
is the same matrix as the one showing up 
in (\ref{3:etabar-tmk-eq}) 
and the upper-half part of $\bar{g}$ 
is exactly $\bar{\eta}$, 
the equations (\ref{3:gbar-tmk-eq}) contain 
the equations (\ref{3:etabar-tmk-eq}) 
as a subsystem.  We have thus obtained 
substantially the same dynamical system 
on the Grassmannian $\overline{\Gr}$ 
in the different coset space realization 
\[
  \overline{\Gr} \simeq \overline{\calP}\backslash\GL(\ZZ). 
\]
The stabilizer $\overline{\calP} \subset \GL(\ZZ)$ 
consists of invertible matrices of the form 
\[
      p = \left(\begin{array}{c|c}
        * & 0 \\\hline
        * & * 
        \end{array}\right). 
\]

\subsection{Unification of positive and negative flows}

The foregoing description of the two Grassmannian 
structures can be unified in the extended system 
\beq
  \frac{\rd U}{\rd t_{\pm k}} 
  = \Lambda^{\pm k}U - U\Lambda^{\mp k} - U\calO_{\pm k}U, 
  ~~ k = 1,2,\ldots, 
  \label{3:U-tpmk-eq}
\eeq
of evolution equations for $U$. 
Note that the algebraic relations 
(\ref{2:OmLam-rel}) and (\ref{3:OmLam-rel}) 
among $\calO_k$'s, $\Omega$ and 
$\bar{\Omega}$ hold for both positive 
and negative values of $k$.  
Therefore the calculations for deriving 
the equations (\ref{2:g-tk-eq}) 
and (\ref{3:gbar-tmk-eq}) work 
for those values of $k$ as well. 
The pair $g,\bar{g}$ of matrices thus 
turn out to satisfy differential equations 
of the same form 
\beq
\begin{aligned}
  \frac{\rd g}{\rd t_{\pm k}} 
  &= (\Lambda^{\pm k} - U\calO_{\pm k})g 
     - g\Lambda^{\pm k},\\
  \frac{\rd\bar{g}}{\rd t_{\pm k}} 
  &= (\Lambda^{\pm k} - U\calO_{\pm k})\bar{g} 
     - \bar{g}\Lambda^{\pm k}  
\end{aligned}
  \label{3:ggbar-tpmk-eq}
\eeq
for all $k = 1,2,\ldots$.  
This gives an equivalent expression 
of the extended system (\ref{3:U-tpmk-eq}). 

We can pack $g$ and $\bar{g}$ into a single matrix 
of the form 
\[
  \eta^{(2)} 
  = \left(\begin{array}{c|c}
    g & \bar{g}
    \end{array}\right)
\]
and rewrite the foregoing differential equations 
(\ref{3:ggbar-tpmk-eq}) as 
\beq
  \frac{\rd\eta^{(2)}}{\rd t_{\pm k}} 
  = (\Lambda^{\pm k} - U\calO_{\pm k})\eta^{(2)} 
    - \eta^{(2)}\diag(\Lambda^{\pm k},\Lambda^{\pm k}), 
  \label{3:eta2-tpmk-eq}
\eeq
where $\diag(\Lambda^k,\Lambda^k)$ denotes 
the block-diagonal matrix 
\[
  \diag(\Lambda^k,\Lambda^k)
  = \left(\begin{array}{c|c}
    \Lambda^k & 0 \\\hline
    0 & \Lambda^k
    \end{array}\right).
\]
$\eta^{(2)}$ is a $\ZZ\times(\ZZ\sqcup\ZZ)$ 
matrix with the following block structure: 
\beq
  \eta^{(2)} 
  = \left(\begin{array}{c|c|c|c}
    \delta_{ij} - u_{i,-j-1} & 0 & \delta_{ij} & u_{i,-j-1}\\\hline
    - u_{i,-j-1} & \delta_{ij}& 0 & \delta_{ij} + u_{i,-j-1}
    \end{array}\right). 
  \label{3:eta2-blocks}
\eeq
The row index $i$ ranges over $i < 0$ 
in the upper blocks and $i \ge 0$ 
in the lower blocks. 
The column index $j$ ranges over 
$j < 0$, $j \ge 0$, $j < 0$ and $j \ge 0$, 
respectively, in the four blocks arranged 
from left to right.  Note that the previous 
two  matrices $\eta,\bar{\eta}$ are built 
into this larger matrix as submatrices.  

The matrix $\eta^{(2)}$ may be thought of 
as representing a point of yet another 
Sato Grassmannian 
\[
  \Gr^{(2)} 
  \simeq \GL(\ZZ)\backslash\Fr(\ZZ,\ZZ\sqcup\ZZ). 
\]
A more general representative of the same point 
of $\Gr^{(2)}$ satisfies the differential equations 
\beq
  \frac{\rd\eta^{(2)}}{\rd t_{\pm k}} 
  = \calC^{(2)}_k\eta^{(2)} 
    - \eta^{(2)}\diag(\Lambda^{\pm k},\Lambda^{\pm k}) 
  \label{3:eta2-tpmk-eq2}
\eeq
that allow gauge transformations 
\[
  \eta^{(2)} \to h\eta^{(2)},~~
  \calC^{(2)}_k \to h\calC^{(2)}_kh^{-1} 
    + \frac{\rd h}{\rd t_k}h^{-1}. 
\]
This kind of larger Grassmannians are used 
in the literature \cite{SS82,DJKM81,KvdL03} 
to describe the multi-component KP hierarchy. 
In this sense, $\Gr^{(2)}$ is 
\textit{the two-component Sato Grassmannian}.  

A matrix of almost the same form as $\eta^{(2)}$ 
shows up in \textit{the 2D Toda hierarchy}
\cite{Takasaki90LMP}
\footnote{To be more precise, the 2D Toda hierarchy 
considered therein is a supersymmetric analogue.}
(see Appendix B). 
It is well known that the 2D Toda hierarchy 
is closely related to \textit{the two-component 
KP hierarchy} \cite{UT84,Takasaki84}. 
The two sets of time variables $\bst = (t_1,t_2,\ldots)$ 
and $\bar{\bst} = (\bar{t}_1,\bar{t}_2,\ldots)$ 
of the 2D Toda hierarchy correspond 
to two sets of time variables 
of the two-component KP hierarchy.  
A difference with the 2D Toda hierarchy 
is that the $\eta$-matrix of the 2D Toda hierarchy 
depends on the lattice coordinate $s$ as well.  
The matrix $\eta^{(2)}$ amounts to 
the $\eta$-matrix $\eta(s)^{(2)}$ 
of the 2D Toda hierarchy at $s = 0$
\footnote{The multi-component KP hierarchy itself 
can be formulated to accommodate several 
discrete variables \cite{DJKM81,KvdL03}. 
These discrete variables are related to charges 
of fermions in the fermionic description 
of the multi-component KP hierarchy.}. 

In summary, the extended system (\ref{3:U-tpmk-eq}) 
of evolution equations for $U$ is the two-component 
KP hierarchy in disguise.  All matrix elements $u_{ij}$ 
of $U$ play the role of affine coordinates 
on an open cell of two-component Grassmannian $\Gr^{(2)}$. 
This is a final answer to the questions 
raised in the previous section.  

\begin{remark}
The evolution equation (\ref{3:eta2-tpmk-eq2}) 
can be modified as 
\[
  \frac{\rd\eta^{(2)\prime}}{\rd t_k} 
  = \calC_k\eta^{(2)\prime} 
    - \eta^{(2)\prime}\diag(\Lambda^k,0)
\]
for the positive flows and
\[
  \frac{\rd\eta^{(2)\prime}}{\rd t_{-k}} 
  = \calC_{-k}\eta^{(2)\prime} 
    - \eta^{(2)\prime}\diag(0,\Lambda^{-k})
\]
for the negative flows 
by the simple transformation 
\[
  \eta^{(2)\prime} 
  = \eta^{(2)}\exp\diag\left(-\sum_{k=1}^\infty t_{-k}\Lambda^{-k}, 
    -\sum_{k=1}^\infty t_k\Lambda^k\right), 
\]
It is this $\eta$-matrix $\eta^{(2)\prime}$ 
rather than $\eta^{(2)}$ that shows up 
in the usual formulation of the two-component 
KP hierarchy.  
\end{remark}

\subsection{Perspective from wave functions}

Let us consider the relationship with 
the two-component KP hierarchy 
in the language of wave functions. 
To this end, we modify the previous definition 
(\ref{2:u(z)-def}) of $\bsu(z)$ and newly 
define its partner $\bar{\bsu}(z)$ as 
\[
  \bsu(z) = g\bsc(z)\rho(z), ~~
  \bar{\bsu}(z) = \bar{g}\bsc(z)\rho(z), 
\]
where 
\[
  \rho(z) = \exp\left(\sum_{k=1}^\infty 
            (t_kz^k + t_{-k}z^{-k})\right). 
\]
The plane wave factor $\rho(z)$ is modified 
to accommodate both positive and negative flows.  
As a consequence of the differential equations 
(\ref{3:ggbar-tpmk-eq}) satisfied by $g$ and $\bar{g}$, 
we have the linear differential equations 
\beq
\begin{aligned}
  \frac{\rd\bsu(z)}{\rd t_{\pm k}} 
  &= (\Lambda^{\pm k} - U\calO_{\pm k})\bsu(z),\\
  \frac{\rd\bar{\bsu}(z)}{\rd t_{\pm k}} 
  &= (\Lambda^{\pm k} - U\calO_{\pm k})\bar{\bsu}(z)  
\end{aligned}
  \label{3:uubar(z)-tpmk-eq}
\eeq
for $k = 1,2,\ldots$. 

The components $u_i(z),\bar{u}_i(z)$ of 
$\bsu(z),\bar{\bsu}(z)$ can be expressed 
as the product of a Laurent series of $z$ 
and the exponential factor: 
\[
\begin{aligned}
  u_i(z) &= \left(z^i - \sum_{j=0}^\infty u_{ij}z^{-j-1}
           \right)\rho(z),\\
  \bar{u}_i(z) &= \left(z^i + \sum_{j=0}^\infty u_{i,-j-1}z^j
           \right)\rho(z). 
\end{aligned}
\]
As indicated by the block structure (\ref{3:eta2-blocks}) 
of $\eta^{(2)}$, it will be natural to divide 
these wave functions into the four groups 
$\{u_i(z)\}_{i=0}^\infty$, $\{u_{-i-1}(z)\}_{i=0}^\infty$, 
$\{\bar{u}_i(z)\}_{i=0}^\infty$ and 
$\{\bar{u}_{-i-1}(z)\}_{i=0}^\infty$ 
with the following leading members: 
\[
\begin{aligned}
  u_0(z) &= \left(1 
     - \sum_{j=0}^\infty u_{0,j}z^{-j-1}\right)\rho(z),\\
  \bar{u}_0(z) &= \left((1 + u_{0,-1}) 
     + \sum_{j=1}^\infty u_{0,-j-1}z^j\right)\rho(z),\\
  u_{-1}(z) &= \left((1 - u_{-1,0})z^{-1} 
     - \sum_{j=1}^\infty u_{-1,j}z^{-j-1}\right)\rho(z),\\
  \bar{u}_{-1}(z) &= \left(z^{-1} 
     + \sum_{j=0}^\infty u_{-1,-j-1}z^j\right)\rho(z).       
\end{aligned}
\]

As it turns out below, the $2 \times 2$ matrix 
\[
  \Psi_{2\times 2}(z) = \begin{pmatrix}
            u_0(z) & \bar{u}_0(z) \\
            u_{-1}(z) & \bar{u}_{-1}(z) 
            \end{pmatrix}
\]
plays the role of a matrix-valued wave function 
in the two-component KP hierarchy \cite{DJKM81,KvdL03}.
This is parallel to the identification 
of $u_0(z)$ with the wave function $\Psi(z)$ 
of the one-component KP hierarchy, 
see Remark \ref{2:remark-ui(z)u0(z)}.  
A clue towards this interpretation is the following 
generalization of the relation 
(\ref{2:ui(z)u0(z)-rel}) between $u_i(z)$ and $u_0(z)$. 
Let $\rd_x$ denote the sum of $\rd/\rd t_1$ 
and $\rd/\rd t_{-1}$:
\footnote{Equivalently, we can shift $t_1$ and 
$t_{-1}$ by $x$ and consider the derivative 
operator in $x$.}
\[
  \rd_x = \rd/\rd t_1 + \rd/\rd t_{-1}.
\]

\begin{prop}
The wave functions $u_i(z)$, $\bar{u}_i(z)$, 
$u_{-i-1}(z)$, $\bar{u}_{-i-1}(z)$ for $i \ge 1$ 
can be expressed by $u_0(z)$, $\bar{u}_0(z)$, 
$u_{-1}(z)$, $\bar{u}_{-1}(z)$ as 
\beq
\begin{aligned}
  u_i(z) &= P_iu_0(z) + Q_iu_{-1}(z),\\
  \bar{u}_i(z) &= P_i\bar{u}_0(z) + Q_i\bar{u}_{-1}(z),\\
  u_{-i-1}(z) &= R_iu_0(z) + S_iu_{-1}(z),\\
  \bar{u}_{-i-1}(z) &= R_i\bar{u}_0(z) + S_i\bar{u}_{-1}(z), 
\end{aligned}
  \label{3:higher-uubar(z)}
\eeq
where $P_i,Q_i,R_i$, $S_i$ are 
differential operators of the form 
\[
\begin{aligned}
  P_i &= \rd_x^i + O(\rd_x^{i-1}),\\
  Q_i &= - (1 + u_{0,-1})\rd_x^{i-1} + O(\rd_x^{i-2}),\\
  R_i &= - (1- u_{-1,0})\rd_x^{i-1} + O(\rd_x^{i-2}),\\
  S_i &= \rd_x^i + O(\rd_x^{i-1}), 
\end{aligned}
\]
where $O(\rd_x^n)$ stands for a differential 
operator of order less than or equal to $n$. 
\end{prop}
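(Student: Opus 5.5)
The plan is to derive a three-term differential recursion in the index $i$ from the linear equations (\ref{3:uubar(z)-tpmk-eq}) for $k=1$. I would then run this recursion upward from $i=0$ and downward from $i=-1$, controlling the orders of the operators by induction. Since $\bsu(z)$ and $\bar{\bsu}(z)$ satisfy the same linear system with the same coefficient matrices $\Lambda^{\pm1} - U\calO_{\pm1}$, which do not depend on $z$, the operators obtained for $u_i(z)$ automatically serve for $\bar{u}_i(z)$. So it suffices to treat $\bsu(z)$.

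First I compute the two flows componentwise. Since $\calO_1 = E_{00}$, the $t_1$-flow reads
$\rd u_i(z)/\rd t_1 = u_{i+1}(z) - u_{i0}u_0(z)$. This is (\ref{2:ui(z)-tk-eq}) for $k=1$. Since $\calO_{-1} = -E_{-1,-1}$, the $t_{-1}$-flow reads
$\rd u_i(z)/\rd t_{-1} = u_{i-1}(z) + u_{i,-1}u_{-1}(z)$. Adding the two gives
\[
  \rd_x u_i(z) = u_{i+1}(z) + u_{i-1}(z) - u_{i0}u_0(z) + u_{i,-1}u_{-1}(z),
  \quad i\in\ZZ .
\]
This can be solved for $u_{i+1}(z)$ in terms of $u_i(z)$ and $u_{i-1}(z)$, or for $u_{i-1}(z)$ in terms of $u_i(z)$ and $u_{i+1}(z)$.

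Upward direction: set $i=0$. This gives
$u_1(z) = (\rd_x + u_{00})u_0(z) - (1+u_{0,-1})u_{-1}(z)$, so $P_1 = \rd_x + u_{00}$ and $Q_1 = -(1+u_{0,-1})$, with the claimed leading terms. For $i\ge1$ I use $u_{i+1} = (\rd_x + \dots)u_i - u_{i-1} + u_{i0}u_0 - u_{i,-1}u_{-1}$ (with $\dots$ standing for nothing beyond $\rd_x$ acting, plus lower-order terms). This yields $P_{i+1} = \rd_x P_i - P_{i-1} + u_{i0}\delta\text{-terms}$, and similarly for $Q_{i+1}$, with $P_0=1$, $Q_0=0$. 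By induction $P_{i+1} = \rd_x^{i+1} + O(\rd_x^i)$, since $P_{i-1}$ and the multiplication terms have order at most $i-1$. Likewise $Q_{i+1} = \rd_x Q_i + O(\rd_x^{i-1})$, which has leading term $-(1+u_{0,-1})\rd_x^{i}$. Here I use that the commutator of $\rd_x$ with the coefficient $-(1+u_{0,-1})$ only lowers the order.

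Downward direction: set $i=-1$. This gives
$u_{-2}(z) = -(1-u_{-1,0})u_0(z) + (\rd_x - u_{-1,-1})u_{-1}(z)$, so $R_1 = -(1-u_{-1,0})$ and $S_1 = \rd_x - u_{-1,-1}$. For $i\le -2$ I use $u_{i-1} = \rd_x u_i - u_{i+1} + u_{i0}u_0 - u_{i,-1}u_{-1}$, and the same induction (with $R_0=0$, $S_0=1$) gives $S_i = \rd_x^i + O(\rd_x^{i-1})$ and $R_i = -(1-u_{-1,0})\rd_x^{i-1} + O(\rd_x^{i-2})$.

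The only delicate point is the bookkeeping of orders. In particular, the subtracted term two steps back (e.g.\ $P_{i-1}$, $Q_{i-1}$) must not interfere with the leading coefficient, which is clear since its order is two lower. Otherwise the argument is routine.
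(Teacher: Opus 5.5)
Your proposal is correct and follows essentially the same route as the paper: add the $t_1$- and $t_{-1}$-components of (\ref{3:uubar(z)-tpmk-eq}) to get the three-term relation $\rd_x u_i(z) = u_{i+1}(z)+u_{i-1}(z)-u_{i0}u_0(z)+u_{i,-1}u_{-1}(z)$, read off $u_1(z)$ and $u_{-2}(z)$ (and their barred counterparts) at $i=0,-1$, and iterate. Your explicit recursions, such as $P_{i+1}=\rd_x\circ P_i-P_{i-1}+u_{i0}$ and $Q_{i+1}=\rd_x\circ Q_i-Q_{i-1}-u_{i,-1}$ with $P_0=1$, $Q_0=0$, make precise the order bookkeeping that the paper leaves as ``repeating these manipulations''.
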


\proof
In terms of the components, the equations 
(\ref{3:uubar(z)-tpmk-eq}) for $k = \pm 1$  
can be written as 
\[
\begin{aligned}
  \frac{\rd u_i(z)}{\rd t_1} 
  &= u_{i+1}(z) - u_{i,0}u_0(z),\\
  \frac{\rd\bar{u}_i(z)}{\rd t_1} 
  &= \bar{u}_{i+1}(z) - u_{i,0}\bar{u}_0(z),\\
  \frac{\rd u_i(z)}{\rd t_{-1}} 
  &= u_{i-1}(z) + u_{i,-1}\bar{u}_{-1}(z),\\
  \frac{\rd\bar{u}_i(z)}{\rd t_{-1}} 
  &= \bar{u}_{i-1}(z) + u_{i,-1}\bar{u}_{-1}(z). 
\end{aligned}
\]
Adding the $t_1$-derivative and 
the $t_{-1}$-derivative, we obtain the equations 
\beq
 \begin{aligned}
  \rd_xu_i(z) &= u_{i+1}(z) + u_{i-1}(z) 
    - u_{i,0}u_0(z) + u_{i,-1}u_{-1}(z), \\
  \rd_x\bar{u}_i(z) &= \bar{u}_{i+1}(z) + \bar{u}_{i-1}(z) 
    - u_{i,0}\bar{u}_0(z) + u_{i,-1}\bar{u}_{-1}(z). 
\end{aligned}
  \label{3:uubar-x-eq}
\eeq
Letting $i = 0$ and $i = -1$ and moving 
some terms, we have 
\[
\begin{aligned}
  u_1(z) &= (\rd_x + u_{0,0})u_0(z) 
    - (1 + u_{0,-1})u_{-1}(z) ,\\
  \bar{u}_1(z) &= (\rd_x + u_{0,0})\bar{u}_0(z)
    - (1 + u_{0,-1})\bar{u}_{-1}(z) ,\\
  u_{-2}(z) &= - (1 - u_{-1,0})u_0(z)
    + (\rd_x - u_{-1,-1})u_{-1}(z) ,\\
  \bar{u}_{-2}(z) &= - (1 - u_{-1,0})\bar{u}_0(z)
    + (\rd_x - u_{-1,-1})\bar{u}_{-1}(z) . 
\end{aligned}
\]
This gives (\ref{3:higher-uubar(z)}) 
for $i = 1$.  Now letting $i = 1$ and $i = -2$ 
in (\ref{3:uubar-x-eq}) and 
using the foregoing expression of 
$u_1(z)$, $\bar{u}_1(z)$, $u_{-2}(z)$, $\bar{u}_{-2}(z)$, 
we can similarly derive (\ref{3:higher-uubar(z)}) 
for $i = 2$.  Repeating these manipulations, 
one can derive (\ref{3:higher-uubar(z)}) 
for all $i \ge 1$. 
\qed

The equations for the derivatives 
of $u_0(z),\bar{u}_0(z)$ and $u_{-1}(z),\bar{u}_{-1}(z)$ 
in (\ref{3:uubar(z)-tpmk-eq}) read 
\[
\begin{aligned}
  \frac{\rd u_0(z)}{\rd t_k}
  &= u_k(z) - \sum_{l=0}^{k-1}u_{0,l}u_{k-l-1}(z),\\
  \frac{\rd\bar{u}_0(z)}{\rd t_k}
  &= \bar{u}_k(z) - \sum_{l=0}^{k-1}u_{0,l}\bar{u}_{k-l-1}(z),\\
  \frac{\rd u_{-1}(z)}{\rd t_k}
  &= u_{k-1}(z) - \sum_{l=0}^{k-1}u_{-1,l}u_{k-l-1}(z),\\
  \frac{\rd\bar{u}_{-1}(z)}{\rd t_k}
  &= \bar{u}_{k-1}(z) - \sum_{l=0}^{k-1}u_{-1,l}\bar{u}_{k-l-1}(z) 
\end{aligned}
\]
for the positive flows and 
\[
\begin{aligned}
  \frac{\rd u_0(z)}{\rd t_{-k}}
  &= u_{-k}(z) + \sum_{l=1}^ku_{0,-l}u_{l-k-1}(z),\\
  \frac{\rd\bar{u}_0(z)}{\rd t_{-k}}
  &= \bar{u}_{-k}(z) + \sum_{l=1}^ku_{0,-l}\bar{u}_{l-k-1}(z),\\
  \frac{\rd u_{-1}(z)}{\rd t_{-k}}
  &= u_{-k-1}(z) + \sum_{l=1}^ku_{-1,-l}u_{l-k-1}(z),\\
  \frac{\rd\bar{u}_{-1}(z)}{\rd t_{-k}}
  &= \bar{u}_{-k-1}(z) + \sum_{l=1}^ku_{-1,-l}\bar{u}_{l-k-1}(z) 
\end{aligned}
\]
for the negative flows.  
Using (\ref{3:higher-uubar(z)}), 
we can rewrite each term of the right hand side 
into an expression consisting solely 
of $u_0(z),\bar{u}_0(z)$ 
and $u_{-1}(z),\bar{u}_{-1}(z)$.  
For instance, the first one  
in the foregoing equations can be 
converted to an equation of the form 
\[
  \frac{\rd u_0(z)}{\rd t_k} = A_iu_0(z) + B_iu_{-1}(z), 
\]
where $A_i$ and $B_i$ are differential 
operators of the form 
\[
  A_i = \rd_x^k + O(\rd_x^{k-1}),~~
  B_i = - (1 + u_{0,-1})\rd_x^{k-1} + O(\rd_x^{k-2}). 
\]
We are thus led to the following result: 

\begin{prop}
The matrix wave function $\Psi_{2\times 2}(z)$ 
satisfies the linear differential equations 
\beq
  \frac{\rd\Psi_{2\times 2}(z)}{\rd t_{\pm k}} 
  = \begin{pmatrix}
    A_{\pm k} & B_{\pm k} \\
    C_{\pm k} & D_{\pm k} 
    \end{pmatrix}
    \Psi_{2\times 2}(z)  
  \label{3:Psi-tpmk-eq}
\eeq
for the positive and negative flows. 
$A_{\pm k},B_{\pm k},C_{\pm k},D_{\pm k}$ 
are differential operators of the form 
\[
\begin{aligned}
  A_k &= \rd_x^k + O(\rd_x^{k-1}),\\
  B_k &= -(1 + u_{0,-1})\rd_x^{k-1} + O(\rd_x^{k-2}),\\
  C_k &= (1 - u_{-1,0})\rd_x^{k-1} + O(\rd_x^{k-2}),\\
  D_k &= O(\rd_x^{k-2})
\end{aligned}
\]
and 
\[
\begin{aligned}
  A_{-k} &= O(\rd_x^{k-2}),\\
  B_{-k} &= (1 + u_{0,-1})\rd_x^{k-1} + O(\rd_x^{k-2}),\\
  C_{-k} &= - (1 - u_{-1,0})\rd_x^{k-1} + O(\rd_x^{k-2}),\\
  D_{-k} &= \rd_x^k + O(\rd_x^{k-1}). 
\end{aligned}
\]
\end{prop}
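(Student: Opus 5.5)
The plan is to derive (\ref{3:Psi-tpmk-eq}) row by row from the component equations for $\rd u_0(z)/\rd t_{\pm k}$, $\rd u_{-1}(z)/\rd t_{\pm k}$ and their barred partners, which are displayed just before the statement. Every wave function on their right hand sides that is not one of $u_0,u_{-1},\bar{u}_0,\bar{u}_{-1}$ will be replaced using (\ref{3:higher-uubar(z)}).

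The first point to settle is that a single operator matrix acts on both columns of $\Psi_{2\times 2}(z)$. In each pair of equations, for example those for $u_0$ and $\bar{u}_0$, the coefficients $u_{0,l}$ and $u_{0,-l}$ are identical. Moreover, (\ref{3:higher-uubar(z)}) expresses $u_i$ and $\bar{u}_i$ through the same operators $P_i,Q_i,R_i,S_i$. Hence the resulting operators $A,B$ (respectively $C,D$) coincide for the two columns.

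To cover the small indices uniformly, I adopt the conventions $P_0=1$, $Q_0=0$ (so that $u_0=P_0u_0+Q_0u_{-1}$) and $R_0=0$, $S_0=1$ (so that $u_{-1}=R_0u_0+S_0u_{-1}$). This is consistent with the order estimates of (\ref{3:higher-uubar(z)}), and it lets every index $\ge 0$ and every index $\le -1$ be treated by the same formulas.

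Next I track leading orders, treating the positive flows first.

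\emph{Row of $u_0$, positive flow.} The term $u_k=P_ku_0+Q_ku_{-1}$ contributes $\rd_x^k$ to $A_k$ and $-(1+u_{0,-1})\rd_x^{k-1}$ to $B_k$. The sum $\sum_{l}u_{0,l}u_{k-l-1}$ involves only indices $\le k-1$. It therefore affects $A_k$ only at order $\le k-1$ and $B_k$ only at order $\le k-2$.

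\emph{Row of $u_{-1}$, positive flow.} The term $u_{k-1}$ gives $P_{k-1}=\rd_x^{k-1}+\cdots$ in the $u_0$ slot. In the sum, the $l=0$ term $-u_{-1,0}u_{k-1}$ gives $-u_{-1,0}\rd_x^{k-1}$, and the remaining terms are of lower order. Together these yield $C_k=(1-u_{-1,0})\rd_x^{k-1}+O(\rd_x^{k-2})$. In the $u_{-1}$ slot, all contributions come from $Q_j$ with $j\le k-1$, so they are $O(\rd_x^{k-2})$, which gives $D_k$.

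\emph{Negative flows.} Here I use $u_{-m-1}=R_mu_0+S_mu_{-1}$.

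For the row of $u_0$, the term $u_{-k}$ gives $R_{k-1}=O(\rd_x^{k-2})$ in the $u_0$ slot and $S_{k-1}=\rd_x^{k-1}+\cdots$ in the $u_{-1}$ slot. The $l=1$ summand $u_{0,-1}u_{-k}$ adds $u_{0,-1}\rd_x^{k-1}$ to the $u_{-1}$ slot. So $B_{-k}=(1+u_{0,-1})\rd_x^{k-1}+O(\rd_x^{k-2})$ and $A_{-k}=O(\rd_x^{k-2})$.

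For the row of $u_{-1}$, the term $u_{-k-1}$ gives $R_k=-(1-u_{-1,0})\rd_x^{k-1}+\cdots$ and $S_k=\rd_x^k+\cdots$. The sum involves only $u_{l-k-1}$ with indices $\ge -k$, so it contributes at most order $k-1$ to $D_{-k}$ and at most order $k-2$ to $C_{-k}$.

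The main obstacle is this order bookkeeping. In each entry one must check that the summation terms never reach the claimed leading order, except for the single boundary term ($l=0$ or $l=1$) that corrects the leading coefficient. This rests entirely on the order statements for $P_i,Q_i,R_i,S_i$ in the preceding proposition, together with the conventions for $i=0$ chosen above. Since the $O(\cdot)$ statements are only upper bounds, no cancellation issues arise beyond these boundary terms, so a careful index count suffices.
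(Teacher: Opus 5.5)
Your proposal is correct and follows essentially the paper's own route: substitute the expressions $u_i=P_iu_0+Q_iu_{-1}$, $u_{-i-1}=R_iu_0+S_iu_{-1}$ (and their barred versions) into the component equations for $u_0,u_{-1},\bar{u}_0,\bar{u}_{-1}$, then read off the leading orders. Your conventions $P_0=S_0=1$, $Q_0=R_0=0$ and your explicit treatment of the boundary summands ($l=0$ for $C_k$, $l=1$ for $B_{-k}$) make precise the order bookkeeping that the paper only sketches.
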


The equations (\ref{3:Psi-tpmk-eq}) may be 
thought of as auxiliary linear equations 
of the two-component KP hierarchy. 

\begin{remark}
To make a comparison with the usual formulation 
\cite{SS82,DJKM81,KvdL03} more precise, 
the $2 \times 2$ matrix-valued wave function 
should be chosen as 
\[
  \Psi_{2\times 2}(z) 
  = \begin{pmatrix}
    u_0(z) & \bar{u}_0(z^{-1})z^{-1}\\
    u_{-1}(z) & \bar{u}_{-1}(z^{-1})z^{-1}.
    \end{pmatrix}
\]
This matrix-valued wave function 
satisfies the same linear differential 
equations as (\ref{3:Psi-tpmk-eq}). 
On the other hand, in view of the block structure 
of $\eta^{(2)}$, it is also reasonable to choose 
the matrix-valued wave function as 
\[
  \Psi_{2\times 2}(z) 
  = \begin{pmatrix}
    u_{-1}(z) & \bar{u}_{-1}(z)\\
    u_0(z) & \bar{u}_0(z), 
    \end{pmatrix}
\]
for which the linear differential equations 
become 
\[
  \frac{\rd\Psi_{2\times 2}(z)}{\rd t_{\pm k}} 
  = \begin{pmatrix}
    D_{\pm k} & C_{\pm k} \\
    B_{\pm k} & A_{\pm k} 
    \end{pmatrix}
    \Psi_{2\times 2}(z). 
\]
\end{remark}

\subsection{Special solutions and wave functions}

The special solution (\ref{2:CM-uij}) 
of (\ref{2:U-tk-eq}) can be extended 
to a solution of (\ref{3:U-tpmk-eq}) 
by modifying $\rho(z)$ and $\sigma(z)$ 
to depend on both $t_k$'s and $t_{-k}$'s as 
\[
\begin{aligned}
  \rho(z) &= \exp\left(\sum_{k=1}^\infty 
            (t_kz^k + t_{-k}z^{-k})\right),\\
  \sigma(z) &= \exp\left(- \sum_{k=1}^\infty 
            (t_kz^k + t_{-k}z^{-k})\right). 
\end{aligned}
\]
The Cauchy matrix $\calM$ and the vectors $\bsr,\bss$ 
are thereby defined as 
\[
 \calM = \left(\frac{\sigma(\lambda_i)\rho(\kappa_j)}
         {\kappa_j - \lambda_i}\right)_{1\le i\le M,1\le j\le M}
\]
and 
\[
  \bsr = \left(\rho(\kappa_i)\right)_{i=1}^N,~~
  \bss = \left(\sigma(\lambda_i)\right)_{i=1}^M. 
\]

One can verify by direct calculations   
that the modified $u_{ij}$'s satisfy 
the extended system (\ref{3:U-tpmk-eq}). 
The following lemma plays a key role therein. 

\begin{lemma}
$\calM$, $\bsr$ and $\bss$ satisfy 
the differential equations 
\beq
\begin{aligned}
  \frac{\rd\calM}{\rd t_k} 
  &= \sum_{l=0}^{k-1}L^l\bss\tp{\bsr}K^{k-l-1},\\
  \frac{\rd\bsr}{\rd t_k} &= K^k\bsr,\\
  \frac{\rd\bss}{\rd t_k} &= L^k\bss
  \label{3:CM-tk-eq}
\end{aligned}
\eeq
for the positive flows and 
\beq
\begin{aligned}
  \frac{\rd\calM}{\rd t_{-k}} 
  &= - \sum_{l=1}^kL^{-l}\bss\tp{\bsr}K^{l-k-1},\\
  \frac{\rd\bsr}{\rd t_{-k}} &= K^{-k}\bsr,\\
  \frac{\rd\bss}{\rd t_{-k}} &= L^{-k}\bss
\end{aligned}
  \label{3:CM-tmk-eq}
\eeq
for the negative flows. 
\end{lemma}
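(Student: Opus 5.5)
The plan is to differentiate the explicit definitions entry by entry. Every object in the lemma is built from the two scalar exponentials $\rho(z)$ and $\sigma(z)$, so each equation should follow from an elementary divided-difference identity; no earlier structural result is needed.

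\emph{The vectors.} For $\bsr$ and $\bss$ the computation is immediate. Since
\[
\frac{\rd\rho(z)}{\rd t_{\pm k}} = z^{\pm k}\rho(z),
\]
we get $\rd\rho(\kappa_i)/\rd t_{\pm k} = \kappa_i^{\pm k}\rho(\kappa_i)$, which is $\rd\bsr/\rd t_{\pm k} = K^{\pm k}\bsr$. For $\bss$ one has $\sigma(z)=\rho(z)^{-1}$, so $\rd\sigma(z)/\rd t_{\pm k} = -z^{\pm k}\sigma(z)$. This gives $\rd\bss/\rd t_{\pm k} = -L^{\pm k}\bss$. That sign matches the plane-wave equation stated in Section 2, so I would read the $\bss$-equations in the lemma with this sign. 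I will track this sign carefully, because it is what makes the $\calM$-equations come out correctly: the $\calM$-formulas depend on the product $\sigma(\lambda_i)\rho(\kappa_j)$ having exponent difference $\kappa_j^{\pm k}-\lambda_i^{\pm k}$.

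\emph{The Cauchy matrix, positive flows.} Differentiating the $(i,j)$ entry gives
\[
\frac{\rd\calM_{ij}}{\rd t_k} = \frac{\kappa_j^k-\lambda_i^k}{\kappa_j-\lambda_i}\,\sigma(\lambda_i)\rho(\kappa_j).
\]
The denominator cancels by the geometric-sum identity $(\kappa^k-\lambda^k)/(\kappa-\lambda)=\sum_{l=0}^{k-1}\lambda^l\kappa^{k-l-1}$. The remaining term $\lambda_i^l\sigma(\lambda_i)\,\rho(\kappa_j)\kappa_j^{k-l-1}$ is exactly the $(i,j)$ entry of $L^l\bss\tp{\bsr}K^{k-l-1}$, because $K$ and $L$ are diagonal. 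This proves the first formula of (\ref{3:CM-tk-eq}). Equivalently, it follows from the Sylvester equation $\calM K - L\calM = \bss\tp{\bsr}$ by telescoping: $\sum_l L^l(\calM K-L\calM)K^{k-l-1}=\calM K^k - L^k\calM$. I would mention this as a cross-check.

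\emph{The Cauchy matrix, negative flows.} The same computation gives the factor $(\kappa^{-k}-\lambda^{-k})/(\kappa-\lambda)$. I would rewrite it as $-(\kappa^k-\lambda^k)/\bigl((\kappa-\lambda)\kappa^k\lambda^k\bigr)$, which equals $-\sum_{l=0}^{k-1}\lambda^{l-k}\kappa^{-l-1}$. Reindexing with $m=k-l\in\{1,\dots,k\}$ turns this into $-\sum_{m=1}^k\lambda^{-m}\kappa^{m-k-1}$, and that is the entry form of (\ref{3:CM-tmk-eq}). All $\kappa_i,\lambda_j$ are nonzero, so the negative powers make sense.

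The only step needing care is this reindexing for the negative flows, together with the consistent sign convention for $\bss$. Everything else is a one-line entrywise check.
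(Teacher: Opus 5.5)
Your proof is correct and follows the paper's own argument: you differentiate the entries $\sigma(\lambda_i)\rho(\kappa_j)/(\kappa_j-\lambda_i)$ of $\calM$, expand the divided difference $(\kappa_j^{\pm k}-\lambda_i^{\pm k})/(\kappa_j-\lambda_i)$ as a finite geometric sum, and read off the $\bsr,\bss$ equations directly from the definitions. You are also right about the sign: with $\sigma(z)$ as defined, one gets $\rd\bss/\rd t_{\pm k} = -L^{\pm k}\bss$, so the plus sign in the statement is a misprint (the minus sign is the one used in Section 2, in the AKNS case, and in the proof of the next proposition).
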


\proof
The derivatives of $\calM$ can be written as 
\[
  \frac{\rd\calM}{\rd t_{\pm k}} 
  = \left(
    \frac{\kappa_j^{\pm k} - \lambda_i^{\pm k}}{\kappa_j - \lambda_i}
    \sigma(\lambda_i)\rho(\kappa_j)\right)_{1\le i\le M,1\le j\le M}. 
\]
Applying the identities 
\[
  \frac{\kappa_j^k - \lambda_i^k}{\kappa_j - \lambda_i}
  = \sum_{l=0}^{k-1}\kappa_j^l\lambda_i^{k-l-1},~~
  \frac{\kappa_j^{\pm k} - \lambda_i^{\pm k}}{\kappa_j - \lambda_i}
  = - \sum_{l=1}^k\kappa_j^{-l}\lambda_i^{l-k-1} 
\]
to the right hand side, we obtain 
the differential equations of $\calM$. 
The differential equations of $\bsr$ and $\bss$ 
are an immediate consequence of the definition.  
\qed

\begin{prop}
The matrix $U = (u_{ij})_{i,j\in\ZZ}$ with 
the matrix elements 
\[
  u_{ij} 
  = \tp{\bsr}K^i(E_N + A\calM)^{-1}AK^j\bss 
  = \tp{\bsr}K^iA(E_M + \calM A)^{-1}K^j\bss
\]
satisfies the extended system (\ref{3:U-tpmk-eq}). 
\end{prop}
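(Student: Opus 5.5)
The plan is to verify the matrix-element equations (\ref{2:uij-tk-eq}) and (\ref{3:uij-tmk-eq}) directly; together they are equivalent to (\ref{3:U-tpmk-eq}). I read the factor $K^j$ in the statement as $L^j$, in agreement with (\ref{2:CM-uij}), since $\bss$ is an $M$-vector. Set
\[
  \Phi = (E_N + A\calM)^{-1}A = A(E_M + \calM A)^{-1},
\]
an $N\times M$ matrix, so that $u_{ij} = \tp{\bsr}K^i\Phi L^j\bss$ for all $i,j\in\ZZ$. Negative powers make sense because all $\kappa_i,\lambda_j$ are nonzero. The only analytic input is the standard formula
\[
  \frac{\rd\Phi}{\rd t} = -\Phi\frac{\rd\calM}{\rd t}\Phi .
\]
It follows by differentiating $(E_N + A\calM)\Phi = A$ with $A$ constant.

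For a positive flow $t_k$, I differentiate $u_{ij}$ by the Leibniz rule, which gives three groups of terms. The derivative of $\tp{\bsr}$ gives $\tp{\bsr}K^{i+k}\Phi L^j\bss = u_{i+k,j}$, since $K$ is diagonal. The derivative of $\bss$ gives $\mp u_{i,j+k}$, depending on its sign. The derivative of $\Phi$ is handled with the lemma's formula for $\rd\calM/\rd t_k$:
\[
  -\sum_{l=0}^{k-1}\tp{\bsr}K^i\Phi L^l\bss\,\tp{\bsr}K^{k-l-1}\Phi L^j\bss
  = -\sum_{l=0}^{k-1}u_{il}u_{k-l-1,j}.
\]
This is exactly the quadratic term in (\ref{2:uij-tk-eq}). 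For a negative flow $t_{-k}$ the same three-term computation applies. The derivative of $\tp{\bsr}$ gives $u_{i-k,j}$, and the derivative of $\bss$ gives $\mp u_{i,j-k}$. The lemma's formula for $\rd\calM/\rd t_{-k}$, with its minus sign, turns $-\Phi\,(\rd\calM/\rd t_{-k})\,\Phi$ into $+\sum_{l=1}^k u_{i,-l}u_{l-k-1,j}$. This is the quadratic term of (\ref{3:uij-tmk-eq}).

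The step I expect to be the main obstacle is the sign bookkeeping. The definition $\sigma(z) = \exp(-\sum(t_kz^k + t_{-k}z^{-k}))$ gives $\rd\bss/\rd t_{\pm k} = -L^{\pm k}\bss$, not $+L^{\pm k}\bss$ as printed in the lemma. The minus sign is needed to produce $-u_{i,j\pm k}$. I would therefore recompute $\rd\calM/\rd t_{\pm k}$ directly from the entries $\sigma(\lambda_i)\rho(\kappa_j)/(\kappa_j-\lambda_i)$. Its numerator is $\kappa_j^{\pm k}-\lambda_i^{\pm k}$, which with the correct sign of $\rd\bss/\rd t_{\pm k}$ is consistent with the Sylvester equation $\calM K - L\calM = \bss\tp{\bsr}$. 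I would then apply the geometric-sum identities: for $+k$ the factorization $\sum_{l=0}^{k-1}\kappa^l\lambda^{k-l-1}$, and for $-k$ the identity
\[
  \frac{\kappa^{-k}-\lambda^{-k}}{\kappa-\lambda} = -\sum_{l=1}^k\kappa^{-l}\lambda^{l-k-1}.
\]
The index placement must be checked carefully: $\lambda$-powers sit next to $\bss$ and $\kappa$-powers next to $\tp{\bsr}$. This matters so that the summation matches the ranges of $\calO_{\pm k}$ exactly. Once these signs are fixed, the verification is purely algebraic and holds for all $i,j\in\ZZ$ at once.
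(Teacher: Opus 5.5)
Your proposal is correct and takes the same route as the paper's proof. You differentiate $u_{ij}$ by the Leibniz rule and use the lemma's formulas for $\rd\calM/\rd t_{\pm k}$, $\rd\bsr/\rd t_{\pm k}$ and $\rd\bss/\rd t_{\pm k}$. This produces $u_{i\pm k,j}$, $-u_{i,j\pm k}$, and the quadratic term coming from $-\Phi(\rd\calM/\rd t_{\pm k})\Phi$.

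Your two corrections of misprints are also right and needed for the paper's computation to close. The statement's $K^j$ must be $L^j$, and the definition of $\sigma$ forces $\rd\bss/\rd t_{\pm k} = -L^{\pm k}\bss$ rather than $+L^{\pm k}\bss$ as printed in the lemma.
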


\proof 
Using (\ref{3:CM-tk-eq}), we can calculate 
the $t_k$-derivative of $u_{ij}$ as follows: 
\[
\begin{aligned}
  \frac{\rd u_{ij}}{\rd t_k} 
  &= \frac{\rd\tp{\bsr}}{\rd t_k}K^i(E_N + A\calM)^{-1}AK^j\bss
   + \tp{\bsr}K^i(E_N + A\calM)^{-1}AK^j\frac{\rd\bss}{\rd t_k}\\
  &~~~~\mbox{} 
   - \tp{\bsr}K^iA(E_N + \calM A)^{-1}\frac{\rd\calM}{\rd t_k}
     (E_N + \calM A)^{-1}AL^j\bss \\
  &= \tp{\bsr}K^{i+k}(E_N + A\calM)^{-1}AK^j\bss 
     - \tp{\bsr}K^i(E_N + A\calM)^{-1}AK^{j+k}\bss \\
  &~~~~\mbox{} 
     - \sum_{l=0}^{k-1}\tp{\bsr}K^iA(E_N + \calM A)^{-1}
       K^l\bss\tp{\bsr}K^{k-l-1}(E_N + \calM A)^{-1}AL^j\bss \\
  &= u_{i+k,j} - u_{i,j+k} - \sum_{l=0}^{k-1}u_{il}u_{k-l-1,j}. 
\end{aligned}
\]
This is exactly the equations (\ref{3:U-tpmk-eq}) 
for the positive flows.  In much the same way 
using (\ref{3:CM-tmk-eq}), we can derive 
the equations (\ref{3:U-tpmk-eq}) 
for the negative flows. 
\qed

As we have partially seen in the previous section, 
the wave functions $u_i(z),\bar{u}_i(z)$ 
of this solution exhibits an interesting feature. 
These wave functions can be calculated 
with the aid of the geometric series expansions 
of $(zE- L)^{-1}$, namely, 
\[
  (zE_M - L)^{-1} = \sum_{j=0}^\infty L^jz^{-j-1} 
\]
for $|z| > \max\{|\lambda_i| \mid 1\le i\le M\}$ 
and 
\[
  (zE_M - L)^{-1} = -\sum_{j=0}^\infty L^{-j-1}z^j 
\]
for $|z| < \min\{|\lambda_i| \mid 1\le i\le M\}$.  
We are thus led to the apparently identical results
\beq
  u_i(z) 
  = \left(z^i - \tp{\bsr}K^i(E_N + A\calM)^{-1}A 
          (zE_M - L)^{-1}\bss\right)\rho(z) 
  \label{3:CM-ui(z)}
\eeq
\beq
  \bar{u}_i(z) 
  = \left(z^i - \tp{\bsr}K^i(E_N + A\calM)^{-1}A 
          (zE_M - L)^{-1}\bss\right)\rho(z), 
  \label{3:CM-ubari(z)}
\eeq
but this means that $u_i(z)$ and $\bar{u}_i(z)$ 
are local expressions of the common globally 
defined function on the right hand side 
of these equalities.  In other words, 
$u_i(z)$ and $\bar{u}_i(z)$ are 
\textit{analytic continuation} of each other.

\section{AKNS and ASDYM hierarchies} 

\subsection{Constrained multi-component $U$-matrix}

Direct linearization of the AKNS and ASDYM 
hierarchies employs the \textit{constrained} 
and \textit{multi-component} $U$-matrix 
\[
  U = (u_{ij})_{i,j\in\ZZ}
\]
that consists of $r \times r$ blocks 
($r = 2$ for the AKNS hierarchy)
\[
  u_{ij} = (u_{ij,\alpha\beta})_{\alpha,\beta=1}^r, 
\]
and satisfies the constraint 
\beq
  (E_r\otimes\Lambda)U - U(E_r\otimes\Lambda^{-1}) 
  - U(E_r\otimes\calO_1)U = 0. 
  \label{4:U-constraint}
\eeq
For an $r \times r$ matrix $A$ and a $\ZZ\times\ZZ$ 
matrix $B = (b_{ij})_{i,j\in\ZZ}$, multiplication by 
$A\otimes B$ means that 
\[
  (A\otimes B)U = \left(\sum_{k\in\ZZ}b_{ik}Au_{kj}\right), ~~
  U(A\otimes B) = \left(\sum_{k\in\ZZ}u_{ik}Ab_{kj}\right). 
\]
The constraint (\ref{4:U-constraint}) thus becomes 
the quadratic relations 
\beq
  u_{i+1,j} - u_{i,j+1} - u_{i,0}u_{0,j} = 0, 
  ~~ \text{for $i,j \in \ZZ$}. 
  \label{4:uij-constraint}
\eeq
This constraint and its higher rank analogues 
are known to show up in the Cauchy matrix approach 
to special solutions as well 
\cite{Zhao18,LQYZ22,LQZ23}. 

The quarter $u_{ij}$, $i,j\ge 0$, of $u_{ij}$'s 
can be identified with $r \times r$ matrix-valued 
affine coordinates $w_{ij}$ of the top cell 
of the $r$-component Sato Grassmannian $\Gr^{(r)}$ 
\cite{SS82,DJKM81,KvdL03}. 
Its general point is represented by the matrices 
\[
  \xi = \left(\begin{array}{c}
         \delta_{ij}E_r\\\hline 
         w_{i,-j-1}
         \end{array}\right)_{i\in\ZZ,j\in\ZZ_{<0}},~~
  \eta = \left(\begin{array}{c|c}
         - w_{i,-j-1} & \delta_{ij}E_r 
         \end{array}\right)_{i\in\ZZ_{\ge 0},j\in\ZZ}
\]
consisting of $r \times r$ blocks. 
The blocks $w_{ij}$ are required 
to satisfy the constraint 
\beq
  w_{i+1,j} - w_{i,j+1} - w_{i,0}w_{0,j} = 0 
  ~~\text{for $i,j \ge 0$}, 
  \label{4:wij-constraint}
\eeq
which can be rewritten as 
\beq
  (E_r\otimes\Lambda)\xi = \xi\calB,~~
  \eta(E_r\otimes\Lambda) = \calC\eta, 
  \label{4:xieta-constraint}
\eeq
where $\calB$ and $\calC$ take 
substantially the same form as 
$\calB_1$ and $\calC_1$ in (\ref{2:xi-tk-eq}) 
and (\ref{2:eta-tk-eq}) except that 
the `matrix elements' become $r \times r$ blocks.  

The affine coordinates $w_{ij}$ are not independent 
under the constraint (\ref{4:wij-constraint}).  
One can choose $w_{i0}$'s or $w_{0j}$'s as 
free parameters; the other $w_{ij}$'s 
are uniquely determined by these parameters.  
This fact can be stated in a more explicit form 
as follows.  Let $G(z,y)$ denote the two-variate 
generating function 
\[
  G(z,y) = \sum_{i,j\ge 0}w_{ij}y^{-i-1}z^{-j-1}
\]
of $w_{ij}$'s and $W(z)$ the one-variate 
generating function 
\[
  W(z) = E_r - \sum_{j=0}^\infty w_{0j}z^{-j-1}
\]
of $w_{0j}$'s. 

\begin{prop}[\cite{Takasaki89RMP,Takasaki84CMP}]
Under the constraint (\ref{4:wij-constraint}), 
$G(z,y)$ can be expressed by $W(z)$ and $W(y)^{-1}$ as 
\beq
  G(z,y) = \frac{W(y)^{-1}W(z) - E_r}{z - y}. 
  \label{4:G-W-rel}
\eeq
Moreover, $W(z)^{-1}$ can be written as 
\[
  W(z)^{-1} = E_r + \sum_{i=0}^\infty w_{i0}z^{-i-1}. 
\]
Conversely, given an arbitrary power series $W(z)$ 
of $z^{-1}$ of the foregoing form, the coefficients 
$w_{ij}$ of the generating function (\ref{4:G-W-rel}) 
satisfy the constraint (\ref{4:wij-constraint}). 
\end{prop}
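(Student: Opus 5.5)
The plan is to translate the constraint (\ref{4:wij-constraint}) into a single functional equation for $G(z,y)$ and then extract everything from it. Multiply (\ref{4:wij-constraint}) by $y^{-i-1}z^{-j-1}$ and sum over $i,j\ge 0$, being careful that all products keep the order $w_{i,0}w_{0,j}$, since the blocks are $r\times r$ matrices.

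\textbf{The three sums.}
\begin{itemize}
\item The first sum is $\sum w_{i+1,j}y^{-i-1}z^{-j-1} = yG(z,y) - \sum_{j\ge0}w_{0j}z^{-j-1}$, which equals $yG(z,y) + W(z) - E_r$.
\item The second sum is $zG(z,y) - \sum_{i\ge0} w_{i0}y^{-i-1}$.
\item Writing $V(y) = E_r + \sum_{i\ge 0}w_{i0}y^{-i-1}$, the quadratic term is $\bigl(V(y)-E_r\bigr)\bigl(E_r - W(z)\bigr)$.
\end{itemize}

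\textbf{The identity.} Collecting terms, the constraint is equivalent to
\[
  (y - z)G(z,y) + W(z) - E_r + \bigl(V(y)-E_r\bigr) - \bigl(V(y)-E_r\bigr)\bigl(E_r-W(z)\bigr) = 0 .
\]
The constant-free pieces combine to $V(y)W(z) - E_r$, so the constraint becomes
\[
  (z-y)G(z,y) = V(y)W(z) - E_r .
\]
This equivalence is at the level of formal series in $y^{-1},z^{-1}$ and holds coefficientwise, so it runs in both directions.

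\textbf{Identifying $V$ with $W^{-1}$.} The key point is to show $V(y)=W(y)^{-1}$. The left side of the identity vanishes on the diagonal $z=y$, so the right side must as well. The subtle step, and the main obstacle, is to make ``set $z=y$'' legitimate in the formal setting. Both $G(z,y)$ and $V(y)W(z)$ are series in the negative powers $y^{-1},z^{-1}$, with $G$ having only terms $y^{-i-1}z^{-j-1}$. The product $(z-y)G$ is therefore a well-defined double series, and the substitution $z=y$ is a well-defined operation on such series, since each power of $y^{-1}$ receives finitely many contributions. Substituting gives $V(y)W(y)=E_r$.

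Both $V$ and $W$ have leading term $E_r$, so they are invertible as formal series. Hence $V=W^{-1}$, which yields both the stated expansion of $W(z)^{-1}$ and formula (\ref{4:G-W-rel}).

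\textbf{The converse.} Given an arbitrary $W(z)=E_r-\sum_{j\ge 0} w_{0j}z^{-j-1}$, define $G$ by (\ref{4:G-W-rel}). The numerator $W(y)^{-1}W(z)-E_r$ vanishes at $z=y$, so it is divisible by $z-y$ in the ring of formal series in $y^{-1},z^{-1}$. Concretely, $z^{-n}-y^{-n} = -(z-y)\sum_{a+b=n-1} z^{-a-1}y^{-b-1}$, and this shows the quotient has only terms $y^{-i-1}z^{-j-1}$ with $i,j\ge0$. It therefore defines coefficients $w_{ij}$.

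Setting $V=W^{-1}$, the functional identity holds. We still need to check that the $w_{0j}$ and $w_{i0}$ read off from $G$ agree with those of $W$ and $W^{-1}$. This follows by comparing leading coefficients: multiply (\ref{4:G-W-rel}) by $z$ and take the coefficient of $z^0$ for $w_{i0}$, and symmetrically in $y$ for $w_{0j}$. Reading the coefficient computation above backwards then gives (\ref{4:wij-constraint}).
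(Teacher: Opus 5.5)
Your proof is correct. The paper does not prove this proposition; it quotes it from \cite{Takasaki89RMP,Takasaki84CMP}. The useful comparison is therefore with the argument native to that Grassmannian framework, which the paper itself echoes later in (\ref{4:ui(z)-u0(z)}).

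In that route one reads (\ref{4:wij-constraint}) through (\ref{4:xieta-constraint}). The rows of $\eta$, namely $\eta_i(z)=E_rz^i-\sum_{j\ge 0}w_{ij}z^{-j-1}$, satisfy $z\eta_i(z)=\eta_{i+1}(z)-w_{i0}\eta_0(z)$. Hence the point of $\Gr^{(r)}$ is the $\CC[z]$-module generated by $\eta_0=W$, and $\eta_i=(z^iW^{-1})_{\ge 0}W$, which is the analogue of $W_i=(\rd_x^iW^{-1})_{\ge 0}W$ in Appendix A. Summing with $\sum_{i\ge 0}y^{-i-1}(z^if(z))_{\ge 0}=f(y)/(y-z)$ gives (\ref{4:G-W-rel}) directly. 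The converse follows from $(z^{i+1}W^{-1})_{\ge 0}=z(z^iW^{-1})_{\ge 0}+w_{i0}$, where $w_{i0}$ is the coefficient of $z^{-i-1}$ in $W^{-1}$.

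That route builds $W^{-1}$ in from the start and shows the constraint as $z$-invariance of the subspace. Yours is more elementary and self-contained: the single identity $(z-y)G=V(y)W(z)-E_r$ is equivalent to the constraint coefficient by coefficient, so it handles both directions at once, and $V=W^{-1}$ is discovered by restricting to the diagonal.

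Two small points are worth a line each in a write-up.
\begin{itemize}
\item Restriction to $z=y$ is a ring homomorphism only on series in nonpositive powers, and $z-y$ is not such a series. So the vanishing of $(z-y)G$ on the diagonal should be checked termwise. Each term gives $(z-y)y^{-i-1}z^{-j-1}=y^{-i-1}z^{-j}-y^{-i}z^{-j-1}$, which restricts to $0$, and only finitely many terms contribute to each power of $y$.
\item For ``the coefficients of (\ref{4:G-W-rel})'' to be well defined, the quotient should be unique. It is, because multiplication by $z-y$ is injective on series $\sum_{i,j\ge 0}c_{ij}y^{-i-1}z^{-j-1}$: the $z^0$-coefficient forces $c_{i0}=0$, and the relation $c_{i,j+1}=c_{i+1,j}$ propagates this to all $c_{ij}$. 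Your derivation of the constraint does not actually need uniqueness.
\end{itemize}
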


Another approach to $\Gr^{(r)}$ can be achieved 
by using following analogue of (\ref{2:g-def}): 
\[
  g = E - U(E_r\otimes\Omega) 
    = \left(\begin{array}{c|c}
    \delta_{ij}E_r - u_{i,-j-1} & 0 \\\hline
    - u_{i,-j-1} & \delta_{ij}E_r 
    \end{array}\right). 
\]
This matrix represents a point of $\Gr^{(r)}$ 
in the coset space description 
\[
  \Gr^{(r)} 
  \simeq \calP^{(r)}\backslash\GL(\ZZ^{\sqcup r}), ~~
  \ZZ^{\sqcup r} = \ZZ\sqcup\cdots\sqcup\ZZ ~ 
  (\text{$r$-fold disjoint union}), 
\]
with a maximal parabolic subgroup $\calP^{(r)}$ 
of $\GL(\ZZ^{\sqcup r})$. 
The foregoing matrix $\eta$ is embedded 
in the lower half of this matrix.  
In much the same way as the proof of (\ref{2:g-tk-eq}), 
one can confirm that the constraint (\ref{4:U-constraint}) 
implies the constraint 
\beq
  (E_r\otimes\Lambda - U(E_r\otimes\calO_1))g 
  = g(E_r\otimes\Lambda) 
  \label{4:g-constraint}
\eeq
for the foregoing matrix $g$.  
The multiplier on the left side 
has the block structure 
\[
  E_r\otimes\Lambda - U(E_r\otimes\calO_1)
  = \left(\begin{array}{c|c}
    * & * \\\hline
    0 & \calC  
    \end{array}\right) 
\]
and belongs to the Lie algebra of $\calP^{(r)}$.

\subsection{Implications of constraint for wave functions}

The foregoing matrix $g$ generates 
the `vector-valued' wave function 
\[
  \bsu(z) = (u_i(z))_{i\in\ZZ} = g\bsc(z)\rho(z), 
\]
where $\bsc(z)$ is redefined as 
\[
  \bsc(z) = (E_rz^i)_{i\in\ZZ}, 
\]
and $\rho(z)$ is a system-dependent $r \times r$ 
matrix to be specified later.  
$\bsu(z)$ is an $\infty\times r$ matrix 
with $r \times r$ blocks $u_i(z)$, $i \in \ZZ$.  
More explicitly, 
\[
  u_i(z) 
  = \left(E_rz^i - \sum_{j=0}^\infty u_{ij}z^{-j-1}\right)\rho(z). 
\]
Although written in an apparently uniform form, 
$u_i(z)$'s have different features 
for the cases of $i \ge 0$ and $i < 0$. 
Representatives of these two cases 
are $u_0(z)$ and $u_{-1}(z)$: 
\[
\begin{aligned}
  u_0(z) 
  &= \left(E_r - \sum_{j=0}^\infty u_{0j}z^{-j-1}
     \right)\rho(z),\\
  u_{-1}(z) 
  &= \left((E_r - u_{-1,0})z^{-1} 
     - \sum_{j=1}^\infty u_{-1,j}z^{-j-1}\right)\rho(z). 
\end{aligned}
\]
The foregoing generating function $W(z)$ of $w_{0j}$'s 
amounts to the amplitude part of $u_0(z)$: 
\beq
  u_0(z) = W(z)\rho(z). 
  \label{4:u0(z)-W(z)}
\eeq

The constraint (\ref{4:uij-constraint}) 
for $u_{ij}$'s imply the linear relations 
\beq
  u_{i+1}(z) - zu_i(z) - u_{i0}u_0(z) = 0 
  \label{4:ui(z)-constraint}
\eeq
among $u_i(z)$'s.  We can thereby express 
$u_i(z)$ and $u_{-i}(z)$ for $i \ge 1$ 
in terms of $u_0(z)$ as follows. 

\begin{prop}
$u_i(z)$ and $u_{-i}(z)$ for $i \ge 1$ 
can be expressed as 
\beq
  u_i(z) = \left(E_rz^i 
    + \sum_{l=0}^{i-1}u_{l,0}z^{i-l-1}\right)u_0(z)
  \label{4:ui(z)-u0(z)}
\eeq
and
\beq
  u_{-i}(z) = \left((E_r - u_{-1,0})z^{-i} 
     - \sum_{l=2}^i u_{-l,0}z^{l-i-1}\right)u_0(z)
  \label{4:u(-i)(z)-u0(z)}. 
\eeq
\end{prop}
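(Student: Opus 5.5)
The plan is to read the linear relation (\ref{4:ui(z)-constraint}) as a two-sided recursion in the index $i$. Run forward, it gives $u_i(z)$ for $i\ge 1$; solved for $u_i(z)$, it runs backward and gives $u_{-i}(z)$. Both formulas then follow by induction on $i$.

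First I would confirm (\ref{4:ui(z)-constraint}) directly from the expansion $u_i(z)=\bigl(E_rz^i-\sum_{j\ge 0}u_{ij}z^{-j-1}\bigr)\rho(z)$. Multiplying by $z$ gives
\[
zu_i(z)=\Bigl(E_rz^{i+1}-u_{i0}-\sum_{j\ge 0}u_{i,j+1}z^{-j-1}\Bigr)\rho(z).
\]
Hence
\[
u_{i+1}(z)-zu_i(z)=\Bigl(u_{i0}-\sum_{j\ge 0}(u_{i+1,j}-u_{i,j+1})z^{-j-1}\Bigr)\rho(z).
\]
By (\ref{4:uij-constraint}) we have $u_{i+1,j}-u_{i,j+1}=u_{i0}u_{0j}$, so the right-hand side equals $u_{i0}\bigl(E_r-\sum_j u_{0j}z^{-j-1}\bigr)\rho(z)=u_{i0}u_0(z)$. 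Throughout, the $r\times r$ coefficients act by left multiplication, and $z$ is a scalar that commutes with everything, so no ordering issues arise.

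For (\ref{4:ui(z)-u0(z)}) I would induct upward. For $i=1$, the recursion gives $u_1(z)=(E_rz+u_{00})u_0(z)$, which is the formula. Assuming the formula for $i$, the recursion gives
\[
u_{i+1}(z)=z\Bigl(E_rz^i+\sum_{l=0}^{i-1}u_{l0}z^{i-l-1}\Bigr)u_0(z)+u_{i0}u_0(z)=\Bigl(E_rz^{i+1}+\sum_{l=0}^{i}u_{l0}z^{i-l}\Bigr)u_0(z),
\]
which is the formula for $i+1$.

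For (\ref{4:u(-i)(z)-u0(z)}) I would rewrite the recursion as $u_{-i-1}(z)=z^{-1}\bigl(u_{-i}(z)-u_{-i-1,0}u_0(z)\bigr)$ and induct downward. The base case comes from setting the index to $-1$: then $u_0(z)=zu_{-1}(z)+u_{-1,0}u_0(z)$, so $u_{-1}(z)=(E_r-u_{-1,0})z^{-1}u_0(z)$, which is the formula for $i=1$ with an empty sum. For the inductive step, multiplying the formula for $u_{-i}$ by $z^{-1}$ shifts every power down by one. Subtracting $u_{-i-1,0}z^{-1}u_0(z)$ then supplies exactly the new term $l=i+1$, since $z^{(i+1)-(i+1)-1}=z^{-1}$, so the formula holds for $i+1$.

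The main point needing care is the backward step: it divides by $z$, and one must check that this is legitimate. It is, because everything is a formal Laurent series in $z$ multiplied by $\rho(z)$. Beyond that, the only obstacle is bookkeeping of the index ranges in the sums, which the checks above handle.
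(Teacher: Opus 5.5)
Your proposal is correct and follows the paper's proof. The paper also iterates (\ref{4:ui(z)-constraint}) forward for $u_i(z)$, and backward for $u_{-i}(z)$ after rewriting it as $u_{i-1}(z)=z^{-1}u_i(z)-z^{-1}u_{i-1,0}u_0(z)$; your only additions are an explicit derivation of (\ref{4:ui(z)-constraint}) from (\ref{4:uij-constraint}), which the paper just asserts, and a formal induction in place of the paper's ``etc.''
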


\proof
We can calculate $u_i(z)$'s recursively  
by using (\ref{4:ui(z)-constraint}) as 
\[
\begin{aligned}
  u_1(z) 
  &= zu_0(z) + u_{00}u_0(z)\\
  &= (E_rz + u_{00})u_0(z),\\
  u_2(z) 
  &= zu_1(z) + u_{10}u_0(z) \\
  &= z(E_rz + u_{00}u_0(z)) + u_{10}u_0(z) \\
  &= \left(E_rz^2 + u_{00}z + u_{10}\right)u_0(z),
    ~~ \text{etc}. 
\end{aligned}
\]
This leads to (\ref{4:ui(z)-u0(z)}). 
In much the same way, $u_{-i}(z)$'s 
can be obtained by rewriting 
(\ref{4:ui(z)-constraint}) into the form 
\[
  u_{i-1}(z) = z^{-1}u_i(z) - z^{-1}u_{i-1,0}u_0(z) 
\]
as 
\[
\begin{aligned}
  u_{-1}(z) 
  &= z^{-1}u_0(z) - z^{-1}u_{-1,0}u_0(z)\\
  &= (E_r - u_{-1,0})z^{-1}u_0(z),\\
  u_{-2}(z) 
  &= z^{-1}u_{-1}(z) - z^{-1}u_{-2,0}u_0(z) \\
  &= (E_r - u_{-1,0})z^{-2}u_0(z) - z^{-1}u_{-2,0}u_0(z) \\
  &= \left((E_r - u_{-1,0})z^{-2} - u_{-2,0}z^{-1}\right)u_0(z), 
     ~~ \text{etc}. 
\end{aligned}
\]
This yields (\ref{4:u(-i)(z)-u0(z)}). 
\qed

We can also introduce the analogue 
\[
  \bar{g} 
  = E - U(E_1\otimes\bar{\Omega}) 
  = \left(\begin{array}{c|c}
    \delta_{ij}E_r & u_{i,-j-1} \\\hline
    0 & \delta_{ij}E_r + u_{i,-j-1} 
    \end{array}\right). 
\]
of (\ref{3:gbar-def}).  As in the case of $g$, 
the constraint (\ref{4:U-constraint}) 
for $U$ implies the constraint 
\beq
  \left(E_r\otimes\Lambda 
  - U(E_r\otimes\calO_1)\right)\bar{g} 
  = \bar{g}(E_r\otimes\Lambda)
  \label{4:gbar-constraint}
\eeq
for this matrix as well.  
This matrix $\bar{g}$ generates 
the second set $\bar{u}_i(z)$, $i \in \ZZ$, 
of wave functions as 
\[
  \bar{\bsu}(z) = (\bar{u}_i(z))_{i\in\ZZ} 
  = \bar{g}\bsc(z)\rho(z). 
\]
In particular, we have 
\[
\begin{aligned}
  \bar{u}_0(z) &= \left((E_r + u_{0,-1}) 
    + \sum_{j=1}^\infty u_{0,-j-1}z^j\right)\rho(z),\\
  \bar{u}_{-1}(z) &= \left(E_rz^{-1}
    + \sum_{j=0}^\infty\bar{u}_{-1,-j-1}z^j\right)\rho(z).
\end{aligned}
\]
The constraints (\ref{4:uij-constraint}) 
for $u_{ij}$'s imply the linear relations 
\beq
  \bar{u}_{i+1}(z) - z\bar{u}_i(z) 
  - u_{i0}(z)\bar{u}_0(z) = 0 
  \label{4:ubari(z)-constraint}
\eeq
among $\bar{u}_i(z)$'s.  Since they take 
the same form as the relations 
(\ref{4:ui(z)-constraint}) among $u_i(z)$'s, 
we can deduce the following fact 
in the same way as the derivation 
of (\ref{4:ui(z)-u0(z)}) and (\ref{4:u(-i)(z)-u0(z)}).   

\begin{prop}
$\bar{u}_i(z)$ and $\bar{u}_{-i}(z)$, $i \ge 1$, 
can be expressed as 
\beq
  \bar{u}_i(z) = \left(E_rz^i 
    + \sum_{l=0}^{i-1}u_{l0}z^{i-l-1}\right)\bar{u}_0(z)
  \label{4:ubari(z)-ubar0(z)}
\eeq
and 
\beq
  \bar{u}_{-i}(z) = \left((E_r - u_{-1,0})z^{-i} 
    - \sum_{l=2}^i u_{-l,0}z^{l-i-1}\right)\bar{u}_0(z). 
  \label{4:ubar(-i)(z)-ubar0(z)}
\eeq
\end{prop}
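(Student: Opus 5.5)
The plan is to mimic the proof of (\ref{4:ui(z)-u0(z)}) and (\ref{4:u(-i)(z)-u0(z)}), with the linear relations (\ref{4:ubari(z)-constraint}) playing the role of (\ref{4:ui(z)-constraint}). The first step is to justify (\ref{4:ubari(z)-constraint}) itself. Multiply the constraint (\ref{4:gbar-constraint}) on the right by $\bsc(z)\rho(z)$. Since $(E_r\otimes\Lambda)\bsc(z) = z\bsc(z)$, the right-hand side becomes $z\bar{\bsu}(z)$. The left-hand side is $(E_r\otimes\Lambda)\bar{\bsu}(z) - U(E_r\otimes\calO_1)\bar{\bsu}(z)$. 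Because $\calO_1 = E_{00}$, its $i$-th block row is $\bar{u}_{i+1}(z) - u_{i0}\bar{u}_0(z)$, which gives (\ref{4:ubari(z)-constraint}).

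Second, (\ref{4:gbar-constraint}) should be verified as in the proof of (\ref{2:g-tk-eq}). Compute $(E_r\otimes\Lambda)\bar{g} - \bar{g}(E_r\otimes\Lambda)$ with $\bar{g} = E - U(E_r\otimes\bar{\Omega})$. This equals $-(E_r\otimes\Lambda)U(E_r\otimes\bar{\Omega}) + U(E_r\otimes\bar{\Omega}\Lambda)$. Substitute $(E_r\otimes\Lambda)U = U(E_r\otimes\Lambda^{-1}) + U(E_r\otimes\calO_1)U$ from (\ref{4:U-constraint}). Then use (\ref{3:OmLam-rel}) with $k=1$, namely $\calO_1 = \bar{\Omega}\Lambda - \Lambda^{-1}\bar{\Omega}$, to collect the terms into $U(E_r\otimes\calO_1)\bar{g}$. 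The only point to watch is that the $r\times r$ blocks do not commute. The scalar factors sit in the second tensor slot, so the ordering $U\cdot(\text{scalar})\cdot U$ is preserved and the scalar computation carries over verbatim.

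Third, run the recursion. For $i\ge0$, write $\bar{u}_{i+1}(z) = z\bar{u}_i(z) + u_{i0}\bar{u}_0(z)$ and induct, exactly as for $u_i(z)$; this gives (\ref{4:ubari(z)-ubar0(z)}). For negative indices, rewrite the relation as $\bar{u}_{i-1}(z) = z^{-1}\bar{u}_i(z) - z^{-1}u_{i-1,0}\bar{u}_0(z)$. Starting from $i=0$, this gives $\bar{u}_{-1}(z) = (E_r - u_{-1,0})z^{-1}\bar{u}_0(z)$, and induction on $i$ yields (\ref{4:ubar(-i)(z)-ubar0(z)}). Here $z$ is a scalar and commutes with everything, while the coefficient matrices $u_{l0}$ stay to the left of $\bar{u}_0(z)$, consistent with the stated formulas.

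The main obstacle is the second step, where one must check that the matrix identity (\ref{3:OmLam-rel}) for $\bar{\Omega}$ holds with $k=1$ and combines correctly with the noncommutative blocks. The recursion itself is routine. A further subtlety is that $\bar{u}_i(z)$ are series in nonnegative powers of $z$, so multiplying by $z^{-1}$ must be understood formally. However, the recursion only uses the exact algebraic relation (\ref{4:ubari(z)-constraint}), so no convergence issue arises.
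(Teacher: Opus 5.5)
Your proposal is correct and follows the paper's route: the paper invokes the linear relations (\ref{4:ubari(z)-constraint}), which come from the constraint (\ref{4:gbar-constraint}), and runs the same forward and backward recursion used for $u_i(z)$. Your explicit checks of (\ref{4:gbar-constraint}) via $\calO_1=\bar{\Omega}\Lambda-\Lambda^{-1}\bar{\Omega}$, and of the component relation via $\calO_1=E_{00}$, fill in steps that the paper only asserts.
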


\subsection{Direct linearization of AKNS hierarchy}

In the following, $r$ is specialized to $r = 2$. 
Let $\bst = (t_1,t_2,\cdots)$ be the set 
of time variables and $a$ a $2 \times 2$ 
semi-simple non-zero constant matrix, 
e.g., 
\[
  a = \sigma_3 = \diag(1,-1). 
\]
Direct linearization of the AKNS hierarchy 
is based on the evolution equations 
\beq
  \frac{\rd U}{\rd t_k} 
  = (a\otimes\Lambda^k)U - U(a\otimes\Lambda^{-k}) 
    - U(a\otimes\calO_k)U, ~~
  k = 1,2,\ldots
  \label{4:U-tk-eq}
\eeq
for the matrix $U$ under the constraint 
(\ref{4:U-constraint}).  

As in the case of the KP hierarchy, 
these equations can be transformed 
to the linear equations 
\beq
  \frac{\rd C}{\rd t_k} 
  = (a\otimes\Lambda^k)C - C(a\otimes\Lambda^{-k})
  \label{4:C-tk-eq}
\eeq
for the matrix $C = (c_{ij})_{i,j\in\ZZ}$ 
of $2 \times 2$ blocks $c_{ij}$ by the relation 
\[
  U = (E - U(E_2\otimes\Omega))C
\]
between $U$ and $C$.  The constraint 
(\ref{4:U-constraint}) for $U$ 
turns into the constraint 
\beq
  (E_2\otimes\Lambda)C - C(E_2\otimes\Lambda) = 0 
  \label{4:C-constraint}
\eeq
for $C$. The linearized equation (\ref{4:C-tk-eq}) 
can be solved as 
\[
  C = \rho(\Lambda)C(\bszero)\sigma(\Lambda), 
  ~~ C(\bszero) = C|_{\bst=\bszero}, 
\]
where 
\[
  \rho(\Lambda) 
  = \exp\left(\sum_{k=1}^\infty t_k(a\otimes\Lambda^k)\right), ~~
  \sigma(\Lambda) 
  = \exp\left(-\sum_{k=1}^\infty t_k(a\otimes\Lambda^k)\right). 
\]

The equations (\ref{4:U-tk-eq}) for $U$ 
become the evolution equations 
\beq
  \frac{\rd u_{ij}}{\rd t_k} 
  = au_{i+k,j} - u_{i,j+k}a 
    - \sum_{l=0}^{k-1}u_{il}au_{k-l-1,j},  
  ~~i,j\in\ZZ 
  \label{4:uij-tk-eq}
\eeq
for $u_{ij}$'s. Among these equations, 
those for $i,j \ge 0$ form a closed system 
for $\{u_{ij}\}_{i,j=0}^\infty$. 
Identifying these $u_{ij}$'s with the $2 \times 2$ 
matrix-valued affine coordinates $\{w_{ij}\}_{i,j=1}^\infty$ 
of $\Gr^{(2)}$, we obtain the evolution equations 
\beq
  \frac{\rd w_{ij}}{\rd t_k} 
  = aw_{i+k,j} - w_{i,j+k}a - \sum_{l=1}^{k-1}w_{il}aw_{k-l-1,j}
  \label{4:wij-tk-eq}
\eeq
for $w_{ij}$'s.  These equations can be cast 
into the equations 
\beq
  \frac{\rd\xi}{\rd t_k} 
  = (a\otimes\Lambda^k)\xi - \xi\calB_k,~~
  \frac{\rd\eta}{\rd t_k} 
  = \calC_k\eta - \eta(a\otimes\Lambda^k). 
  \label{4:xieta-tk-eq}
\eeq
for the foregoing matrices $\xi,\eta$. 
$\calB_k$ and $\calC_k$ are natural generalizations 
\[
\begin{aligned}
  \calB_k &= \left(\begin{array}{cl}
         \delta_{i+k,j}E_2 & (i<-k)\\\hline
         w_{i+k,-j-1} & (-k\le i<0)
         \end{array}\right)_{i,j\in\ZZ_{<0}},\\
  \calC_k &= \left(\begin{array}{c|c}
         - w_{i,k-j-1}~~(0\le j<k) & \delta_{i,j-k}E_2~~(j\ge k)
         \end{array}\right)_{i,j\in\ZZ_{\ge 0}} 
\end{aligned}
\]
of those in (\ref{2:xi-tk-eq}) and (\ref{2:eta-tk-eq}). 
These equations represent the two-component 
KP hierarchy in the language of the two-component 
Sato Grassmannian $\Gr^{(2)}$.  
Alongside these equations, $\xi$ and $\eta$ satisfy 
the constraints (\ref{4:xieta-constraint}) as well.  
This leads to a reduction of the two-component 
KP hierarchy. The reduced system is nothing but 
the the AKNS hierarchy.  This fact is commonly known 
in the literature (e.g., the paper of Kac and van de Leur 
\cite{KvdL03}). 

We can explain this link with the AKNS hierarchy 
in the perspective of the wave functions 
$u_i(z),\bar{u}_i(z)$ as well.  To this end, 
let us derive differential equations 
satisfied by $g$ and $\bar{g}$. 

\begin{prop}
$g$ and $\bar{g}$ satisfy the differential equations   
\beq
  \frac{\rd g}{\rd t_k} 
  = (a\otimes\Lambda^k - U(a\otimes\calO_k))g 
    - g(a\otimes\Lambda^k) 
  \label{4:g-tk-eq}
\eeq
\beq
  \frac{\rd\bar{g}}{\rd t_k} 
  = (a\otimes\Lambda^k - U(a\otimes\calO_k))\bar{g}
    - \bar{g}(a\otimes\Lambda^k) 
  \label{4:gbar-tk-eq}
\eeq
for $k = 1,2,\ldots$. 
\end{prop}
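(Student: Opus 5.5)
The plan is to repeat the direct computation used for (\ref{2:g-tk-eq}) and (\ref{3:gbar-tmk-eq}), now with $\Omega$ and $\bar{\Omega}$ replaced by $E_2\otimes\Omega$ and $E_2\otimes\bar{\Omega}$, and $\Lambda^{\pm k}$ replaced by $a\otimes\Lambda^{\pm k}$.

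The first step is to record the tensorized version of (\ref{2:OmLam-rel}) and (\ref{3:OmLam-rel}):
\[
  (E_2\otimes\Omega)(a\otimes\Lambda^k) - (a\otimes\Lambda^{-k})(E_2\otimes\Omega) = a\otimes\calO_k ,
\]
together with the same identity with $\bar{\Omega}$ in place of $\Omega$. These follow from $(A\otimes B)(A'\otimes B') = AA'\otimes BB'$. They only use $E_2a = aE_2 = a$, so $a$ need not commute with anything else.

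For $g$, we have
\[
  \frac{\rd g}{\rd t_k} = -\frac{\rd U}{\rd t_k}(E_2\otimes\Omega),
\]
and we substitute (\ref{4:U-tk-eq}). The first term of (\ref{4:U-tk-eq}) gives
\[
  -(a\otimes\Lambda^k)U(E_2\otimes\Omega) = -(a\otimes\Lambda^k)(E-g).
\]
The quadratic term gives
\[
  U(a\otimes\calO_k)U(E_2\otimes\Omega) = U(a\otimes\calO_k)(E-g).
\]
The middle term gives
\[
  U(a\otimes\Lambda^{-k})(E_2\otimes\Omega) = U(a\otimes\Lambda^{-k}\Omega).
\]
Collecting terms, we get
\[
  \frac{\rd g}{\rd t_k} = \bigl(a\otimes\Lambda^k - U(a\otimes\calO_k)\bigr)g - a\otimes\Lambda^k + U\bigl(a\otimes(\Lambda^{-k}\Omega+\calO_k)\bigr).
\]
Using $\Lambda^{-k}\Omega+\calO_k = \Omega\Lambda^k$, the last two terms combine to
\[
  -\bigl(E - U(E_2\otimes\Omega)\bigr)(a\otimes\Lambda^k) = -g(a\otimes\Lambda^k),
\]
because $U(a\otimes\Omega\Lambda^k) = U(E_2\otimes\Omega)(a\otimes\Lambda^k)$. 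This is (\ref{4:g-tk-eq}).

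The computation for $\bar{g}$ is identical, with $\bar{\Omega}$ in place of $\Omega$ and the identity $\Lambda^{-k}\bar{\Omega}+\calO_k = \bar{\Omega}\Lambda^k$ from (\ref{3:OmLam-rel}). This gives (\ref{4:gbar-tk-eq}). The constraint (\ref{4:U-constraint}) is not needed; the result is purely a consequence of the evolution equations.

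The only point needing care is the placement of the $2\times 2$ factors. Expressions like $U(a\otimes\Lambda^{-k}\Omega)$ must be factored as $U(a\otimes\Lambda^{-k})(E_2\otimes\Omega)$ or as $U(E_2\otimes\Omega)(a\otimes\Lambda^{k})$, with $a$ always kept in the slot of the tensor product where it sits. Since $a$ only ever multiplies $E_2$, the scalar identities lift verbatim. I therefore expect no genuine obstacle beyond this bookkeeping.
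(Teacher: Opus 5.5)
Your proposal is correct and follows essentially the same route as the paper. Like the paper, you differentiate $g=E-U(E_2\otimes\Omega)$, substitute the evolution equation for $U$, and use the tensorized identity $\Lambda^{-k}\Omega+\calO_k=\Omega\Lambda^k$ (and its $\bar{\Omega}$ analogue from (\ref{3:OmLam-rel})) to collapse the leftover terms into $-g(a\otimes\Lambda^k)$; your observation that the constraint (\ref{4:U-constraint}) is not needed also matches the paper's argument.
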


\proof
The proof is mostly parallel to the case 
of (\ref{2:g-tk-eq}) and (\ref{3:gbar-tmk-eq}). 
Differentiating $g$ directly 
and using (\ref{2:OmLam-rel}), 
we can derive (\ref{4:g-tk-eq}) as 
\[
\begin{aligned}
  \frac{\rd g}{\rd t_k} 
  &= - \frac{\rd U}{\rd t_k}(E_2\otimes\Omega) \\
  &= - \left((a\otimes\Lambda^k)U - U(a\otimes\Lambda^{-k}) 
         - U(a\otimes\calO_k)U\right)(E_2\otimes\Omega) \\
  &= - (a\otimes\Lambda^k)(E - g) + U(a\otimes\Lambda^{-k}\Omega) 
     + U(a\otimes\calO_k)(E - g)\\
  &= \left(a\otimes\Lambda^k - U(a\otimes\calO_k)\right)g 
     - a\otimes\Lambda^k 
     + U\left((a\otimes(\Lambda^{-k}\Omega + \calO_k)\right) \\
  &= \left(a\otimes\Lambda^k - U(a\otimes\calO_k)\right)g 
     - a\otimes\Lambda^k + U(a\otimes\Omega)\Lambda^k 
     ~~(\text{by (\ref{2:OmLam-rel})})\\
  &= \left(a\otimes\Lambda^k - U(a\otimes\calO_k)\right)g 
     - g(a\otimes\Lambda^k).  
\end{aligned}
\]
In the same way, we can derive (\ref{4:gbar-tk-eq}). 
\qed

Let us consider the foregoing `vector-valued' 
wave function $\bsu(z)$ with 
the $2 \times 2$ plane wave factor 
\[
  \rho(z) = \exp\left(\sum_{k=1}^\infty t_kaz^k\right). 
\]
As a consequence of (\ref{4:g-tk-eq}), 
$\bsu(z)$ satisfies the differential equations 
\beq
  \frac{\rd\bsu(z)}{\rd t_k} 
  = \left(a\otimes\Lambda^k - U(a\otimes\calO_k)
    \right)\bsu(z)
  \label{4:u(z)-tk-eq}
\eeq
for $k = 1,2,\ldots$.  They turn into 
the differential equations 
\[
  \frac{\rd u_i(z)}{\rd t_k} 
  = au_{i+k}(z) - \sum_{l=0}^{k-1}u_{il}au_{k-l-1}(z) 
\]
for $u_i(z)$'s.  In particular, we have 
\[
  \frac{\rd u_0(z)}{\rd t_k} 
  = au_k(z) - \sum_{l=0}^{k-1}u_{0l}au_{k-l-1}(z) 
\]
Using (\ref{4:ui(z)-u0(z)}), one can rewrite 
the terms $u_k(z)$ and $u_{k-l-1}(z)$ 
on the right hand side. This leads to 
the differential equations 
\beq
  \frac{\rd u_0(z)}{\rd t_k} = A_k(z)u_0(z), 
  ~~ k = 1,2,\ldots, 
  \label{4:u0(z)-tk-eq}
\eeq
where $A_k(z)$ is a polynomials in $z$ of the form 
\[
  A_k(z) = az^k + a_{k1}z^{k-1} + \cdots + a_{kk} 
\]
with $2 \times 2$ matrix coefficients 
$a,a_{k1},\ldots,a_{kk}$.  
These differential equations can be identified with 
auxiliary linear equations of the AKNS hierarchy.  
Thus $u_0(z)$, written as shown in (\ref{4:u0(z)-W(z)}), 
literally plays the role of the wave function therein. 

Since $\bar{g}$ satisfies the equation 
(\ref{4:gbar-tk-eq}) of the same form 
as the equation (\ref{4:g-tk-eq}) for $g$, 
we have the differential equations 
\beq
  \frac{\rd\bar{\bsu}(z)}{\rd t_k} 
  = \left(a\otimes\Lambda^k - U(a\otimes\calO_k)
    \right)\bar{\bsu}(z)
  \label{4:ubar(z)-tk-eq}
\eeq
for the wave function $\bar{\bsu}(z)$ as well. 
Starting from these equations, one can derive 
the differential equations 
\beq
  \frac{\rd\bar{u}_0(z)}{\rd t_k} = A_k(z)\bar{u}_0(z), 
  ~~ k = 1,2,\ldots, 
  \label{4:ubar0(z)-tk-eq}
\eeq
of the same form as (\ref{4:u0(z)-tk-eq}) 
for $\bar{u}_0(z)$.  

\begin{remark}
Apart from the common factor $\rho(z)$, 
$u_0(z)$ and $\bar{u}_0(z)$ are Laurent series 
of opposite powers of $z$.  In a suitable 
analytic setting \cite{ZS79}, 
they form a \textit{Riemann-Hilbert} 
or \textit{factorization} pair, namely, 
there is a $2 \times 2$ matrix-valued 
\textit{$\bst$-independent} function $c(z)$ 
such that 
\beq
  \bar{u}_0(z) = u_0(z)c(z). 
  \label{4:RH-eq}
\eeq
\end{remark}

\begin{remark}
The coefficient matrix of the first term 
on the right hand side of (\ref{4:g-tk-eq}) 
has the block triangular structure  
\[
  a\otimes\Lambda^k - U(a\otimes\calO_k)  
  = \left(\begin{array}{c|c}
    * & * \\\hline 
    0 & \calC_k
    \end{array}\right). 
\]
One can explain the aforementioned 
Grassmannian structure on the basis 
of (\ref{4:g-tk-eq}) as well.  
\end{remark}

\subsection{AKNS hierarchy extended by negative flows}

We can extend the evolution equations (\ref{4:U-tk-eq}) 
by the negative flows with the time variables 
$\bar{\bst} = (t_{-1},t_{-2},\ldots)$.  
The negative flows are defined 
by the evolution equations 
\beq
  \frac{\rd U}{\rd t_{-k}} 
  = (a\otimes\Lambda^{-k})U - U(a\otimes\Lambda^k) 
    - U(a\otimes\calO_{-k})U, 
  ~~ k = 1,2,\ldots. 
  \label{4:U-tmk-eq}
\eeq
These equations become the evolution equations 
\beq
  \frac{\rd u_{ij}}{\rd t_{-k}} 
  = au_{i-k,j} - u_{i,j+k}a 
    + \sum_{l=1}^k u_{i,-l}au_{k-k-1,j}, 
  ~~ i,j \in \ZZ, 
  \label{4:uij-tmk-eq}
\eeq
for $u_{ij}$'s.  The equations for $u_{ij}$'s 
with $i,j < 0$ form a closed subsystem, 
and letting 
\[
  \bar{w}_{ij} = - u_{-i-1,-j-1}, 
  ~~ i,j \ge 0, 
\]
we obtain substantially the same system as 
(\ref{4:wij-tk-eq}) with respect to $\bar{\bst}$.  
This is a Grassmannian structure 
of the negative flows realized on another copy 
of the two-component Grassmannian $\Gr^{(2)}$.  

The description of the positive flows 
in terms of the wave functions can be extended 
to the negative flows.  The matrices $g,\bar{g}$ 
satisfy the differential equations 
\beq
  \frac{\rd g}{\rd t_{-k}} 
  = (a\otimes\Lambda^{-k} - U(a\otimes\calO_{-k}))g 
    - g(a\otimes\Lambda^{-k}) 
  \label{4:g-tmk-eq}
\eeq
\beq
  \frac{\rd\bar{g}}{\rd t_{-k}} 
  = (a\otimes\Lambda^{-k} - U(a\otimes\calO_{-k}))\bar{g}
    - \bar{g}(a\otimes\Lambda^{-k}) 
  \label{4:gbar-tmk-eq}
\eeq
with respect to $\bar{\bst}$ alongside 
the equations (\ref{4:g-tk-eq}) and 
(\ref{4:gbar-tk-eq}) with respect to $\bst$.  
Redefining 
\[
  \bsu(z) = g\bsc(z)\rho(z),~~
  \bar{\bsu}(z) = g\bsc(z)\rho(z) 
\]
with the modified plane wave factor 
\[
  \rho(z) = \exp\left(\sum_{k=1}^\infty 
             a(t_kz^k + t_{-k}z^{-k})\right), 
\]
we have the differential equations 
\beq
  \frac{\rd\bsu(z)}{\rd t_{\pm k}} 
  = \left(a\otimes\Lambda^{\pm k} - U(a\otimes\calO_{\pm k})
    \right)\bsu(z), 
  \label{4:u(z)-tpmk-eq}
\eeq
\beq
  \frac{\rd\bar{\bsu}(z)}{\rd t_{\pm k}} 
  = \left(a\otimes\Lambda^{\pm k} - U(a\otimes\calO_{\pm k})
    \right)\bar{\bsu}(z) 
  \label{4:ubar(z)-tpmk-eq}
\eeq
for $k = 1,2,\ldots$.  

The $0$-th components $u_0(z),\bar{u}_0(z)$ 
of $\bsu(z),\bar{\bsu}(z)$ turn out 
to satisfy auxiliary linear equations 
including (\ref{4:u0(z)-tk-eq}) 
and (\ref{4:ubar0(z)-tk-eq}) as follows.  
As a consequence of (\ref{4:u(z)-tpmk-eq}) 
and (\ref{4:ubar(z)-tpmk-eq}), 
they satisfy the equations 
\[
\begin{aligned}
  \frac{\rd u_0(z)}{\rd t_k} 
  &= au_k(z) - \sum_{l=0}^{k-1}u_{0l}au_{k-l-1}(z),\\
  \frac{\rd\bar{u}_0(z)}{\rd t_k} 
  &= au_k(z) - \sum_{l=0}^{k-1}u_{0l}a\bar{u}_{k-l-1}(z)
\end{aligned}
\]
for the positive flows and 
\[
\begin{aligned}
  \frac{\rd u_0(z)}{\rd t_{-k}} 
  &= au_{-k}(z) - \sum_{l=1}^k u_{0,-l}au_{l-k-1}(z),\\
  \frac{\rd\bar{u}_0(z)}{\rd t_{-k}} 
  &= au_{-k}(z) - \sum_{l=1}^k u_{0,-l}a\bar{u}_{l-k-1}(z). 
\end{aligned}
\]
for the negative flows.  One can derive 
(\ref{4:u0(z)-tk-eq}) and (\ref{4:ubar0(z)-tk-eq}) 
from the first and second equations.  
In the same way, using (\ref{4:ubari(z)-ubar0(z)}), 
one obtains differential equations of the form 
\beq
  \frac{\rd u_0(z)}{\rd t_{-k}} 
  = \bar{A}_k(z)u_0(z),~~
  \frac{\rd\bar{u}_0(z)}{\rd t_{-k}} 
  = \bar{A}_k(z)\bar{u}_0(z) 
  \label{4:uubar0(z)-tmk-eq}
\eeq
from the third and fourth equation. 
$\bar{A}_k(z)$ is a polynomial in $z^{-1}$ 
of the form 
\[
  \bar{A}_k(z) = \bar{a}_{k,0}z^{-k} + \cdots + \bar{a}_{k,k-1}z^{-1}  
\]
with $2 \times 2$ matrix coefficients 
$\bar{a}_{k,0},\ldots,\bar{a}_{k,k-1}$. 
These differential equations can be identified 
with the auxiliary linear equations 
of the negative flows of the AKNS hierarchy.  
The interpretation of $u_0(z)$ and $\bar{u}_0(z)$ 
as a Riemann-Hilbert pair, see (\ref{4:RH-eq}), 
remains valid in this case as well.

The differential equations 
(\ref{4:g-tk-eq}), (\ref{4:gbar-tk-eq}) and 
(\ref{4:g-tmk-eq}), (\ref{4:gbar-tmk-eq}) 
satisfied by $g,\bar{g}$ can be reorganized 
into the differential equations 
\beq
  \frac{\rd\eta^{(2)}}{\rd t_{\pm k}} 
  = \left(a\otimes\Lambda^{\pm k} - U(a\otimes\calO_{\pm k})
    \right)\eta^{(2)} 
  - \eta^{(2)}
    \diag(a\otimes\Lambda^{\pm k},a\otimes\Lambda^{\pm k})  
  \label{4:eta2-tpmk-eq}
\eeq
for the matrix 
\beq
  \eta^{(2)} 
  = \left(\begin{array}{c|c}
    g & \bar{g}
    \end{array}\right) 
  = \left(\begin{array}{c|c|c|c}
    \delta_{ij}E_2 - u_{i,-j-1} & 0 & \delta_{ij}E_2 & u_{i,-j-1}\\
    \hline
    - u_{i,-j-1} & \delta_{ij}E_2 & 0 & \delta_{ij}E_2 + u_{i,-j-1}
    \end{array}\right).
  \label{4:eta2-blocks}
\eeq
The foregoing smaller matrix $\eta$ 
is embedded in the lower-left block.  
These equations resemble the equations 
(\ref{3:eta2-tpmk-eq}) encountered 
in the extended KP hierarchy.  A difference 
is that the `matrix elements' of $\eta^{(2)}$ 
are $2 \times 2$ matrices rather than scalars.  
This $\eta$-matrix represents a point of 
the \textit{four-component} Sato Grassmannian 
$\Gr^{(4)}$.

\subsection{Direct linearization of ASDYM hierarchy}

In the following, we consider $\GL(r)$ gauge theory 
with $r > 1$.  Let $\bsx = (x_0,x_1,\ldots)$ and 
$\bsy = (y_0,y_1,\ldots)$ be the space-time variables 
of the ASDYM hierarchy.  
The first four variables $(x_0,x_1,y_0,y_1)$ 
correspond to (possibly complexified) 
4D space-time coordinates. 
The $U$-matrix now depend on these variables. 

\begin{remark}
One can use a single set $\bsx$ of variables  
labelled by positive and negative integers  
in which 4D space-time coordinates 
are identified with $(x_{n+1},x_m,x_{`m+1},x_n)$
\cite{LQYZ22,LQZ23,LZ25,LHHZ25}.  
This simplifies some aspects, but we dare 
to adopt the conventional formulation 
\cite{Takasaki90CMP} that employs 
the two sets $\bsx,\bsy$ of variables 
(see Appendix C).  This formulation 
takes into account the relationship 
with twistor theory \cite{MWbook96}.  
On the other hand, we do not consider 
negative flows to avoid technical and 
notational complexity. 
\end{remark}

We shall demonstrate that 
direct linearization of the ASDYM hierarchy 
can be achieved by the $U$-matrix satisfying 
the differential equations 
\beq
\begin{aligned}
  \frac{\rd U}{\rd x_k} 
  &= \left(E_r\otimes\Lambda - U(E_r\otimes\calO_1)
    \right)\frac{\rd U}{\rd x_{k-1}},\\
  \frac{\rd U}{\rd y_k} 
  &= \left(E_r\otimes\Lambda - U(E_r\otimes\calO_1)
     \right)\frac{\rd U}{\rd y_{k-1}}
\end{aligned}
  \label{4:U-xy-eq}
\eeq
for $k = 1,2,\ldots$ and the algebraic 
constraint (\ref{4:U-constraint}). 
These equations can be linearized by 
the same transformation defined by the relation 
\[
  U = \left(E - U(E_r\otimes\Omega)\right)C. 
\]
The differential equations for $C$ become 
\beq
  \frac{\rd C}{\rd x_k} 
  = (E_r\otimes\Lambda)\frac{\rd C}{\rd x_{k-1}},~~
  \frac{\rd C}{\rd y_k} 
  = (E_r\otimes\Lambda)\frac{\rd C}{\rd y_{k-1}}. 
  \label{4:C-xy-eq}
\eeq

The equations (\ref{4:U-xy-eq}) for $U$ 
become the differential equations 
\beq
\begin{aligned}
  \frac{\rd u_{ij}}{\rd x_k}
  &= \frac{\rd u_{i+1,j}}{\rd x_{k-1}} 
  - u_{i0}\frac{\rd u_{0j}}{\rd x_{k-1}},\\
  \frac{\rd u_{ij}}{\rd y_k}
  &= \frac{\rd u_{i+1,j}}{\rd y_{k-1}} 
  - u_{i0}\frac{\rd u_{0j}}{\rd y_{k-1}}  
\end{aligned}
  \label{4:uij-xy-eq}
\eeq
for $u_{ij}$, $i,j \in \ZZ$.  Alongside 
these differential equations, $u_{ij}$'s 
are required to satisfy the constraints 
(\ref{4:uij-constraint}).  Among these equations, 
those with $i,j \ge 0$ form a closed subsystem 
for $u_{ij}$, $i,j \ge 0$. 
Identifying them with the affine coordinates 
$w_{ij}$ of $\Gr^{(2)}$, namely, 
\[
  u_{ij} = w_{ij} ~~ \text{for $i,j \ge 0$}, 
\]
we obtain the differential equations 
\beq
  \frac{\rd\xi}{\rd x_k} 
  = (E_r\otimes\Lambda)\frac{\rd\xi}{\rd x_{k-1}} - \xi\calP_k,~~
  \frac{\rd\xi}{\rd y_k} 
  = (E_r\otimes\Lambda)\frac{\rd\xi}{\rd y_{k-1}} - \xi\calQ_k
  \label{4:xi-xy-eq}
\eeq
for $\xi$ and the differential equations 
\beq
  \frac{\rd\eta}{\rd x_k} 
  = \frac{\rd\eta}{\rd x_{k-1}}(E_r\otimes\Lambda) - \calR_k\eta,~~
  \frac{\rd\eta}{\rd y_k} 
  = \frac{\rd\eta}{\rd y_{k-1}}(E_r\otimes\Lambda) - \calS_k\eta 
  \label{4:eta-xy-eq}
\eeq
for $\eta$ alongside the algebraic constraints 
(\ref{4:xieta-constraint}).  
The matrices $\calP_k,\calQ_k,\calR_k,\calS_k$ 
are defined as 
\[
\begin{aligned}
  \calP_k &= \left(\begin{array}{cl}
         0~&(i<-1)\\\hline
         \rd w_{0,-j-1}/\rd x_{k-1} & (i=-1)
         \end{array}\right)_{i,j\in\ZZ_{<0}},\\
  \calQ_k &= \left(\begin{array}{cl}
         0 & (i<-1)\\\hline
         \rd w_{0,-j-1}/\rd y_{k-1} & (i=-1)
         \end{array}\right)_{i,j\in\ZZ_{<0}},\\
  \calR_k &= \left(\begin{array}{c|c}
         - \dfrac{\rd w_{i,0}}{\rd x_{k-1}}~~(j=0) & 0~~(j>0)
         \end{array}\right)_{i,j\in\ZZ_{\ge 0}},\\
  \calS_k &= \left(\begin{array}{c|c}
         - \dfrac{\rd w_{i,0}}{\rd y_{k-1}}~~(j=0) & 0~~(j>0)
         \end{array}\right)_{i,j\in\ZZ_{\ge 0}}. 
\end{aligned}
\]
These are exactly the ASDYM hierarchy 
in the Grassmannian formalism \cite{Takasaki84CMP} 
Moreover, using (\ref{4:uij-xy-eq}) and 
(\ref{4:uij-constraint}), one can directly confirm 
that the generating function 
\[
   W(z) = E_r - \sum_{j=0}^\infty w_{0j}z^{-j-1}
\]
arising in (\ref{4:G-W-rel}) satisfies 
the well-known auxiliary linear equations 
\cite{Nakamura88,Takasaki90CMP,ACT93} 
(see Appendix C) 
\beq
\begin{aligned}
  \frac{\rd W(z)}{\rd x_k} 
  - z\frac{\rd W(z)}{\rd x_{k-1}} + A_kW(z) &= 0,\\
  \frac{\rd W(z)}{\rd y_k} 
  - z\frac{\rd W(z)}{\rd y_{k-1}} + B_kW(z) &= 0, 
\end{aligned}
  \label{4:W(z)-xy-eq}
\eeq
where 
\[
  A_k = - \frac{\rd w_{00}}{\rd x_{k-1}},~~
  B_{k-1} = - \frac{\rd w_{00}}{\rd y_{k-1}}. 
\]

\subsection{Wave functions of ASDYM hierarchy in more detail}

Let us consider the auxiliary linear equations 
(\ref{4:W(z)-xy-eq}) in the perspective 
of $u_i(z)$'s and $\bar{u}_i(z)$'s. 
A clue to this issue is the following differential 
equations satisfied by $g$ and $\bar{g}$. 

\begin{prop}
$g$ and $\bar{g}$ satisfy the differential 
equations 
\beq
\begin{aligned}
  \frac{\rd g}{\rd x_k} 
  &= \left(E_r\otimes\Lambda - U(E_r\otimes\calO_1)
     \right)\frac{\rd g}{\rd x_{k-1}},\\
  \frac{\rd g}{\rd y_k}
  &= \left(E_r\otimes\Lambda - U(E_r\otimes\calO_1)
     \right)\frac{\rd g}{\rd y_{k-1}}
\end{aligned}
  \label{4:g-xy-eq}
\eeq
and 
\beq
\begin{aligned}
  \frac{\rd\bar{g}}{\rd x_k} 
  &= \left(E_r\otimes\Lambda - U(E_r\otimes\calO_1)
     \right)\frac{\rd\bar{g}}{\rd x_{k-1}},\\
  \frac{\rd\bar{g}}{\rd y_k}
  &= \left(E_r\otimes\Lambda - U(E_r\otimes\calO_1)
     \right)\frac{\rd\bar{g}}{\rd y_{k-1}}
\end{aligned}
  \label{4:gbar-xy-eq}
\eeq
for $k = 1,2,\ldots$. 
\end{prop}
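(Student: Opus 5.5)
Since $g = E - U(E_r\otimes\Omega)$ and $\bar{g} = E - U(E_r\otimes\bar{\Omega})$ with $\Omega,\bar{\Omega}$ constant, differentiation gives
\[
  \frac{\rd g}{\rd x_k} = -\frac{\rd U}{\rd x_k}(E_r\otimes\Omega),\qquad
  \frac{\rd g}{\rd x_{k-1}} = -\frac{\rd U}{\rd x_{k-1}}(E_r\otimes\Omega).
\]
The same holds for $\bar{g}$ with $\bar{\Omega}$, and likewise for the $y$-derivatives. The plan is therefore to take the evolution equations (\ref{4:U-xy-eq}) for $U$ and multiply them from the right by $-(E_r\otimes\Omega)$, respectively $-(E_r\otimes\bar{\Omega})$. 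Because the coefficient matrix $E_r\otimes\Lambda - U(E_r\otimes\calO_1)$ acts on $\rd U/\rd x_{k-1}$ from the left, right multiplication by the constant matrix commutes past it. We immediately get
\[
  \frac{\rd g}{\rd x_k}
  = \left(E_r\otimes\Lambda - U(E_r\otimes\calO_1)\right)
    \left(-\frac{\rd U}{\rd x_{k-1}}(E_r\otimes\Omega)\right)
  = \left(E_r\otimes\Lambda - U(E_r\otimes\calO_1)\right)\frac{\rd g}{\rd x_{k-1}},
\]
which is the first equation of (\ref{4:g-xy-eq}). The other three equations follow verbatim by replacing $x$ by $y$ and/or $\Omega$ by $\bar{\Omega}$.

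The one point that needs care is the base case $k=1$, where $\rd/\rd x_0$ is an honest derivative and (\ref{4:U-xy-eq}) reads $\rd U/\rd x_1 = (E_r\otimes\Lambda - U(E_r\otimes\calO_1))\,\rd U/\rd x_0$. This is still of the same form, so the argument goes through unchanged and no use of the relations (\ref{2:OmLam-rel}) or (\ref{3:OmLam-rel}) is needed. This contrasts with the proof of (\ref{2:g-tk-eq}), where an inhomogeneous term $-g\Lambda^k$ had to be produced. Here it is absent because the equations (\ref{4:U-xy-eq}) are homogeneous in the derivatives.

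The only potential obstacle is the formal one of associativity of infinite matrix products. The products $U(E_r\otimes\calO_1)\,\rd U/\rd x_{k-1}\,(E_r\otimes\Omega)$ must be well defined and associative. Since $\calO_1 = E_{00}$ has a single nonzero entry and $\Omega$, $\bar{\Omega}$ each have at most one nonzero entry in every row and column, each entry involves only finite sums. Hence regrouping is legitimate, and I would simply remark on this. As a consistency check, one may note that (\ref{4:g-constraint}) and (\ref{4:gbar-constraint}) are compatible with these equations, but this is not needed for the proof.
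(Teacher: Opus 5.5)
Your proposal is correct and is essentially the paper's own argument: differentiate $g = E - U(E_r\otimes\Omega)$ (respectively $\bar{g} = E - U(E_r\otimes\bar{\Omega})$), substitute the evolution equations (\ref{4:U-xy-eq}) of $U$, and regroup the constant right factor into $\rd g/\rd x_{k-1}$ (respectively $\rd\bar{g}/\rd x_{k-1}$), with no use of the $\Omega$--$\Lambda$ relations. Your remark that every entry involves only finite sums, which justifies the regrouping, makes explicit a point the paper leaves implicit.
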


\proof
Using the evolution equations (\ref{4:U-xy-eq}) 
of $U$, we can express the $x_k$-derivatives 
of $g$ as 
\[
  \frac{\rd g}{\rd x_k}
  = - \frac{\rd U}{\rd x_k}(E_r\otimes\Omega) 
  = - \left(E_r\otimes\Lambda - U(E_r\otimes\calO_1)
       \right)\frac{\rd U}{\rd x_{k-1}}(E\otimes\Omega). 
\]
On the other hand, we have 
\[
  \left(E_r\otimes\Lambda - U(E_r\otimes\calO_1)
  \right)\frac{\rd g}{\rd x_{k-1}} 
  = - \left(E_r\otimes\Lambda - U(E_r\otimes\calO_1)
    \right)\frac{\rd U}{\rd x_{k-1}}(E_r\otimes\Omega). 
\]
Thus the first equation of (\ref{4:g-xy-eq}) 
turns out to hold.  The other equations can be verified 
in the same way. 
\qed

We now consider a $\GL(r)$-valued 
`plane wave' factor $\rho(z)$ that satisfies 
the differential equations 
\beq
  \frac{\rd\rho(z)}{\rd x_k} 
  - z\frac{\rd\rho(z)}{\rd x_{k-1}} = 0, ~~
  \frac{\rd\rho(z)}{\rd y_k} 
  - z\frac{\rd\rho(z)}{\rd y_{k-1}} = 0 
  \label{4:rho(z)-xy-eq}
\eeq
for $k = 1,2,\ldots$, and construct  
the wave functions 
\[
  \bsu(z) = g\bsc(z)\rho(z),~~
  \bar{\bsu}(z) = \bar{g}\bsc(z)\rho(z). 
\]

\begin{remark}
The differential equations  (\ref{4:rho(z)-xy-eq}) 
reflect twistor-theoretic properties 
\cite{MWbook96} of the ASDYM hierarchy. 
These equations imply that the matrix elements 
of $\rho(z)$ are functions of $z$ and 
the generating functions 
\[
  \bsx[z] = \sum_{k=0}^\infty x_kz^k, ~~
  \bsy[z] = \sum_{k=0}^\infty y_kz^k
\]
of $\bsx$ and $\bsy$.  Such functions are called 
\textit{twistor functions} in the context 
of twistor theory.  In fact, the triple 
$(z,\bsx[z],\bsy[z])$ shows up 
in the geometric correspondence between 
the (extended) space-time and 
the 3D twistor space $\CC\PP^3$.  
\end{remark}

\begin{prop}
$\bsu(z)$ and $\bar{\bsu}(z)$ satisfy 
the differential equations 
\beq
\begin{aligned}
  \frac{\rd\bsu(z)}{\rd x_k} 
  &= \left(E_r\otimes\Lambda - U(E_r\otimes\calO_1)
     \right)\frac{\rd\bsu(z)}{\rd x_{k-1}},\\
  \frac{\rd\bsu(z)}{\rd y_k} 
  &= \left(E_r\otimes\Lambda - U(E_r\otimes\calO_1)
     \right)\frac{\rd\bsu(z)}{\rd y_{k-1}}
\end{aligned}
  \label{4:u(z)-xy-eq}
\eeq
and
\beq
\begin{aligned}
  \frac{\rd\bar{\bsu}(z)}{\rd x_k} 
  &= \left(E_r\otimes\Lambda - U(E_r\otimes\calO_1)
     \right)\frac{\rd\bar{\bsu}(z)}{\rd x_{k-1}},\\
  \frac{\rd\bar{\bsu}(z)}{\rd y_k} 
  &= \left(E_r\otimes\Lambda - U(E_r\otimes\calO_1)
     \right)\frac{\rd\bar{\bsu}(z)}{\rd y_{k-1}}
\end{aligned}
  \label{4:ubar(z)-xy-eq}
\eeq
for $k = 1,2,\ldots$. 
\end{prop}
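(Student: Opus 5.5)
The plan is to differentiate $\bsu(z) = g\bsc(z)\rho(z)$ by the product rule and to treat the two resulting terms separately. Throughout, I abbreviate
\[
  \calM_1 = E_r\otimes\Lambda - U(E_r\otimes\calO_1).
\]
Since $\bsc(z) = (E_rz^i)_{i\in\ZZ}$ does not depend on $\bsx,\bsy$, we get
\[
  \frac{\rd\bsu(z)}{\rd x_k}
  = \frac{\rd g}{\rd x_k}\bsc(z)\rho(z)
    + g\bsc(z)\frac{\rd\rho(z)}{\rd x_k}.
\]

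\textbf{First term.} By the preceding proposition, (\ref{4:g-xy-eq}), the first term equals $\calM_1\dfrac{\rd g}{\rd x_{k-1}}\bsc(z)\rho(z)$.

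\textbf{Second term.} By (\ref{4:rho(z)-xy-eq}) we can write $\dfrac{\rd\rho(z)}{\rd x_k} = z\dfrac{\rd\rho(z)}{\rd x_{k-1}}$, so the second term becomes $g\,z\bsc(z)\dfrac{\rd\rho(z)}{\rd x_{k-1}}$. The key algebraic input here is the shift identity
\[
  (E_r\otimes\Lambda)\bsc(z) = z\bsc(z).
\]
It holds because the $i$-th block of the left-hand side is $E_rz^{i+1}$, and $z$ is a scalar, so it commutes with everything. Combining this with the constraint (\ref{4:g-constraint}), $\calM_1 g = g(E_r\otimes\Lambda)$, gives
\[
  g\,z\bsc(z) = g(E_r\otimes\Lambda)\bsc(z) = \calM_1 g\bsc(z).
\]
Hence the second term equals $\calM_1 g\bsc(z)\dfrac{\rd\rho(z)}{\rd x_{k-1}}$.

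\textbf{Recombining.} Adding the two terms and applying the product rule in reverse yields
\[
  \frac{\rd\bsu(z)}{\rd x_k} = \calM_1\frac{\rd}{\rd x_{k-1}}\bigl(g\bsc(z)\rho(z)\bigr) = \calM_1\frac{\rd\bsu(z)}{\rd x_{k-1}},
\]
which is the first equation of (\ref{4:u(z)-xy-eq}). The $y_k$ equation follows verbatim, using the second equations of (\ref{4:g-xy-eq}) and (\ref{4:rho(z)-xy-eq}).

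\textbf{The case of $\bar{\bsu}(z)$.} For $\bar{\bsu}(z) = \bar{g}\bsc(z)\rho(z)$ I repeat the same argument, replacing (\ref{4:g-xy-eq}) by (\ref{4:gbar-xy-eq}) and (\ref{4:g-constraint}) by (\ref{4:gbar-constraint}). This gives (\ref{4:ubar(z)-xy-eq}).

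\textbf{Main obstacle.} The only step that is not pure bookkeeping is recognising that the factor $z$ produced by $\rho(z)$ must be absorbed by the constraint (\ref{4:g-constraint}) rather than by the evolution equations. Some care is also needed with ordering: $\rho(z)$ is an $r\times r$ matrix acting on the right, while $\calM_1$ acts on the left. The block multiplication convention for $E_r\otimes\Lambda$ keeps these operations compatible. Convergence of the infinite sums in the entries of $g\bsc(z)$ is treated formally, as in the earlier propositions.
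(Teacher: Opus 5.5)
Your proof is correct and follows essentially the same route as the paper. Both use the product rule, apply (\ref{4:g-xy-eq}) to the $g$-derivative term, and absorb the factor $z$ from (\ref{4:rho(z)-xy-eq}) via $(E_r\otimes\Lambda)\bsc(z) = z\bsc(z)$ and the constraint (\ref{4:g-constraint}). The only difference is presentational: the paper shows that the difference of the two sides equals $\bigl(g(E_r\otimes\Lambda) - (E_r\otimes\Lambda - U(E_r\otimes\calO_1))g\bigr)\bsc(z)\,\rd\rho(z)/\rd x_{k-1}$, which vanishes, whereas you substitute the constraint directly and recombine.
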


\proof
Differentiating $\bsu(z)$ by $x_k$ 
and using (\ref{4:g-xy-eq}) and (\ref{4:rho(z)-xy-eq}), 
we have 
\[
\begin{aligned}
  \frac{\rd\bsu(z)}{\rd x_k} 
  &= \frac{\rd g}{\rd x_k}\bsc(z)\rho(z) 
    + g\bsc(z)\frac{\rd\rho(z)}{\rd x_k} \\
  &= \left(E_r\otimes\Lambda - U(E_r\otimes\calO_1)
     \right)\frac{\rd g}{\rd x_{k-1}}\bsc(z)\rho(z) 
    + g\bsc(z)z\frac{\rd\rho(z)}{\rd x_{k-1}}\\
  &= \left(E_r\otimes\Lambda - U(E_r\otimes\calO_1)
     \right)\frac{\rd g}{\rd x_{k-1}}\bsc(z)\rho(z)
    + g(E_r\otimes\Lambda)\bsc(z)\frac{\rd\rho(z)}{\rd x_{k-1}}
\end{aligned}
\]
On the other hand, the $x_{k-1}$-derivative becomes 
\[
  \frac{\rd\bsu(z)}{\rd x_{k-1}} 
  = \frac{\rd g}{\rd x_{k-1}}\bsc(z)\rho(z) 
    + g\bsc(z)\frac{\rd\rho(z)}{\rd x_{k-1}}. 
\]
Therefore 
\[
\begin{aligned}
  &\frac{\rd\bsu(z)}{\rd x_k} 
  - \left(E_r\otimes\Lambda - U(E_r\otimes\calO_1)
     \right)\frac{\rd\bsu(z)}{\rd x_{k-1}} \\
  &= \left(g(E_r\otimes\Lambda) 
      - \left(E_r\otimes\Lambda - U(E_r\otimes\calO_1)\right)g
     \right)\bsc(z)\frac{\rd\bsu(z)}{\rd x_{k-1}}.
\end{aligned}
\]
Because of the constraint (\ref{4:g-constraint}), 
the right hand side of the last equation vanishes. 
Thus we obtain the first equation of (\ref{4:u(z)-xy-eq}). 
The other equations can be verified in the same way. 
\qed

We can extract the differential equations 
\[
\begin{aligned}
  \frac{\rd u_0(z)}{\rd x_k} 
  &= \frac{\rd u_1(z)}{\rd x_{k-1}} 
     - u_{00}\frac{\rd u_0(z)}{\rd x_{k-1}},\\
  \frac{\rd u_0(z)}{\rd y_k} 
  &= \frac{\rd u_1(z)}{\rd y_{k-1}} 
     - u_{00}\frac{\rd u_0(z)}{\rd y_{k-1}}
\end{aligned}
\]
for $u_0(z)$ and
\[
\begin{aligned}
  \frac{\rd\bar{u}_0(z)}{\rd x_k} 
  &= \frac{\rd\bar{u}_1(z)}{\rd x_{k-1}} 
     - u_{00}\frac{\rd\bar{u}_0(z)}{\rd x_{k-1}},\\
  \frac{\rd\bar{u}_0(z)}{\rd y_k} 
  &= \frac{\rd\bar{u}_1(z)}{\rd y_{k-1}} 
     - u_{00}\frac{\rd\bar{u}_0(z)}{\rd y_{k-1}}
\end{aligned}
\]
for $\bar{u}_0(z)$ from the $0$-th component 
of (\ref{4:u(z)-xy-eq}) and (\ref{4:ubar(z)-xy-eq}). 
As a consequence of the relations 
(\ref{4:ui(z)-u0(z)}) and (\ref{4:ubari(z)-ubar0(z)}), 
$u_1(z)$ and $\bar{u}_1(z)$ can be expressed as 
\[
  u_1(z) = (E_rz + u_{00})u_0(z),~~ 
  \bar{u}_1(z) = (E_rz + u_{00})\bar{u}_0(z).
\]
Substituting this expression, we can rewrite 
the foregoing differential equations as 
\beq
\begin{aligned}
  \frac{\rd u_0(z)}{\rd x_k} 
  &= z\frac{\rd u_0(z)}{\rd x_{k-1}} 
     + \frac{\rd u_{00}}{\rd x_{k-1}}u_0(z),\\
  \frac{\rd u_0(z)}{\rd y_k} 
  &= z\frac{\rd u_0(z)}{\rd y_{k-1}} 
     + \frac{\rd u_{00}}{\rd y_{k-1}}u_0(z)
\end{aligned}
  \label{4:u0(z)-xy-eq}
\eeq
and
\beq
\begin{aligned}
  \frac{\rd\bar{u}_0(z)}{\rd x_k} 
  &= z\frac{\rd\bar{u}_0(z)}{\rd x_{k-1}} 
     + \frac{\rd u_{00}}{\rd x_{k-1}}\bar{u}_0(z),\\
  \frac{\rd\bar{u}_0(z)}{\rd y_k} 
  &= z\frac{\rd\bar{u}_0(z)}{\rd y_{k-1}} 
     + \frac{\rd u_{00}}{\rd y_{k-1}}\bar{u}_0(z). 
\end{aligned}
  \label{4:ubar0(z)-xy-eq}
\eeq
Thus the pair $u_0(z),\bar{u}_0(z)$ turn out to satisfy 
auxiliary linear equations of the same form 
as (\ref{4:W(z)-xy-eq}). 

\begin{remark}
(\ref{4:u0(z)-xy-eq}) can be derived 
from the equations (\ref{4:W(z)-xy-eq}) 
satisfied by $W(z)$ as well.  In fact, 
$u_0(z)$ and $W(z)$ are related as 
\[
  u_0(z) = W(z)\rho(z), 
\]
and one can factor out $\rho(z)$ from 
(\ref{4:u0(z)-xy-eq}) thanks to the equations 
(\ref{4:rho(z)-xy-eq}).  In the same sense, 
$\rho(z)$ can be factored out 
from the equations (\ref{4:ubar0(z)-xy-eq}) 
by substituting 
\[
  \bar{u}_0(z) = \overline{W}(z)\rho(z), 
\]
where 
\[
  \overline{W}(z) = E_r + u_{0,-1} 
   + \sum_{j=1}^\infty u_{0,-j-1}z^j. 
\]
Thus $\overline{W}(z)$ becomes another solution 
of the the auxiliary linear equations 
(\ref{4:W(z)-xy-eq}).  In particular, 
one can obtain Yang's $J$-potential 
(see Appendix C) from $u_{0,-1}$: 
\beq
  J = E_r + u_{0,-1}. 
  \label{4:J-uij-rel}
\eeq
\end{remark}

\begin{remark}
The foregoing remark suggests that 
the presence of $\rho(z)$ is superfluous 
for auxiliary linear equations 
of the ASDYM hierarchy.  This factor, 
however, plays a role in some cases 
where solutions of the ASDYM hierarchy 
are obtained as a variant of solutions of 
the AKNS hierarchy \cite{LQYZ22,LQZ23,LZ25,Takasaki83}. 
\end{remark}

\begin{remark}
Since $u_0(z)$ and $\bar{u}_0(z)$ 
satisfy the same linear differential equations, 
they form a Riemann-Hilbert pair. 
Namely, an equality of the form 
\[
  \bar{u}_0(z) = u_0(z)c(z)
\]
holds just like the equality (\ref{4:RH-eq}) 
in the case of the AKNS hierarchy.  
By stripping off the common factor $\rho(z)$, 
this equation turns into the equality 
\[
  \overline{W}(z) = W(z)C(z), ~~ 
  C(z) = \rho(z)c(z)\rho(z)^{-1}, 
\]
for $W(z)$ and $\overline{W}(z)$. 
Thus $W(z)$ and $\overline{W}(z)$,too, 
form a Riemann-Hilbert pair.  
In the present setting, 
the $r \times r$ matrix $c(z)$ can depend 
on $\bsx$ and $\bsy$ as well, 
and satisfy the differential equations 
\beq
  \frac{\rd c(z)}{\rd x_k} = z\frac{\rd c(z)}{\rd x_{k-1}},~~
  \frac{\rd c(z)}{\rd y_{x+1}} = z\frac{\rd c(z)}{\rd y_{k-1}} 
  \label{4:c(z)-xy-eq}
\eeq
for $k = 1,2,\ldots$.  This is also the case 
for $C(z)$.  In other words, the matrix elements 
of $c(z)$ and $C(z)$ are twistor functions 
in the aforementioned sense. 
They are nothing but the twistorial data 
of the ASDYM fields \cite{MWbook96}. 
\end{remark}

\subsection{Special solutions and Cauchy matrix}

Fu and Nijhoff's construction of special solutions 
to the KP hierarchy \cite{FN17,FN18,FuThesis}
can be generalized to the AKNS and ASDYM hierarchies. 
In appearance, $u_{ij}$'s take the same form 
as the case of the KP hierarchy: 
\beq
  u_{ij} 
  = \tp{\bsr}K^i(E_N + A\calM)^{-1}AL^j\bss 
  = \tp{\bsr}K^iA(E_M + \calM A)^{-1}L^j\bss.  
  \label{4:CM-uij}
\eeq
The constant matrices $K$ ($N \times N$ matrix), 
$L$ ($M \times M$ matrix) and $A$ ($N \times M$ matrix) 
are exactly the same as in the case (\ref{2:CM-uij}) 
of the KP hierarchy except that 
they are required to satisfy the algebraic relation 
\beq
  KA = AL. 
  \label{4:KA=AL}
\eeq
As demonstrated below, this condition 
ensures that $U$ satisfies the constraint 
(\ref{4:U-constraint}). 
The other building blocks 
$\bsr$ ($N\times r$ matrix), 
$\bss$ ($M\times r$ matrix) 
and $\calM$ ($M \times N$ matrix) 
are system-dependent. 
$\calM$ is the Cauchy matrix 
\[
  \calM = \left(\frac{\bss_i\tp{\bsr_j}}
          {\kappa_j - \lambda_i}\right)_{1\le i\le M,1\le j\le N}, 
\]
where $\bsr_i$ and $\bss_i$ denote 
the $i$-th row of $\bsr$ and $\bss$, 
and satisfies the Sylvester equation
\beq
  \calM K - L\calM = \bss\tp{\bsr}. 
  \label{4:Sylvester-eq}
\eeq

\begin{prop}
If the condition (\ref{4:KA=AL}) is fulfilled, 
the constraints (\ref{4:uij-constraint}) 
for $u_{ij}$'s are satisfied.   
\end{prop}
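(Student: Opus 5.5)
We verify (\ref{4:uij-constraint}) directly, i.e.
\[
  u_{i+1,j} - u_{i,j+1} - u_{i0}u_{0j} = 0 ,
\]
using the second expression in (\ref{4:CM-uij}). Set $F = (E_M + \calM A)^{-1}$, so that $u_{ij} = \tp{\bsr}K^iAFL^j\bss$.

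The first step is to combine the two linear terms. We have
\[
  u_{i+1,j} - u_{i,j+1} = \tp{\bsr}K^i\,(KAF - AFL)\,L^j\bss .
\]
By (\ref{4:KA=AL}), $KAF = ALF$, so the bracket equals $A(LF - FL)$.

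The second step is to compute this commutator from the Sylvester equation (\ref{4:Sylvester-eq}). Write
\[
  LF - FL = F\bigl((E_M+\calM A)L - L(E_M+\calM A)\bigr)F = F(\calM AL - L\calM A)F .
\]
Using $AL = KA$ once more gives $\calM AL - L\calM A = (\calM K - L\calM)A = \bss\tp{\bsr}A$. Hence
\[
  KAF - AFL = AF\bss\tp{\bsr}AF .
\]

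The third step reads off the result. Inserting this into the linear terms yields
\[
  u_{i+1,j} - u_{i,j+1} = (\tp{\bsr}K^iAF\bss)(\tp{\bsr}AFL^j\bss) = u_{i0}u_{0j},
\]
which is (\ref{4:uij-constraint}) for all $i,j\in\ZZ$. Negative powers of $K$ and $L$ are allowed, since the $\kappa_i,\lambda_j$ are nonzero. Here $u_{i0}$ and $u_{0j}$ are correctly $r\times r$ blocks, because $\bss$ and $\bsr$ are $M\times r$ and $N\times r$ matrices.

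The only delicate point is in the second step. One must apply $KA = AL$ so as to move $K$ through $A$ in the right direction, which converts the Sylvester relation for $\calM$ into a commutator identity for $\calM A$ with $L$. One must also keep the noncommuting factor order $\calM A$ versus $A\calM$ consistent, using the push-through identity $(E_N + A\calM)^{-1}A = A(E_M+\calM A)^{-1}$ if one prefers the first form of (\ref{4:CM-uij}). Everything else is routine bookkeeping.
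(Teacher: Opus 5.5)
Your proof is correct and takes essentially the paper's route. Both reduce $u_{i+1,j}-u_{i,j+1}$ to a middle factor, cancel the constant part with $KA=AL$, and use the Sylvester equation to turn the rest into $A\bss\tp{\bsr}A$, which gives $u_{i0}u_{0j}$. The only differences are cosmetic: you work with the single resolvent $F=(E_M+\calM A)^{-1}$ and the commutator $LF-FL$, whereas the paper sandwiches between $(E_N+A\calM)^{-1}$ and $(E_M+\calM A)^{-1}$, and your bookkeeping also avoids the paper's slip of writing $K^j$ for $L^j$.
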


\proof
$u_{i+1,j} - u_{i,j+1}$ can be expressed as 
\[
  u_{i+1,j} - u_{i,j+1} 
  = \tp{\bsr}K^i\left(KA(E_M + \calM A)^{-1}  
        - (E_N + A\calM)^{-1}AL\right)K^j\bss 
\]
The matrix in the middle can be rewritten as 
\[
\begin{aligned}
  &KA(E_M + \calM A)^{-1} - (E_N + A\calM)^{-1}AL \\
  &= (E_N + A\calM)^{-1}\left((E_N + A\calM)KA 
      - AL(E_M + \calM A)\right)(E_M + \calM A)^{-1} \\
  &= (E_N + A\calM)^{-1}(KA - AL + A\calM KA - AL\calM A)
     (E_M + \calM A)^{-1}\\
  &= (E_N + A\calM)^{-1}A(\calM K - L\calM)A(E_M + \calM A)^{-1} \\
  &=  (E_N + A\calM)^{-1}A\bss\tp{\bsr}(E_N + A\calM)^{-1}A. 
\end{aligned}
\]
The relation (\ref{4:KA=AL}) among $A,K,L$ and 
the Sylvester equation (\ref{4:Sylvester-eq}) 
have been used in the last three lines.  
Consequently,   
\[
  u_{i+1,j} - u_{i,j+1} 
  = \bsr K^i(E_N + A\calM)^{-1}A\bss\tp{\bsr}
    (E_N + A\calM)^{-1}AL^j\bss 
  = u_{i0}u_{0j}. 
\]
\qed

\begin{remark}
If $N = M$ and $A$ is an invertible matrix, 
we can rewrite the foregoing expression 
of $u_{ij}$ as 
\beq
  u_{ij} = \tp{\bsr}K^i(B + \calM)^{-1}K^j\bss, 
  \label{4:CM-uij2}
\eeq
where $B$ denotes $A^{-1}$ and obeys 
the algebraic relation 
\beq
  BK = LB. 
  \label{4:BK=LB}
\eeq
This case is known in the literature 
\cite{Zhao18,LQYZ22,LQZ23}. 
\end{remark}

\subsubsection{AKNS hierarchy}

This case is thoroughly studied in the literature 
in the context of the Cauchy matrix approach 
\cite{Zhao18,LQYZ22}.  
The matrices $\bsr$ ($N\times 2$ matrix) 
and $\bss$ ($M \times 2$ matrix) are required 
to satisfy the differential equations
\beq
  \frac{\rd\bsr}{\rd t_{\pm k}} = K^{\pm k}\bsr a,~~
  \frac{\rd\bss}{\rd t_{\pm k}} = - L^{\pm k}\bss a
  \label{4:rs-tpmk-eq}
\eeq
for $k = 1,2,\ldots$.  
The Cauchy matrix $\calM$ is constructed 
from these matrices and satisfies 
the differential equations 
\beq
  \frac{\rd\calM}{\rd t_k} 
  = \sum_{l=0}^{k-1}L^l\bss a\tp{\bsr}K^{k-l-1}
  \label{4:calM-tk-eq}
\eeq
and 
\beq
  \frac{\rd\calM}{\rd t_{-k}} 
  = - \sum_{l=1}^k L^{-l}\bss a\tp{\bsr}K^{l-k-1}
  \label{4:calM-tmk-eq}
\eeq
for $k = 1,2,\ldots$.  Using these equations, 
one can confirm that $u_{ij}$'s of (\ref{4:CM-uij}) 
satisfy the evolution equations 
(\ref{4:uij-tk-eq}) and (\ref{4:uij-tmk-eq}) 
in the same way as the case of the KP hierarchy. 

The wave functions $u_i(z)$ and $\bar{u}_i(z)$ 
obtained from (\ref{4:CM-uij}) become matrix-valued 
functions of the same form:
\[
\begin{aligned}
  u_i(z) 
  &= \left(z^i - \tp{\bsr}K^i(E_N + A\calM)^{-1}A 
          (zE_M - L)^{-1}\bss\right)\rho(z),\\
  \bar{u}_i(z) 
  &= \left(z^i - \tp{\bsr}K^i(E_N + A\calM)^{-1}A 
          (zE_M - L)^{-1}\bss\right)\rho(z). 
\end{aligned}
\]
This means that $u_i(z)$ and $\bar{u}_i(z)$ 
are local expressions of a globally defined 
matrix-valued function.  In particular, 
the connection matrix $c(z)$ of the Riemann-Hilbert pair 
$u_0(z),\bar{u}_0(z)$ is equal to the unit matrix, 
implying that this is a degenerate case 
of the Riemann-Hilbert problem \cite{ZS79}.

\subsubsection{ASDYM hierarchy} 

In the work of Li et al. \cite{LQYZ22,LQZ23}, 
the foregoing special solution of the AKNS hierarchy 
and its higher rank analogues are positioned 
to be a solution of the ASDYM hierarchy. 
An origin of this coincidence lies in the fact 
that the plane wave factor 
\[
  \rho(z) = \exp\left(\sum_{k=1}^\infty t_kaz^k 
    + \sum_{k=1}^\infty t_{-k}az^{-k}\right) 
\]
of the AKNS hierarchy turns into a twistor 
function satisfying the differential equations 
(\ref{4:rho(z)-xy-eq}) by replacing $t_{\pm k}$ 
with, e.g., $x_{k-1}$ and $y_{k-1}$.  

We have the following more general result. 
Ohta reported a similar result for Gram-type 
determinant solutions of the ASDYM equation 
employing the method of bilinearization \cite{Ohta24}. 

\begin{prop}
\label{4:prop-ASDYM-rs}
$u_{ij}$'s of (\ref{4:CM-uij}) become a solution 
of (\ref{4:uij-xy-eq}) if $\bsr$ and $\bss$ 
satisfy the differential equations 
\beq
\begin{aligned}
  \frac{\rd\bsr}{\rd x_k} = K\frac{\rd\bsr}{\rd x_{k-1}},&~~
  \frac{\rd\bsr}{\rd y_k} = L\frac{\rd\bsr}{\rd y_{k-1}},\\
  \frac{\rd\bss}{\rd x_k} = K\frac{\rd\bss}{\rd x_{k-1}},&~~
  \frac{\rd\bss}{\rd y_k} = L\frac{\rd\bss}{\rd y_{k-1}}.
\end{aligned}
  \label{4:rs-xy-eq}
\eeq
\end{prop}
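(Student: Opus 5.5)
The plan is a direct differentiation in the Cauchy matrix framework, parallel to the verification of the AKNS flows. Throughout I write $\Phi = (E_N + A\calM)^{-1}A = A(E_M+\calM A)^{-1}$, so that $u_{ij} = \tp{\bsr}K^i\Phi L^j\bss$. The proof of the previous proposition shows that (\ref{4:KA=AL}) and (\ref{4:Sylvester-eq}) give the key commutation identity
\[
  K\Phi - \Phi L = \Phi\bss\tp{\bsr}\Phi .
\]
This identity is what produces the quadratic term $u_{i0}(\cdots)_{0j}$. I treat the $x$-flows in detail; the $y$-flows are identical with the roles of the multipliers exchanged. I read (\ref{4:rs-xy-eq}) as saying that derivatives of $\bsr$ and $\bss$ shift by one diagonal spectral multiplier acting on the row index, that is, by $K$ on $\bsr$ and by the corresponding multiplier on $\bss$ ($L$ when the rows are indexed by $\lambda$'s). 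I will fix this convention at the outset so that all shifts are by diagonal matrices.

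\emph{Step 1 (derivative of the Cauchy matrix).} Differentiate the entries $\bss_i\tp{\bsr_j}/(\kappa_j-\lambda_i)$ of $\calM$ and apply (\ref{4:rs-xy-eq}) to each factor. The denominator cancels against the difference of the two shifts, which yields the recursion
\[
  \frac{\rd\calM}{\rd x_k} = \frac{\rd\calM}{\rd x_{k-1}}K - \frac{\rd\bss}{\rd x_{k-1}}\tp{\bsr}
  = L\frac{\rd\calM}{\rd x_{k-1}} + \bss\frac{\rd\tp{\bsr}}{\rd x_{k-1}} .
\]
This plays the role that (\ref{4:calM-tk-eq}) played for the AKNS hierarchy. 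It is also consistent with differentiating the Sylvester equation (\ref{4:Sylvester-eq}).

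\emph{Step 2 (differentiate $u_{ij}$).} Write
\[
  \frac{\rd u_{ij}}{\rd x_k} = \frac{\rd\tp{\bsr}}{\rd x_k}K^i\Phi L^j\bss + \tp{\bsr}K^i\Phi L^j\frac{\rd\bss}{\rd x_k} - \tp{\bsr}K^i\Phi\frac{\rd\calM}{\rd x_k}\Phi L^j\bss ,
\]
and do the same for $\rd u_{i+1,j}/\rd x_{k-1}$. In the first term, (\ref{4:rs-xy-eq}) converts $\rd\tp{\bsr}/\rd x_k\,K^i$ into $\rd\tp{\bsr}/\rd x_{k-1}\,K^{i+1}$, which matches the corresponding term of $\rd u_{i+1,j}/\rd x_{k-1}$ exactly. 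In the second and third terms the shift appears on the wrong side of $\Phi$. There I substitute the Step~1 recursion and move the extra $K$ or $L$ across $\Phi$ using the commutation identity. Each such move produces a correction of the form $\tp{\bsr}K^i\Phi\bss\cdot\tp{\bsr}(\cdots) = u_{i0}\,\tp{\bsr}(\cdots)$.

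\emph{Step 3 (collect terms).} Form the difference
\[
  \frac{\rd u_{ij}}{\rd x_k} - \frac{\rd u_{i+1,j}}{\rd x_{k-1}} .
\]
After Step~2 the linear terms should cancel. What remains should have the common left factor $u_{i0} = \tp{\bsr}K^i\Phi\bss$ times three pieces: $\rd\tp{\bsr}/\rd x_{k-1}\,\Phi L^j\bss$, $\tp{\bsr}\Phi L^j\,\rd\bss/\rd x_{k-1}$, and $-\tp{\bsr}\Phi\,\rd\calM/\rd x_{k-1}\,\Phi L^j\bss$. These three pieces are precisely the three terms of $\rd u_{0j}/\rd x_{k-1}$, so the difference equals $-u_{i0}\,\rd u_{0j}/\rd x_{k-1}$, which is (\ref{4:uij-xy-eq}).

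The main obstacle is the bookkeeping in Step~3. The sign of the $\bss\tp{\bsr}$-correction from Step~1 must combine with the $\Phi\bss\tp{\bsr}\Phi$ correction from the commutation identity so that both collapse into the single term $-u_{i0}\,\rd u_{0j}/\rd x_{k-1}$ and no stray term survives. I will first check the case $k=1$, $i=j=0$ by hand to fix signs, and then run the general computation. Which of the two forms of the Step~1 recursion to use for each term (right multiplication by $K$ or left multiplication by $L$) will be decided by which one lets the shift pass to the correct side of $\Phi$.
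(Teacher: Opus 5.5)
Your proposal is correct and follows the paper's own proof essentially step for step. The paper uses the same ingredients in the same order: the Cauchy-matrix recursion $\rd\calM/\rd x_k = L\,\rd\calM/\rd x_{k-1} + \bss\,\rd\tp{\bsr}/\rd x_{k-1}$ (the lemma preceding the proof), direct differentiation of $u_{ij}$, the identity $K\Phi - \Phi L = \Phi\bss\tp{\bsr}\Phi$ obtained from $KA=AL$ and the Sylvester equation, and factoring out $u_{i0}$ to recognize $\rd u_{0j}/\rd x_{k-1}$.

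Your reading of (\ref{4:rs-xy-eq}), with $K$ acting on $\bsr$ and $L$ on $\bss$, is the dimensionally consistent one, and it is what the lemma and the paper's computation actually use. With that reading, the $y$-flows are the identical computation with $x$ replaced by $y$. No exchange of $K$ and $L$ is needed, and an actual exchange would break your Step~1.
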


The equations (\ref{4:rs-xy-eq}) for $\bsr,\bss$ 
originate in the subsequent work of Li et al. \cite{LHHZ25} 
on soliton solutions of the ASDYM equation.  
They use these equations along with 
the Cauchy matrix to construct the soliton solutions 
by the method of Darboux transformations.  
The following lemma, which we shall use below, 
plays a key role in their approach.  

\begin{lemma}[\cite{LHHZ25}]
If $\bsr$ and $\bss$ are a solution 
of (\ref{4:rs-xy-eq}), the Cauchy matrix $\calM$ 
satisfies the differential equations 
\beq
\begin{aligned}
  \frac{\rd\calM}{\rd x_k} &= L\frac{\rd\calM}{\rd x_{k-1}} 
    + \bss\frac{\rd\tp{\bsr}}{\rd x_{k-1}},\\
  \frac{\rd\calM}{\rd y_k} &= L\frac{\rd\calM}{\rd y_{k-1}} 
    + \bss\frac{\rd\tp{\bsr}}{\rd y_{k-1}} 
\end{aligned}
  \label{4:calM-xy-eq}
\eeq
for $k = 1,2,\ldots$. 
\end{lemma}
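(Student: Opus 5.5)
The plan is to verify (\ref{4:calM-xy-eq}) entry by entry, using the Sylvester structure of $\calM$ in differentiated form. Since $K$ and $L$ are diagonal, $\calM$ is defined entrywise by
\[
  (\kappa_j - \lambda_i)\calM_{ij} = \bss_i\tp{\bsr_j},
\]
which is just the $(i,j)$ entry of (\ref{4:Sylvester-eq}). The constants $\kappa_j,\lambda_i$ do not depend on $\bsx,\bsy$. So I will differentiate this identity by $x_k$ and by $x_{k-1}$ and subtract $\lambda_i$ times the second from the first. This gives
\[
  (\kappa_j - \lambda_i)\Bigl(\frac{\rd\calM_{ij}}{\rd x_k}
     - \lambda_i\frac{\rd\calM_{ij}}{\rd x_{k-1}}\Bigr)
  = \frac{\rd(\bss_i\tp{\bsr_j})}{\rd x_k}
     - \lambda_i\frac{\rd(\bss_i\tp{\bsr_j})}{\rd x_{k-1}} .
\]
The left-hand side is $(\kappa_j-\lambda_i)$ times the $(i,j)$ entry of $\rd\calM/\rd x_k - L\,\rd\calM/\rd x_{k-1}$.

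The key step is to expand the right-hand side with the Leibniz rule and apply the linear equations (\ref{4:rs-xy-eq}) row by row. Row-wise, these equations say that $\rd_{x_k}$ acts on a row as a fixed constant times $\rd_{x_{k-1}}$. For the cancellation to work, I need the constant to be $\lambda_i$ on the $i$-th row of $\bss$ and $\kappa_j$ on the $j$-th row of $\bsr$. With that, the terms containing $\rd_{x_{k-1}}\bss_i$ cancel, and the right-hand side becomes
\[
  \kappa_j\bss_i\frac{\rd\tp{\bsr_j}}{\rd x_{k-1}}
  - \lambda_i\bss_i\frac{\rd\tp{\bsr_j}}{\rd x_{k-1}}
  = (\kappa_j-\lambda_i)\,\bss_i\frac{\rd\tp{\bsr_j}}{\rd x_{k-1}} .
\]
Dividing by $\kappa_j - \lambda_i \neq 0$, which is allowed by the standing distinctness assumption, gives exactly the $(i,j)$ entry of $\bss\,\rd\tp{\bsr}/\rd x_{k-1}$. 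Collecting entries then yields the first equation of (\ref{4:calM-xy-eq}). The $y$-equation follows by the identical computation with $x$ replaced by $y$.

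The only step that needs care is matching the diagonal multipliers in (\ref{4:rs-xy-eq}) to the rows on which they act. The calculation above requires $\bsr$ to be shifted by $K$ and $\bss$ by $L$ in each variable in which (\ref{4:calM-xy-eq}) is asserted. The multipliers printed in (\ref{4:rs-xy-eq}) are not in this arrangement: $\bss$ is written with $K$ in $x$, and $\bsr$ with $L$ in $y$. I will therefore proceed under the reading that $\bsr$ carries $K$ and $\bss$ carries $L$. Under the printed arrangement the cancellation fails in general, since a leftover term proportional to $(\kappa_i'-\lambda_i)\,\rd_{x_{k-1}}\bss_i\,\tp{\bsr_j}$ survives, where $\kappa_i'$ denotes whichever constant is applied to $\bss_i$. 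Apart from this bookkeeping the argument is a direct one-line consequence of the Sylvester equation and needs no further input.
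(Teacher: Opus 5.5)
Your argument is correct and is essentially the paper's own proof. The paper likewise differentiates the entries $\calM_{ij}=\bss_i\tp{\bsr_j}/(\kappa_j-\lambda_i)$ in $x_k$ and $x_{k-1}$, subtracts $\lambda_i$ times the latter (i.e.\ $L\,\rd\calM/\rd x_{k-1}$), and applies the row-wise linear equations; you only clear the denominator through the Sylvester relation first, and your corrected reading of (\ref{4:rs-xy-eq}) ($\bsr$ shifted by $K$ and $\bss$ by $L$ in both $\bsx$ and $\bsy$) is what that computation needs, consistent with the paper's twistor-function remark ($\bsr_i=\bsrho_i(\kappa_i)$, $\bss_i=\bssigma_i(\lambda_i)$) and the proof of Proposition~\ref{4:prop-ASDYM-rs}.
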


\begin{remark}
By the Sylvester equation (\ref{4:Sylvester-eq}), 
one can rewrite (\ref{4:calM-xy-eq}) as 
\[
\begin{aligned}
  \frac{\rd\calM}{\rd x_k} &= \frac{\rd\calM}{\rd x_{k-1}}K  
    - \frac{\rd\bss}{\rd x_{k-1}}\tp{\bsr},\\
  \frac{\rd\calM}{\rd y_k} &= \frac{\rd\calM}{\rd y_{k-1}}K  
    - \frac{\rd\bss}{\rd y_{k-1}}\tp{\bsr}. 
\end{aligned}
\]
\end{remark}

\proof
The equations (\ref{4:rs-xy-eq}) for $\bsr,\bss$ 
can be decomposed into the equations 
\beq
\begin{aligned}
  \frac{\rd\bsr_i}{\rd x_k} 
  = \kappa_i\frac{\rd\bsr_i}{\rd x_{k-1}},&~~
  \frac{\rd\bsr_i}{\rd y_k} 
  = \lambda_i\frac{\rd\bsr_i}{\rd y_{k-1}},\\
  \frac{\rd\bss_i}{\rd x_k} 
  = \kappa_i\frac{\rd\bss_i}{\rd x_{k-1}},&~~
  \frac{\rd\bss_i}{\rd y_k} 
  = \lambda_i\frac{\rd\bss_i}{\rd y_{k-1}} 
\end{aligned}
  \label{4:rs-xy-eq2}
\eeq
for the row vectors $\bsr_i,\bss_i$ of $\bsr,\bss$. 
Differentiating $\calM$ by $x_k$ and $x_{k-1}$ yields 
\[
\begin{aligned}
  \frac{\rd\calM}{\rd x_k} 
  &= \left(\frac{1}{\kappa_j - \lambda_i}
          \frac{\rd\bss_i}{\rd x_k}\tp{\bsr_j} 
      + \bss_i\frac{\rd\tp{\bsr_j}}{\rd x_k}\right),\\
  \frac{\rd\calM}{\rd x_{k-1}} 
  &= \left(\frac{1}{\kappa_j - \lambda_i}
          \frac{\rd\bss_i}{\rd x_{k-1}}\tp{\bsr_j} 
     + \bss_i\frac{\rd\tp{\bsr_j}}{\rd x_{k-1}}\right), 
\end{aligned}
\]
hence 
\[
  L\frac{\rd\calM}{\rd x_k} 
  = \left(\frac{\lambda_i}{\kappa_j - \lambda_i}
          \frac{\rd\bss_i}{\rd x_{k-1}}\tp{\bsr_j} 
     + \lambda_i\bss_i\frac{\rd\tp{\bsr_j}}{\rd x_{k-1}}\right). 
\]
Subtracting the third equation 
from the first equation and 
using the equations (\ref{4:rs-xy-eq}), 
we have  
\[
  \frac{\rd\calM}{\rd x_k} - L\frac{\rd\calM}{\rd x_{k}} 
  = \left(\bss_i\frac{\rd\tp{\bsr_j}}{\rd x_{k-1}}\right) 
  = \bss\frac{\rd\tp{\bsr}}{\rd x_{k-1}}.
\]
This is the first equation of (\ref{4:calM-xy-eq}). 
The second equation can be derived in the same way. 
\qed

Let us proceed to the proof of 
Proposition \ref{4:prop-ASDYM-rs}. 

\proof
The $x_k$-derivative of $u_{ij}$ can be written as 
\[
\begin{aligned}
  \frac{\rd u_{ij}}{\rd x_k}
  &=  - \tp{\bsr}K^i(E+A\calM)^{-1}
       A\frac{\rd\calM}{\rd x_k} 
       (E+A\calM)^{-1}L^j\bss  \\
  &~~~~\mbox{} +\frac{\rd\tp{\bsr}}{\rd x_k}
       K^i(E+A\calM)^{-1}AL^j\bss \\
  &~~~~\mbox{} + \tp{\bsr}K^i(E+A\calM)^{-1}AL^j
       \frac{\rd\bss}{\rd x_k}. 
\end{aligned}
\]
The $x_{k-1}$-derivative of $u_{i+1,j}$ can be 
written in a similar form.  Subtracting the latter 
from the former and using (\ref{4:rs-xy-eq}) and 
(\ref{4:calM-xy-eq}), we have 
\[
  \frac{\rd u_{ij}}{\rd x_k} - \frac{\rd u_{i+1,j}}{\rd x_{k-1}}
  = \tp{\bsr}K^i(\heartsuit + \spadesuit + \clubsuit), 
\]
where 
\[
\begin{aligned}
  \heartsuit
  &= - \left((E+A\calM)^{-1}AL - K(E+A\calM)^{-1}A
       \right)\frac{\rd\calM}{\rd x_{k-1}}
       (E+A\calM)^{-1}AL^j\bss,\\
  \spadesuit
  &= - (E+A\calM)^{-1}A\bss\frac{\rd\tp{\bsr}}{\rd x_{k-1}}
       (E+A\calM)^{-1}AL^j\bss, \\
  \clubsuit
  &= \left((E+A\calM)^{-1}AL - K(E+A\calM)^{-1}A
     \right)L^j\frac{\rd\bss}{\rd x_{k-1}}. 
\end{aligned}
\]
We can rewrite 
$(E+A\calM)^{-1}AL - K(E+A\calM)^{-1}A$ as 
\[
\begin{aligned}
  &(E+A\calM)^{-1}AL - K(E+A\calM)^{-1}A \\
  &= (E+A\calM)^{-1}AL - KA(E+\calM A)^{-1}\\
  &=  (E+A\calM)^{-1}\left(AL(E+\calM A) 
        - (E+A\calM)KA\right)(E+\calM A)^{-1}\\
  &= (E+A\calM)^{-1}\left(AL -KA 
        + A(L\calM - \calM K)\right)(E+\calM A)^{-1}\\
  &= - (E+A\calM)^{-1}A\bss\tp{\bsr}A(E+\calM A)^{-1}\\
  &= - (E+A\calM)^{-1}A\bss\tp{\bsr}(E+A\calM)^{-1}A,  
\end{aligned}
\]
using the condition (\ref{4:KA=AL}) 
and the Sylvester equation (\ref{4:Sylvester-eq}) 
in the last part.  Consequently, 
\[
  \tp{\bsr}K^i(\heartsuit + \spadesuit + \clubsuit ) 
  = - \tp{\bsr}K^i(E+A\calM)^{-1}A\bss\diamondsuit 
  = - u_{i0}\diamondsuit 
\]
and 
\[
\begin{aligned}
  \diamondsuit
  &= - \tp{\bsr}(E+A\calM)^{-1}A
      \frac{\rd\calM}{\rd x_{k-1}}(E+A\calM)^{-1}AL^j\bss \\
  &~~~~\mbox{} 
    + \frac{\rd\tp{\bsr}}{\rd x_{k-1}}(E+A\calM)^{-1}AL^j\bss\\
  &~~~~\mbox{}
    + \tp{\bsr}(E+A\calM)^{-1}AL^j\frac{\rd\bss}{\rd x_{k-1}} \\
  &= \frac{\rd u_{0j}}{\rd x_{k-1}}. 
\end{aligned}
\]
Thus we obtain the first equation of 
(\ref{4:uij-xy-eq}).  The second equation 
of (\ref{4:uij-xy-eq}) can be derived 
in the same way. 
\qed

\begin{remark}
If $\bsrho_i(z)$ and $\bssigma_i(z)$ are 
$r$ dimensional vectors of twistor functions, 
their special values 
\[
  \bsr_i = \bsrho_i(\kappa_i), ~~
  \bss_i = \bssigma_i(\lambda_i) 
\]
satisfy the equations (\ref{4:rs-xy-eq2}). 
Solutions of the equations (\ref{4:rs-xy-eq}) 
can be thus obtained from twistor functions.  
In many papers \cite{LQYZ22,LQZ23,LZ25,LHHZ25}, 
the twistor functions are (implicitly) 
chosen to be exponential functions like 
the plane wave factor of the AKNS hierarchy. 
In contrast, Ohta constructed Gram-type 
determinant solutions of the ASDYM equation 
from rational twistor functions \cite{Ohta24}. 
\end{remark}

\section{Conclusion}

It seems that Fu and Nijhoff developed 
their infinite matrix formalism of 
direct linearization to avoid some unnatural 
aspects of the conventional formulation 
of the KP-type systems. As one of such issues, 
they specifically point out the use of 
pseudo-differential operators 
in a spatial coordinate $x$ \cite{FN18}. 
Certainly, choosing $t_1$ among the independent 
variables $t_1,t_2,\cdots$ as the spatial variable $x$ 
breaks symmetry of the system.  This issue 
is resolved, e.g., in Hirota's bilinear formalism.  
Sato's approach is also a solution to the same issue, 
because all $t_k$'s are treated on an equal footing  
in the dynamical system on the Grassmannian. 
Conceptually, the Grassmannian description 
is much closer to Fu and Nijhoff's 
direct linearization scheme because 
Sato's approach also employs an infinite matrix 
$\xi$ (and its dual $\eta$) to represent 
a point of the Grassmannian. 

In this paper, we have uncovered integrable 
and geometric structures hidden behind 
Fu and Nijhoff's system of evolution equations 
for the $U$-matrix.  A clue is that fact 
that the KP hierarchy, written in terms of 
the affine coordinates $w_{ij}$ 
of the Sato Grassmannian, is just a subsystem 
of Fu and Nijhoff's $U$-system.  
An extension of the $U$-system by negative flows 
contains another copy of the KP hierarchy 
describing the negative flows.  The $U$-system itself 
is substantially equivalent to the two-component 
KP hierarchy, and the underlying geometric structure 
is the two-component Sato Grassmannian. 

We have generalized this perspective 
to the AKNS and ASDYM hierarchies.  To this end, 
we have introduced a multi-component version 
of the $U$-system and imposed a constraint.  
The constrained two-component $U$-system turns out 
to give a reformulation of the AKNS hierarchy 
with both positive and negative flows. 
It will be obvious that this construction 
can be generalized to a higher rank version 
of the AKNS hierarchy.  On the other hand, 
modifying the evolution equations of $U$ 
into an exotic form, we have been able 
to derive the ASDYM hierarchy.  
This modification is motivated by the well-known 
auxiliary linear equations of the ASDYM hierarchy. 
We have also examined special solutions obtained 
by the Cauchy matrix approach in terms 
of the multi-component constrained $U$-matrix. 

Fu and Nijhoff's infinite matrix formalism 
of direct linearization deserves 
to be studied further.  An intriguing issue 
is the treatment of the BKP and CKP hierarchies. 
Fu and Nijhoff presented a formulation 
of the relevant $U$-system \cite{FN17,FN18}. 
Presumably the orthogonal/symplectic Grassmannian 
will show up there as an underlying geometric 
structure.  The orthogonal Grassmannian structure 
of the BKP hierarchy and its relatives is usually 
understood in the language of a fermion Fock space 
\cite{DJKM82,KvdL98}.  Fu and Nijhoff's 
infinite matrix formalism might provide 
another framework to accommodate 
the orthogonal/symplectic Grassmannian.

\subsection*{Acknowledgements}

This work is partly supported by the JSPS Kakenhi Grant 
JP24K06724.  The author is grateful to Masashi Hamanaka, 
Shangshuai Li, Yasuhiro Ohta and Saburo Kahei 
for valuable comments.

\appendix

\section{KP hierarchy and affine coordinates 
of Sato Grassmannian}

The dressing operator of the KP hierarchy 
\cite{SS82,Sato89} is a pseudo-differential operator 
of the form 
\[
  W = 1 + \sum_{n=1}^\infty w_n\rd_x^{-n}, 
  ~~ \rd_x = \rd/\rd t_1, 
\]
and satisfies the Sato-Wilson equations 
\beq
\begin{aligned}
  \frac{\rd W}{\rd t_k} 
  &= \left(W\rd_x^kW^{-1}\right)_{\ge 0}W - W\rd_x^k \\
  =& - \left(W\rd_x^kW^{-1}\right)_{<0}W, 
  ~~ k = 1,2,\ldots. 
\end{aligned}
  \label{A:SW-eq}
\eeq
$(~~)_{\ge 0}$ and $(~~)_{<0}$ stand for the projection 
to the nonnegative/negative power part 
of a pseudo-differential operator 
\[
  \left(\sum_{n=-\infty}^\infty a_n\rd_x^n\right)_{\ge 0} 
  = \sum_{n\ge 0}a_n\rd_x^n,~~
  \left(\sum_{n=-\infty}^\infty a_n\rd_x^n\right)_{<0} 
  = \sum_{n<0}a_n\rd_x^n. 
\]
The associated Lax operator $L$ is obtained 
from $W$ as 
\[
  L = W\rd_xW^{-1}
\]
and satisfies the Lax equations 
\[
  \frac{\rd L}{\rd t_k} = [B_k,L],~~
  B_k = \left(W\rd_x^kW^{-1}\right)_{\ge 0},~~
  k = 1,2,\ldots. 
\]

The dressing operator $W$ is used to express 
the wave function 
\[
  \Psi(z) = \left(1 + \sum_{n=1}^\infty w_nz^{-n}\right)\rho(z) 
\]
as the dressed wave function 
\[
  \Psi(z) = W\rho(z) 
\]
of the undressed wave function 
\[
  \rho(z) = \exp\left(\sum_{k=1}^\infty t_kz^k\right). 
\]
The integral powers of $\rd_x^n$, $n \in \ZZ$, 
are understood to act on $\rho(z)$ as 
\[
  \rd_x^n \rho(z) = z^n\rho(z). 
\]
This is consistent with the non-commutative 
ring structure of the formal pseudo-differential 
operators. In particular, the Lax operator $L$ 
acts on $\Psi(z)$ as 
\[
  L\Psi(z) = z\Psi(z). 
\]
As a consequence of the Sato-Wilson equation 
(\ref{A:SW-eq}),$\Psi(z)$ satisfies 
the auxiliary linear equations 
\[
  \frac{\rd\Psi(z)}{\rd t_k} = B_k\Psi(z), 
  ~~ k = 1,2,\ldots. 
\]

The affine coordinates of  the top cell 
of the Sato Grassmannian can be read 
from the higher-order dressing operators 
\cite{Takasaki89,Takasaki89RMP}   
\[
  W_i = (\rd_x^i W^{-1})_{\ge 0}W, 
  ~~ i = 1,2,\ldots. 
\]
As one can see by rewriting these operators as 
\[
  W_i = \rd_x^i - (\rd_x^i W^{-1})_{<0}W,
\]
$W_i$'s are pseudo-differential operators 
of the form 
\beq
  W_i = \rd_x^i - \sum_{j=0}^\infty w_{ij}\rd_x^{-j-1}. 
  \label{A:Wi-wij-rel}
\eeq
The coefficient $w_{ij}$ showing up here 
are exactly the affine coordinates.  
The coefficients $w_n$ of $W$ can be recovered as 
\[
  w_n = - w_{0,n-1}. 
\]

These higher-order dressing operators $W_i$ 
originate in a $\calD$-module structure 
introduced by Sato \cite{Sato89,Takasaki89}. 
Let $\calA$ denote the commutative ring 
generated by $w_n$'s and their 
$x$-derivatives, which becomes a differential 
ring with the derivative operator $\rd_x$. 
The non-commutative ring 
\[
  \calE = \calA((\rd_x^{-1})) 
\]
of pseudo differential operators 
with coefficients taken from $\calA$ 
has the direct sum decomposition, 
as a left $\calA$-module, 
\[
  \calE = \calD \oplus \calE^{(-1)}
\]
into the subring 
\[
  \calD = \calA[\rd_x]
\]
of differential operators and 
the left $A$-submodule 
\[
  \calE^{(-1)} = \calA[[\rd_x^{-1}]]\rd_x^{-1}
\]
of negative-order pseudo-differential operators 
(aka Volterra operators).  The foregoing notations 
$(~~)_{\ge 0}$ and $(~~)_{<0}$ are the projection maps 
of this direct sum decomposition.  
The left $\calD$-submodule (or ideal) 
\[
  \calI = DW 
\]
of $\calE$ is spanned by $W_i$'s as 
\[
  \calI = \bigoplus_{i=0}^\infty\calA W_i. 
\]
Just like $\calD$ itself, $\calI$ is also 
a direct summand of $\calE$ in the sense that 
\beq
  \calE = \calI \oplus \calE^{(-1)}. 
  \label{A:I-cond}
\eeq
As one can see from (\ref{A:Wi-wij-rel}), 
this amounts to writing the non-negative powers 
of $\rd_x$ as 
\[
  \rd_x^i = W_i + \sum_{j=0}^\infty w_{ij}\rd_x^{-j-1},
  ~~ i \ge 0.  
\]
Conversely, one can redefine $W_i$ 
as the $\calI$-part of $\rd_x^i$ 
in the direct sum decomposition (\ref{A:I-cond}). 

The fact that $w_{ij}$'s are affine coordinates 
of the Sato Grassmannian can be seen from 
the evolution equations satisfied 
by these coefficients \cite{Takasaki89,Takasaki89RMP}. 
The higher-order dressing operators $W_i$ 
turn out to satisfy the algebraic relations 
\beq
  \rd_x W_i = W_{i+1} - w_{i0}W_0 
  \label{A:rd-Wi-eq}
\eeq
and the evolution equations 
\beq
  \frac{\rd W_i}{\rd t_k} 
  = W_{i+k} - \sum_{l=0}^{k-1}w_{il}W_{k-l-1} 
    - W_i\rd_x^k, 
  ~~ k = 1,2,\ldots. 
  \label{A:Wi-tk-eq}
\eeq
(\ref{A:rd-Wi-eq}) is a consequence of 
the fact that $W_i$'s span the left 
$\calD$-module $\calI$.  One can rewrite 
(\ref{A:rd-Wi-eq}) as 
\[
  \frac{\rd W_i}{\rd x} = W_{i+1} - w_{i0}W_0 - W_i\rd_x, 
\]
which is a special ($k = 1$) case 
of (\ref{A:Wi-tk-eq}).  (\ref{A:Wi-tk-eq}) 
is a consequence of the Sato-Wilson equations 
(\ref{A:SW-eq}). These operator equations 
imply the evolution equations 
\[
  \frac{\rd w_{ij}}{\rd t_k} 
  = w_{i+k,j} - w_{i,j+k} - \sum_{l=0}^{k-1}w_{il}w_{k-l-1,j}, 
  ~~ i,j \ge 0, 
\]
for the coefficients $w_{ij}$.  Those equations 
can be cast into the evolution equations 
(\ref{2:xi-tk-eq}) and (\ref{2:eta-tk-eq}) 
of the matrices $\xi,\eta$ representing 
a moving point of the top cell of the Sato Grassmannian. 

Let us add that the operator equations 
(\ref{A:rd-Wi-eq}) and (\ref{A:Wi-tk-eq}) 
yield the auxiliary linear equations 
\[
  \frac{\rd\Psi_i(z)}{\rd t_k} 
  = \Psi_{i+k}(z) - \sum_{l=0}^{k-1}w_{il}\Psi_{k-l-1}(z), 
  ~~ k = 1,2,\ldots, 
\]
for the higher-order wave functions 
\[
  \Psi_i(z) = W_i\rho(z). 
\]
These equations amount 
to the linear equations (\ref{2:ui(z)-tk-eq}) 
of Fu and Nijhoff's wave functions $u_i(z)$ 
\cite{FN17,FN18,FuThesis}.

\section{2D Toda hierarchy and Sato Grassmannian}

The Grassmannian structure of the 2D Toda hierarchy 
\cite{Takasaki90LMP}, too, is described 
by dressing operators.  In the 2D Toda hierarchy 
\cite{UT84,Takasaki18}, there are two dressing operators 
of the form 
\[
  \calW = 1 + \sum_{n=1}^\infty w_n(s)e^{-n\rd_s},~~
  \overline{\calW} = \sum_{n=0}^\infty\bar{w}_n(s)e^{n\rd_s},~~
  ~~\rd_s = \rd/\rd s. 
\]
They are, so to speak, 
\textit{pseudo-difference operators} 
on the 1D lattice $\ZZ$ with coordinate $s$. 
There is a one-to-one correspondence 
between a pseudo-difference operator 
and a $\ZZ\times\ZZ$ matrix as 
\[
  \sum_{n=-\infty}^\infty a_n(s)e^{n\rd_s} 
  ~\longleftrightarrow~ 
  \sum_{n=-\infty}^\infty\diag(a(i))_{i\in \ZZ}\Lambda^n.
\]
In the following, we mostly use the matrix 
representation.  By this correspondence, 
$\calW$ and $\overline{\calW}$ can be identified 
with the $\ZZ\times\ZZ$ matrices 
\[
  \calW = \left(w_{j-i}(i)\right)_{i,j\in\ZZ},~~
  \overline{\calW} = \left(\bar{w}_{i-j}(i)\right)_{i,j\in\ZZ}, 
\]
where $w_n(i)$ and $\bar{w}_n(i)$ for $n < 0$ 
are understood to be equal to $0$. 
$\calW$ is lower-triangular, the diagonal 
elements being equal to $1$, and $\overline{\calW}$ 
is upper-triangular.  It is also assumed 
that $\bar{w}_0(i) \not= 0$, so that 
both $\calW$ and $\overline{\calW}$ are invertible. 

These dressing operators, viewed as 
$\ZZ\times\ZZ$-matrices, satisfy 
the Sato-Wilson equations 
\beq
\begin{aligned}
  \frac{\rd\calW}{\rd t_k} 
  &= \calB_k\calW - \calW\Lambda^k,&
  \frac{\rd\overline{\calW}}{\rd t_k} 
  &= \calB_k\overline{\calW},\\
  \frac{\rd\calW}{\rd\bar{t}_k}
  &= \overline{\calB}_k\calW,&
  \frac{\rd\overline{\calW}}{\rd\bar{t}_k} 
  &= \overline{\calB}_k\overline{\calW} 
   - \overline{\calW}\Lambda^{-k} 
\end{aligned}
  \label{B:SW-eq}
\eeq
for $k = 1,2,\ldots$.  
$\calB_k$ and $\bar{\calB}_k$ are matrices 
of the form 
\[
  \calB_k 
    = \sum_{l=0}^k\diag(b_{kl}(i))_{i\in\ZZ}\Lambda^i,~~
  \overline{\calB}_k 
    = \sum_{l=1}^k\diag(\bar{b}_{kl}(i))_{i\in\ZZ}\Lambda^{-i}  
\]
defined as 
\[
  \calB_k 
  = \left(\calW\Lambda^k\calW^{-1}\right)_{\ge 0},~~
  \overline{\calB}_k 
  = \left(\overline{\calW}\Lambda^{-k}
     \overline{\calW}^{-1}\right)_{<0}. 
\]
$(~~)_{\ge 0}$ and $(~~)_{<0}$ stand for the projection 
to the upper/lower-triangular part 
\[
  \left(\sum_{n=-\infty}^\infty a_n\Lambda^n\right)_{\ge 0} 
  = \sum_{n \ge 0}a_n\Lambda^n,~~
  \left(\sum_{n=-\infty}^\infty a_n\Lambda^n\right)_{<0} 
  = \sum_{n<0}a_n\Lambda^n. 
\]

A prototype of the $\eta$-matrix for this case 
is the $\ZZ\times(\ZZ\sqcup\ZZ)$ matrix 
\[
  \tilde{\eta}^{(2)}
  = \left(\begin{array}{c|c}
       \calW & \overline{\calW}
  \end{array}\right)
  = \left(\begin{array}{c|c}
    w_{i-j}(i) ~(i,j\in\ZZ) & \bar{w}_{i-j}~(i,j\in\ZZ)
    \end{array}\right). 
\]
The Sato-Wilson equations (\ref{B:SW-eq}) 
for $\calW,\overline{\calW}$ can be thereby 
cast into the matrix equations 
\beq
\begin{aligned}
  \frac{\rd\tilde{\eta}^{(2)}}{\rd t_k} 
  &= \calB_k\tilde{\eta}^{(2)} 
    - \tilde{\eta}^{(2)}\diag(\Lambda^k,0),\\
  \frac{\rd\tilde{\eta}^{(2)}}{\rd\bar{t}_k} 
  &= \overline{\calB}_k\tilde{\eta}^{(2)} 
    - \tilde{\eta}^{(2)}\diag(0,\Lambda^{-k}) 
\end{aligned}
  \label{B:etatil-tk-eq}
\eeq
for $k = 1,2,\ldots$.  

Given any value $s \in \ZZ$, 
we divide $\tilde{\eta}^{(2)}$ into eight blocks as 
\[
  \tilde{\eta}^{(2)} 
  = \left(\begin{array}{c|c|c|c}
    * & 0 & * & * \\\hline
    * & * & 0 & * 
    \end{array}\right).  
\]
The row index $i$ ranges over $i < s$ 
in the upper blocks and $i \ge s$ 
in the lower blocks, and the column index $j$ 
over $j < s$, $j \ge s$, $j < s$ and $j \ge s$, 
respectively, in the four blocks arranged 
from left to right. 
The third block in the upper half 
of $\tilde{\eta}^{(2)}$ is an upper-triangular matrix 
of the form 
\[
  h_1(s) = 
  \begin{pmatrix}
  \ddots & \ddots & \ddots & \vdots\\
  \cdots & 0  & \bar{w}_0(s-2) & \bar{w}_1(s-2)\\
  \cdots & \cdots & 0 & \bar{w}_0(s-1) 
  \end{pmatrix}, 
\]
and the second block in the lower half 
of $\tilde{\eta}^{(2)}$is a lower-triangular matrix 
of the form 
\[
  h_2(s) = 
  \begin{pmatrix}
  1 & 0 & \cdots & \cdots \\
  w_1(s+1) & 1  & 0 & \cdots\\
  w_2(s+2) & w_1(s+2) & 1 & \ddots \\
  \vdots & \ddots & \ddots &\ddots 
  \end{pmatrix}. 
\]
Using these matrices as building blocks, 
we construct the $\ZZ\times\ZZ$ matrix 
\[
  h(s) = \left(\begin{array}{c|c}
      h_1(s) & 0 \\\hline
      0 & h_2(s)
      \end{array}\right). 
\]
Since $h_1(s)$ and $h_2(s)$ are 
invertible triangular matrices, 
we can use $h(s)^{-1}$ to transform 
$\tilde{\eta}^{(2)}$ into 
\[
  \eta^{(2)}(s) = h(s)^{-1}\tilde{\eta}^{(2)},
\]
which becomes a matrix of the form 
\beq
  \eta(s)^{(2)} 
  = \left(\begin{array}{c|c|c|c}
    * & 0 & E & * \\\hline
    * & E & 0 & * 
    \end{array}\right).  
  \label{B:eta2(s)-blocks}
\eeq
This matrix represents a point of the two-component 
Sato Grassmannian 
\[
  \Gr^{(2)} \simeq \GL(\ZZ)\backslash\Fr(\ZZ,\ZZ\sqcup\ZZ). 
\]
Note that this identification is understood 
to depend on $s$ via the set partition 
of $\ZZ$ into $\ZZ_{<s}$ and $\ZZ_{\ge s}$.  
The matrix elements of the blocks other than 
$0$ and $E$ in $\eta(s)^{(2)}$ may be thought of 
as affine coordinates of a particular open cell 
of $\Gr^{(2)}$.  The evolution equations 
(\ref{B:etatil-tk-eq}) turn into 
the evolution equations 
\[
\begin{aligned}
  \frac{\rd\eta(s)^{(2)}}{\rd t_k}
  &= \calC_k(s)\eta(s)^{(2)} 
     - \eta(s)^{(2)}\diag(\Lambda^k,0),\\
  \frac{\rd\eta(s)^{(2)}}{\rd\bar{t}_k} 
  &= \overline{\calC}_k(s)\eta(s)^{(2)} 
     - \eta(s)^{(2)}\diag(0,\Lambda^{-k})
\end{aligned}
\]
for $\eta(s)^{(2)}$.  As always the case 
for equations of this type, the coefficient matrices 
$\calC_k(s),\overline{\calC}_k(s)$ are uniquely 
determined by the evolution equations themselves. 

The $\eta$-matrices $\eta(s)^{(2)}$ enable us 
to translate the 2D Toda hierarchy to the language 
of pseudo-differential operators as well. 
This is a place where a $\calD$-module structure 
shows up \cite{Takasaki90LMP}.

\section{ASDYM hierarchy and auxiliary linear equations}

Let $\bsx = (x_0,x_1,\ldots)$ and 
$\bsy = (y_0,y_1,\ldots)$ be the space-time 
variables of the ASDYM hierarchy
\cite{Nakamura88,Takasaki90CMP,ACT93} 
and $\rd_{x_k}$ and $\rd_{y_k}$, $k = 1,2,\ldots$, 
the derivatives operators 
\[
  \rd_{x_k} = \rd/\rd x_k,~~
  \rd_{y_k} = \rd/\rd y_k. 
\]
In the Lax formalism, the hierarchy comprises 
the zero-curvature equations 
\beq
\begin{aligned}
  [\rd_{x_j} - z\rd_{x_{j-1}} + A_j,\;
   \rd_{x_k} - z\rd_{x_{k-1}} + A_k] &= 0, \\
  [\rd_{y_j} - z\rd_{y_{j-1}} + B_j,\;
   \rd_{y_k} - z\rd_{y_{k-1}} + B_k] &= 0, \\
  [\rd_{x_j} - z\rd_{x_{j-1}} + A_j,\;
   \rd_{y_k} - z\rd_{y_{k-1}} + B_k] &= 0 
\end{aligned}
  \label{C:ZC-eq}
\eeq
for $j,k = 1,2,\ldots$.  $A_k$'s and $B_k$'s 
are $r \times r$ matrix-valued gauge potentials. 
Expanded in powers of $z$, these equations 
give an infinite tower of differential equations 
for these gauge potentials. 

Introducing a matrix-valued potential, 
one can convert those equations into equations 
for that potential.  In fact, there are 
two choices of such a potential.  
Note that the first and second equations 
of (\ref{C:ZC-eq}) split into the equations 
\[
  [\rd_{x_j} + A_j,\; \rd_{x_k} + A_k] = 0, ~~
  [\rd_{y_j} + B_j,\; \rd_{y_k} + B_k] = 0
\]
and 
\[
  \rd_{x_{j-1}}A_k - \rd_{x_{k-1}}A_j = 0, ~~
  \rd_{y_{j-1}}B_k - \rd_{y_{k-1}}B_j = 0. 
\]
They ensure the existence of $r \times r$ 
matrix-valued potentials $J,K$ such that 
\beq
  A_k = - \rd_{x_k}J\cdot J^{-1} = \rd_{x_{k-1}}K,~~
  B_k = - \rd_{y_k}J\cdot J^{-1} = \rd_{y_{k-1}}K. 
  \label{C:AB-JK-rel}
\eeq
On the other hand, the third equation of 
(\ref{C:ZC-eq}) give 
\[
  [\rd_{x_j} + A_j,\;\rd_{y_k} + B_k] = 0 
\]
and 
\[
  \rd_{x_{j-1}}B_k - \rd_{y_{j-1}}A_k = 0. 
\]
Substituting the expression (\ref{C:AB-JK-rel}) 
into $A_k$'s and $B_k$'s in these equations, 
one obtains the differential equations 
\beq
  \rd_{x_{j-1}}(\rd_{y_k}J\cdot J^{-1}) 
  - \rd_{y_{j-1}}(\rd_{x_k}J\cdot J^{-1}) = 0 
  \label{C:Yang-eq}
\eeq
for $J$ and 
\beq
  \rd_{x_j}\rd_{y_{k-1}}K - \rd_{y_k}\rd_{x_{j-1}}K 
  + [\rd_{x_{j-1}}K,\rd_{y_{k-1}}K] = 0 
  \label{C:NCZ-eq}
\eeq
for $K$.  These equations are generalizations 
of the Yang equation \cite{Yang77} and 
the Leznov-Mukhtarov-Parkes-Chalmers-Siegel equation 
\cite{LM87,Parkes92,CZ96} in 4D space-time. 
In particular, by letting $j = k = 1$ and defining 
the 4D (possibly complexified) 
space-time coordinates $x,y,\tilde{x},\tilde{y}$ as 
\[
  x = x_0,~~ y = y_0,~~ 
  \tilde{x} = y_1,~~ \tilde{y} = x_1, 
\]
(\ref{C:NCZ-eq}) and (\ref{C:Yang-eq}), 
respectively, become the Yang equation 
\[
  \rd_x\left(\rd_{\tilde{x}}J\cdot J^{-1}\right) 
  - \rd_y\left(\rd_{\tilde y}J\cdot J^{-1}\right) = 0 
\]
and the Leznov-Muktarov-Parkes-Chalmers-Siegel equation 
\[
  \rd_{\tilde{y}}\rd_yK - \rd_{\tilde{x}}\rd_xK 
  + [\rd_xK,\rd_yK] = 0.  
\]

The zero curvature equations (\ref{C:ZC-eq}) 
are associated with the auxiliary linear equations 
\beq
  (\rd_{x_k} - z\rd_{x_{k-1}} + A_k)\Psi(z) = 0,~~
  (\rd_{y_k} - z\rd_{y_{k-1}} + B_k)\Psi(z) = 0. 
  \label{C:Psi-xy-eq}
\eeq
One can consider two solutions $W(z),\overline{W}(z)$ 
with Laurent expansion of different types: 
\[
  W(z) = E_r + \sum_{n=1}^\infty w_nz^{-n},~~
  \overline{W}(z) = \sum_{n=0}^\infty\bar{w}_nz^n. 
\]
The auxiliary linear equations (\ref{C:Psi-xy-eq}) 
turn into differential equations 
\beq
\begin{aligned}
  \rd_{x_k}w_n - \rd_{x_{k-1}}w_{n+1} + A_kw_n &= 0,\\
  \rd_{y_k}w_n - \rd_{x_{k-1}}w_{n+1} + B_kw_n &= 0
\end{aligned}
  \label{C:wn-xy-eq}
\eeq
for $w_n$'s and 
\[
\begin{aligned}
  \rd_{x_k}\bar{w}_n - \rd_{x_{k-1}}\bar{w}_{n-1} 
    + A_k\bar{w}_n &= 0,\\
  \rd_{y_k}\bar{w}_n - \rd_{x_{k-1}}\bar{w}_{n-1} 
    + B_k\bar{w}_n &= 0
\end{aligned}
\]
for $\bar{w}_n$'s.  In particular, we have 
\[
  - \rd_{x_{k-1}}w_1 + A_k = 0, ~~
  - \rd_{y_{k-1}}w_1 + B_k = 0 
\]
and 
\[
  \rd_{x_k}\bar{w}_0 + A_k\bar{w}_0 = 0, ~~
  \rd_{y_k}\bar{w}_0 + B_k\bar{w}_0 = 0, 
\]
hence 
\[
\begin{aligned}
  A_k &= - \rd_{x_k}\bar{w}_0\cdot\bar{w}_0^{-1} 
      = \rd_{x_{k-1}}w_1,\\
  B_k &= - \rd_{y_k}\bar{w}_0\cdot\bar{w}_0^{-1} 
      = \rd_{y_{k-1}}w_1. 
\end{aligned}
\]
In view of (\ref{C:AB-JK-rel}), 
this means that $w_1$ and $\bar{w}_0$ can be 
identified with $K$ and $J$, respectively. 

Substituting these expressions of $A_k,B_k$ 
into the equations (\ref{C:wn-xy-eq})
yields the evolution equations 
\beq
\begin{aligned}
  \frac{\rd w_n}{\rd x_k} - \frac{\rd w_{n+1}}{\rd x_{k-1}} 
    + \frac{\rd w_1}{\rd x_{k-1}}w_n &= 0,\\
  \frac{\rd w_n}{\rd y_k} - \frac{\rd w_{n+1}}{\rd y_{k-1}} 
    + \frac{\rd w_1}{\rd y_{k-1}}w_n &= 0
\end{aligned}
  \label{C:wn-xy-eq2}
\eeq
for $w_n$'s. In terms of $W(z)$, these equations 
can be expressed as 
\[
\begin{aligned}
  \frac{\rd W(z)}{\rd x_k} - z\frac{\rd W(z)}{\rd x_{k-1}} 
    + \frac{\rd w_1}{\rd x_{k-1}}W(z) &= 0,\\
  \frac{\rd W(z)}{\rd y_k} - z\frac{\rd W(z)}{\rd y_{k-1}} 
    + \frac{\rd w_1}{\rd y_{k-1}}W(z) &= 0, 
\end{aligned}
\]
which look like the Sato-Wilson equations 
\cite{Takasaki90CMP}. The evolution equations 
(\ref{C:wn-xy-eq2}) can be further converted 
to the evolution equations 
\beq
\begin{aligned}
  \frac{\rd w_{ij}}{\rd x_k} 
  &= \frac{\rd w_{i+1,j}}{\rd x_{k-1}} 
     - w_{i0}\frac{\rd w_{0j}}{\rd x_{k-1}},\\
  \frac{\rd w_{ij}}{\rd y_k} 
  &= \frac{\rd w_{i+1,j}}{\rd y_{k-1}} 
     - w_{i0}\frac{\rd w_{0j}}{\rd y_{k-1}}
\end{aligned}
  \label{C:wij-xy-eq}
\eeq
for the coefficients $w_{ij}$ of 
the two-variate generating function 
\[
  G(z,y) 
  = \sum_{i,j=0}^\infty w_{ij}y^{-i-1}z^{-j-1} 
  = \frac{W(y)^{-1}W(z) - E_r}{z - y}. 
\]
The evolution equations (\ref{C:wij-xy-eq}) 
can be cast into a matrix form which shows 
that these equations are related to 
the $r$-component Sato Grassmannian and 
that $w_{ij}$'s are affine coordinates 
of the top cell therein \cite{Takasaki84CMP}.

\end{document}